\title{Robust Gradient Descent Estimation for Tensor Models under Heavy-Tailed Distributions}
\author{Xiaoyu Zhang\footnote{School of Mathematical Sciences,
Tongji University}, 
Di Wang\footnote{School of Mathematical Sciences,
Shanghai Jiao Tong University}, 
Guodong Li\footnote{Department of Statistics and Actuarial Science,
University of Hong Kong}, 
and Defeng Sun\footnote{Department of Applied Mathematics,
Hong Kong Polytechnic University}, }
\let\counterwithin\relax
\newcommand*{\addFileDependency}[1]{
  \typeout{(#1)}
  \@addtofilelist{#1}
  \IfFileExists{#1}{}{\typeout{No file #1.}}
}
\newcommand*{\myexternaldocument}[1]{%
    \externaldocument{#1}%
    \addFileDependency{#1.tex}%
    \addFileDependency{#1.aux}%
}
\newtheorem{assumption}{Assumption}
\newtheorem{definition}{Definition}
\newtheorem{lemma}{Lemma}
\newtheorem{proposition}{Proposition}
\newtheorem{theorem}{Theorem}
\newtheorem{remark}{Remark}
\newtheorem{example}{Example}
\theoremstyle{definition}
\DeclareMathOperator*{\argmin}{arg\,min}
\newcommand{\bm}{\mathbf}
\newcommand{\bbm}{\boldsymbol}
\newcommand{\cm}[1]{\mbox{\boldmath$\mathscr{#1}$}}
\begin{document}

\setlength{\parindent}{16pt}

\maketitle

\vspace{-1cm}
\begin{abstract}

Low-rank tensor models are widely used in statistics. However, most existing methods rely heavily on the assumption that data follows a sub-Gaussian distribution. To address the challenges associated with heavy-tailed distributions encountered in real-world applications, we propose a novel robust estimation procedure based on truncated gradient descent for general low-rank tensor models. We establish the computational convergence of the proposed method and derive optimal statistical rates under heavy-tailed distributional settings of both covariates and noise for various low-rank models. Notably, the statistical error rates are governed by a local moment condition, which captures the distributional properties of tensor variables projected onto certain low-dimensional local regions. Furthermore, we present numerical results to demonstrate the effectiveness of our method.

\end{abstract}

\textit{Keywords}: Gradient descent, heavy-tailed distribution, nonconvex optimization, robustness, tensor decomposition

\newpage

\setlength\abovedisplayskip{2pt}
\setlength\belowdisplayskip{2pt}

\section{Introduction}

\subsection{Low-Rank Tensor Modeling}

Low-rank tensor models have emerged as powerful tools for analyzing multiway data, which consist of observations with interactions across multiple modes or dimensions. Such data arise in a wide range of applications, including time series collected across multiple sensors, medical imaging data, and user–item interactions in recommendation systems. By leveraging low-dimensional structures, tensor methods enable dimension reduction, improve interpretability, and enhance computational scalability. These advantages have led to the growing use of tensor models in fields such as biomedical imaging \citep{Zhou13}, time series forecasting \citep{chen2022factor}, and collaborative filtering \citep{tarzanagh2022regularized}.

Despite significant progress in both convex and nonconvex optimization for tensor estimation, a major limitation remains. Most existing methods rely on strong distributional assumptions, such as sub-Gaussianity or boundedness of the noise or covariates. These assumptions are crucial for ensuring theoretical guarantees, including convergence rates and risk bounds, and they also contribute to the stability of optimization algorithms \citep{ZX18,raskutti2019convex,han2022optimal}. However, heavy-tailed distributions are common in many real-world applications. For example, biomedical signals such as electroencephalography (EEG) and functional magnetic resonance imaging (fMRI) data often exhibit skewness and outliers. Financial time series can contain extreme events and heavy-tailed noise. Sensor data collected in Internet of Things (IoT) applications or climate monitoring systems are frequently corrupted or contaminated. As a result, methods that assume light-tailed noise and/or Gaussian covariates may produce biased, unstable, or unreliable estimates when applied to such data.

The growing interest in robust estimation methods for high-dimensional low-rank matrix and tensor models underscores the pressing need for solutions that can handle heavy-tailed data. In terms of methodology to achieve robustness, the existing works can be broadly classified into two approaches: \textit{loss robustification} and \textit{data robustification}. The seminal Huber regression method \citep{huber1964robust, sun2020adaptive} exemplifies the first approach, where the standard least squares loss is replaced with a robust variant. For instance, \citet{tan2023sparse} applied the adaptive Huber regression with regularizations to sparse reduced-rank regression in the presence of heavy-tailed noise. \citet{shen2022computationally} employed the least absolute deviation (LAD) and Huber loss functions for low-rank matrix and tensor trace regression. While these loss-robustification methods provide robust control over residuals, they focus solely on the residuals' deviations and do not address the heavy-tailedness of the covariates. Moreover, robust loss functions like LAD and Huber loss cannot be easily generalized to more complex tensor models beyond linear trace regression.

Alternatively, \citet{fan2021shrinkage} proposed a robust low-rank matrix estimation procedure via data robustification. This method applies appropriate shrinkage to the data, constructs robust moment estimators from the shrunk data, and ultimately derives a robust estimate for the low-rank parameter matrix. The primary objective of data robustification is to mitigate the influence of samples with large deviations, thereby producing a robust estimate. However, when applied to low-rank matrix and tensor models, the data robustification procedure has limitations. Specifically, it overlooks the inherent structure of the model and fails to exploit the low-rank decomposition. As shown in Section \ref{sec:4}, not all information in the data contributes effectively to estimating the tensor decomposition. Consequently, the data robustification approach may be suboptimal for low-rank tensor estimation.

In this article, we propose a computationally scalable and theoretically grounded framework for robust tensor estimation. Our approach addresses the challenges of heavy-tailed data by introducing \textit{gradient robustification}. Instead of modifying the loss function or preprocessing the data, we stabilize the gradient updates themselves. Specifically, we develop a robust gradient descent algorithm that uses entrywise gradient truncation to reduce the influence of outliers or heavy-tailed noise. Rather than computing the full sample-mean gradient, which is sensitive to extreme values, we truncate gradient entries that exceed a carefully chosen threshold. This ensures that each gradient update is driven by reliable and representative components of the signal. Our method is model-agnostic and applies to a wide range of tensor estimation tasks, including tensor linear regression, logistic regression, and principal component analysis (PCA). Importantly, our approach does not require sub-Gaussian assumptions. Instead, we operate under mild local moment conditions that constrain the tail behavior of the data in low-dimensional subspaces defined by the Tucker decomposition. This localization leads to sharper and more adaptive statistical guarantees and allows us to handle both heavy-tailed covariates and noise.

We summarize our main contributions as follows.\vspace{-0.2cm}
\begin{itemize}
\item[1.] We develop a general and computationally scalable robust gradient descent framework for low-rank tensor estimation, applicable to a wide range of tensor learning tasks, including tensor linear regression, tensor logistic regression, and tensor PCA.\vspace{-0.3cm}
\item[2.] We establish that the method achieves optimal statistical error rates under \textit{the most relaxed moment assumptions}, specifically finite $(1+\epsilon)$-th and $(2+2\lambda)$-th moments for noise and covariates, respectively, without requiring sub-Gaussianity.\vspace{-0.3cm}
\item[3.] We introduce the concept of local moment conditions, a novel technical tool that characterizes the distributional properties of tensor components along low-rank directions and leads to sharper statistical guarantees.
\end{itemize}

The remainder of this article is organized as follows. Section \ref{sec:2} introduces the robust gradient descent algorithm and provide the computational convergence analysis. In Section \ref{sec:3}, we apply the method to tensor linear regression, logistic regression, and PCA, and we establish theoretical guarantees under local moment conditions. We present simulation experiments to validate our theoretical findings in Section \ref{sec:4} and provide a real-data application in Section \ref{sec:5}. We conclude with a discussion of extensions and future directions in Section \ref{sec:6}. Technical proofs, implementation details, discussions, and additional numerical results are provided in the supplementary materials.

\subsection{Related Literature}

This article is related to a large body of literature on nonconvex methods for low-rank matrix and tensor estimation. The gradient descent algorithm and its variants have been extensively studied for low-rank matrix models \citep{chen2015fast, tu2016low, wang2017unified, ma2018implicit} and low-rank tensor models \citep{xu2017efficient, chen2019non, han2022optimal, tong2022accelerating, tong2022scaling}. For simplicity, we focus on the robust alternatives to the standard gradient descent, although the proposed technique can be extended to other gradient-based methods. Robust gradient methods have also been explored for low-dimensional statistical models in convex optimization \citep{prasad2020robust}. Our work differs from the existing work as we consider the general low-rank tensor estimation framework under the heavy-tailed distribution setting.

Robust estimation against heavy-tailed distributions is another emerging topic in high-dimensional statistics. Various robust $M$-estimators have been proposed for mean estimation \citep{catoni2012challenging, bubeck2013bandits, devroye2016sub} and high-dimensional linear regression \citep{fan2017estimation, loh2017statistical, sun2020adaptive, wang2020tuning}. More recently, robust methods for low-rank matrix and tensor estimation have been developed in \citet{fan2021shrinkage}, \citet{tan2023sparse}, \citet{wang2023rate}, \citet{shen2022computationally}, \citet{shen2023quantile}, and \citet{barigozzi2023robust}. Compared to these existing methods, our proposed approach can achieve the same or even better convergence rates under \textit{the most relaxed distribution assumptions} on both covariates and noise, as highlighted in Table \ref{table:moment_comparison}.

\begin{table}[H]
    \caption{Comparison of robust estimation methods in covariate moment or distribution requirements and noise moment requirements ($0<\lambda,\epsilon\leq 1$)}\vspace{0.2cm}
    \label{table:moment_comparison}
    \begin{tabular}{m{1.7cm}|m{1.2cm}m{7cm}m{2.2cm}m{2.2cm}}
        \toprule
        Method & Param. Shape & \makecell{Model} & Covariate mom./dist. & Noise mom. \\
        \midrule
        \multirow{4}{1.7cm}{\makecell{Adaptive \\ Huber \\regression}}  & Vector &\makecell{ High-dim. linear regression\\ \citep{sun2020adaptive}} & 4th & $(1+\epsilon)$-th  \\
         & Matrix & \makecell{High-dim. multi-response regression \\ \citep{tan2023sparse}} & Bounded & $(1+\epsilon)$-th \\
        & Tensor & \makecell{High-dim. tensor trace regression \\ \citep{shen2022computationally}} & Gaussian & $(1+\epsilon)$-th\\ 
        & Tensor & \makecell{Tensor PCA \citep{shen2023quantile} \\ \citep{barigozzi2023robust} }  & - & 2nd \\
        \midrule
        \multirow{3}{1.7cm}{\makecell{~\\ Data \\ shrinkage}} & Matrix & \makecell{High-dim. matrix trace regression \\ \citep{fan2021shrinkage}} & 4th & 2nd \\
        & Vector & \makecell{High-dim. logistic regression \\ \citep{zhu2021taming}} & 4th & - \\
        & Matrix & \makecell{High-dim. vector autoregression \\ \citep{wang2023rate}} & $(2+2\lambda)$-th & $(2+2\lambda)$-th \\
        \midrule
        \multirow{2}{1.7cm}{\makecell{Robust \\ gradient \\ descent}} & Vector & \makecell{Low-dim. linear model and generalized \\ linear model \citep{prasad2020robust}} & 4th  & 2nd \\ 
        & Tensor & \makecell{High-dim. tensor linear model and \\generalized linear model ({\color{blue}our proposal})} & \boldmath\textbf{$(2+2\lambda)$-th} & \boldmath\textbf{$(1+\epsilon)$-th} \\
        \bottomrule
    \end{tabular}
\end{table}

\subsection{Notation}\label{sec:1.3}

Throughout this article, we denote vectors by boldface small letters (e.g. $\bm{x}$), matrices by boldface capital letters (e.g. $\bm{X}$), and tensors by boldface Euler letters (e.g. $\cm{X}$), respectively. We introduce the tensor notations and operations used in the article, and their formal definitions and properties are relegated to Appendix \ref{append:tensor_algebra_and_notations} of supplementary materials. For generic $\cm{X}\in\mathbb{R}^{p_1\times\cdots\times p_d}$, $\cm{Y}\in\mathbb{R}^{p_1\times\cdots\times p_{d_0}}$ with $d_0\leq d$, and $\bm{Y}_k\in\mathbb{R}^{q_k\times p_k}$ for $k=1,\dots,d$, the mode-$k$ matricization of $\cm{X}$ is denoted as $\cm{X}_{(k)}$; the generalized inner product of $\cm{X}$ and $\cm{Y}$ is denoted as $\langle\cm{X},\cm{Y}\rangle$; the mode-$k$ multiplication of $\cm{X}$ and $\bm{Y}$ is denoted as $\cm{X}\times_k\bm{Y}_k$. For any $\cm{X}$ and $\cm{Y}$, their tensor outer product is denoted as $\cm{X}\circ\cm{Y}$.

We use $C$ to denote a generic positive constant. For any two sequences $x_k$ and $y_k$, we write $x_k\gtrsim y_k$ if there exists a constant $C>0$ such that $x_k\geq Cy_k$ for all $k$. Additionally, we write $x_k\asymp y_k$ if $x_k\gtrsim y_k$ and $y_k\gtrsim x_k$. For a generic matrix $\bm{X}$, we let $\bm{X}^\top$, $\|\bm{X}\|_\text{F}$, $\|\bm{X}\|$, $\text{vec}(\bm{X})$, and $\sigma_j(\bm{X})$ denote its transpose, Frobenius norm, operator norm, vectorization, and the $j$-th largest singular value, respectively. For any real symmetric matrix $\bm{X}$, let $\lambda_{\min}(\bm{X})$ and $\lambda_{\max}(\bm{X})$ denote its minimum and maximum eigenvalues.

\section{Methodology}\label{sec:2}

\subsection{Gradient Descent with Robust Gradient Estimates}

We consider a general framework for low-rank tensor estimation, where the loss function $\overline{\mathcal{L}}(\cm{A};z_i)$ depends on a $d$-th order parameter tensor $\cm{A}$ and an observed data point $z_i$. Suppose the parameter tensor admits a Tucker low-rank decomposition \citep{kolda2009tensor}
\begin{equation}\label{eq:Tucker}
    \cm{A}=\cm{S}\times_1\bm{U}_1\times_2\bm{U}_2\cdots\times_d\bm{U}_d=\cm{S}\times_{j=1}^d\bm{U}_j,
\end{equation}
where $\cm{S}\in\mathbb{R}^{r_1\times r_2\times\dots\times r_d}$ is the core tensor and each $\bm{U}_j\in\mathbb{R}^{p_j\times r_j}$ is the factor matrix. Throughout the article, we assume that the order $d$ is fixed and the ranks $(r_1,r_2,\cdots,r_d)$ are known. For brevity, we denote the tuple of components as $\bm{F}=(\cm{S},\bm{U}_1,\dots,\bm{U}_d)$ and define the loss function with respect to $\bm{F}$ as 
\begin{equation}
    \mathcal{L}(\bm{F};z_i) = \overline{\mathcal{L}}(\cm{S}\times_{j=1}^d\bm{U}_j;z_i)\quad\text{and}\quad\mathcal{L}_n(\bm{F};\mathcal{D}_n) = \frac{1}{n}\sum_{i=1}^n\mathcal{L}(\bm{F};z_i).
\end{equation}

A standard estimation method is to minimize the following regularized loss function
\begin{equation}\label{eq:RegularizedEmpricalRisk}
    \mathcal{L}_n(\bm{F};\mathcal{D}_n)+\frac{a}{2}\sum_{j=1}^d\|\bm{U}_j^\top\bm{U}_j-b^2\bm{I}_{r_j}\|_\text{F}^2,
\end{equation}
where $a,b>0$ are tuning parameters.
The regularization terms help prevent rank deficiency and ensures balanced scaling across factor matrices \citep{han2022optimal}. This optimization problem is typically solved via gradient descent. Under suitable initialization, the estimation error depends on the intrinsic low-rank structure and the data distribution.

However, when the data $z_i$ are heavy-tailed, the standard gradient descent approach may suffer from suboptimal performance. This is because the partial gradients of the loss,
\begin{equation}
    \nabla_{\bm{U}_k} \mathcal{L}_n(\bm{F}; \mathcal{D}_n) = \frac{1}{n} \sum_{i=1}^n \nabla_{\bm{U}_k} \mathcal{L}(\bm{F}; z_i)\quad\text{and}\quad \nabla_{\scalebox{0.7}{\cm{S}}} \mathcal{L}_n(\bm{F}; \mathcal{D}_n) = \frac{1}{n} \sum_{i=1}^n \nabla_{\scalebox{0.7}{\cm{S}}} \mathcal{L}(\bm{F}; z_i),
\end{equation}
are sample means and thus sensitive to extreme values. To improve robustness, we propose replacing them with robust gradient estimates, denoted by $\bm{G}_k(\bm{F})$ and $\cm{G}_0(\bm{F})$, which are designed to maintain stability under heavy-tailed distributions.

Our robust gradient descent algorithm is presented in Algorithm \ref{alg:1}. At each iteration, the standard partial gradients are replaced with their robust alternatives, and the regularization term is retained to ensure numerical stability of the factor matrices. The algorithm is applicable to a broad class of tensor estimation problems, and the robustness of the procedure depends crucially on the quality of the gradient estimates $\bm{G}_k(\bm{F})$ and $\cm{G}_0(\bm{F})$.

\begin{algorithm}[!htp]
    \setstretch{1.2}
    \caption{Robust gradient descent algorithm}
    \label{alg:1}
    \begin{flushleft}
    \textbf{input}: $\bm{F}^{(0)}$, $a,b>0$, step size $\eta>0$, and number of iterations $T$\\
    \textbf{for} $t=0$ to $T-1$\\
    \hspace*{1cm}\textbf{for} $k=1$ to $d$\\
    \hspace*{2cm}$\bm{U}_k^{(t+1)} \gets \bm{U}_k^{(t)} - \eta\cdot\bm{G}_k(\bm{F}^{(t)}) - \eta a\bm{U}_k^{(t)}(\bm{U}_k^{(t)\top}\bm{U}_k^{(t)}-b^2\bm{I}_{r_k})$\\
    \hspace*{1cm}\textbf{end for}\\
    \hspace*{1cm}$\cm{S}^{(t+1)} \gets \cm{S}^{(t)}-\eta\cdot\cm{G}_0(\bm{F}^{(t)})$\\
    \textbf{end for}\\
    \textbf{return} $\cm{\widehat{A}}=\cm{S}^{(T)}\times_1\bm{U}_1^{(T)}\dots\times_d\bm{U}_d^{(T)}$\vspace*{-0.45cm}
    \end{flushleft}
\end{algorithm}

\subsection{Local Convergence Analysis}

While Algorithm \ref{alg:1} employs robust gradient estimates in place of standard gradients, these robust gradients are not necessarily derived from an explicit robust loss function. Consequently, the algorithm does not correspond to minimizing a well-defined objective function in the traditional sense, which complicates the analysis of its convergence properties. To establish theoretical guarantees, we introduce a set of conditions that link the behavior of the robust gradients to the underlying optimization landscape.

We begin by imposing a condition on the expected gradient of the original loss function $\overline{\mathcal{L}}$. This condition ensures that, for low-rank tensors, the gradient provides meaningful descent directions toward the population optimum.

\begin{definition}[Restricted correlated gradient]\label{def:1}
    The loss function $\overline{\mathcal{L}}$ satisfies the restricted correlated gradient (RCG) condition: for any $\cm{A}$ such that $\textup{rank}(\cm{A}_{(k)})\leq r_k$, $1\leq k\leq d$,
    \begin{equation}
        \langle\mathbb{E}[\nabla\overline{\mathcal{L}}(\cm{A};z_i)],\cm{A}-\cm{A}^*\rangle\geq\frac{\alpha}{2}\|\cm{A}-\cm{A}^*\|_\textup{F}^2 + \frac{1}{2\beta}\|\mathbb{E}\nabla\overline{\mathcal{L}}(\cm{A};z)\|_\textup{F}^2,
    \end{equation}
    where the RCG parameters $\alpha$ and $\beta$ satisfy $0<\alpha\leq\beta$.
\end{definition}

This condition implies that the expected gradient is well-aligned with the direction of improvement $\cm{A}-\cm{A}^*$, and that the gradient norm carries meaningful curvature information. It generalizes the notion of restricted strong convexity/smoothness to the setting of low-rank tensor estimation, but is stated directly in terms of the gradient rather than the loss itself, which is crucial when the loss may not have finite moments under heavy-tailed distribution.

\begin{remark}
    In settings where the risk $\mathbb{E}[\overline{\mathcal{L}}(\cm{A};z_i)]$ has finite second moments, the RCG condition is closely related to (and often implied by) standard notions of restricted strong convexity and smoothness. However, for heavy-tailed or non-sub-Gaussian data, it is more natural to impose such conditions directly on the gradient.
\end{remark}

Next, we impose conditions on the robust gradient estimator $\bm{G}_k(\bm{F})$ and $\cm{G}_0(\bm{F})$ in Algorithm \ref{alg:1}. These conditions ensure that the robust gradients remain close to the population gradients, even in the presence of outliers or heavy tails.

\begin{definition}[Stability of robust gradients]\label{def:2}
    For the given $\bm{F}$, the robust gradient functions are stable if there exist positive constants $\phi$ and $\xi_k$, for $0\leq k\leq d$, such that
    \begin{equation}
        \begin{split}
            \|\bm{G}_k(\bm{F})-\mathbb{E}[\nabla_{\bm{U}_k}\mathcal{L}(\bm{F};z_i)]\|_\textup{F}^2 & \leq \phi \|\cm{S}\times_{j=1}^d\bm{U}_j-\cm{A}^*\|_\textup{F}^2 + \xi_k^2,\\
            \text{and }\|\cm{G}_0(\bm{F})-\mathbb{E}[\nabla_{\scalebox{0.7}{\cm{S}}}\mathcal{L}(\bm{F};z_i)]\|_\textup{F}^2 & \leq \phi \|\cm{S}\times_{j=1}^d\bm{U}_j-\cm{A}^*\|_\textup{F}^2 + \xi_0^2.
        \end{split}
    \end{equation}
\end{definition}
These bounds control how much the robust gradient deviates from the true population gradient. The term $\phi\|\cm{A}-\cm{A}^*\|_\text{F}^2$ captures error that grows with the optimization error, while $\xi_k^2$ represents the inherent estimation error of the robust gradient estimator. The universal constant $\phi$ governs the sensitivity of all gradient components, and $\xi_k$'s reflect component-specific accuracy.

For the ground truth $\cm{A}^*$, denote its largest and smallest singular values across all directions by $\bar{\sigma}=\max_{1\leq k\leq d}\|\cm{A}^*_{(k)}\|$ and $\underline{\sigma}=\min_{1\leq k\leq d}\sigma_{r_k}(\cm{A}^*_{(k)})$. The condition number of $\cm{A}^*$ is then given by $\kappa=\bar{\sigma}/\underline{\sigma}$. Due to the inhenrent rotational invariance in Tucker decompositions, we measure estimation error in a component-wise rotation-invariant fashion. For an estimate $\bm{F}=(\cm{S}, \bm{U}_1, \dots, \bm{U}_d)$, define the error
\begin{equation}\label{eq:est_error}
    \text{Err}(\bm{F})=\min_{\bm{O}_k\in\mathbb{O}^{r_k},1\leq k\leq d}\left\{\|\cm{S}-\cm{S}^*\times_{j=1}^d\bm{O}_k^\top\|_\text{F}^2+\sum_{k=1}^d\|\bm{U}_k-\bm{U}_k^*\bm{O}_k\|_\text{F}^2\right\},
\end{equation}
where the true decomposition satisfies $\|\bm{U}_k^*\|=b$ and the orthogonal matrices $\bm{O}_k$'s account for the unidentification of the Tucker decomposition. For the $t$-th iteration of Algorithm \ref{alg:1}, where $t=0,1,\dots,T$, denote the estimated parameters as $\bm{F}^{(t)}$ and $\cm{A}^{(t)}=\cm{S}^{(t)}\times_{j=1}^d\bm{U}_j^{(t)}$. The corresponding estimation error is then given by $\text{Err}(\bm{F}^{(t)})$.

We are now ready to state the local convergence guarantee for Algorithm \ref{alg:1}, under the RCG and gradient stability conditions.

\begin{theorem}\label{thm:1}
    Suppose that the loss function $\overline{\mathcal{L}}$ satisfies the RCG condition with parameters $\alpha$ and $\beta$ as in Definition \ref{def:1}, and that the robust gradient functions at each step $t$ satisfy the stability condition with parameters $\phi$ and $\xi_k$ as in Definition \ref{def:2}, for all $k=0,1,\dots,d$ and $t=1,2,\dots,T$. If the initial estimation error satisfies
    $\textup{Err}(\bm{F}^{(0)})\lesssim \alpha\beta^{-1}\bar{\sigma}^{2/(d+1)}\kappa^{-2}$, $\phi\lesssim\alpha^2\kappa^{-4}\bar{\sigma}^{2d/(d+1)}$, $a\asymp\alpha\kappa^{-2}\bar{\sigma}^{(2d-2)/(d+1)}$, $b\asymp\bar{\sigma}^{1/(d+1)}$, and $\eta\asymp\alpha\beta^{-1}\kappa^2$, then for $t=1,2,\dots,T$,
    \begin{equation}
        \textup{Err}(\bm{F}^{(t)}) \leq ~ (1-C\alpha\beta^{-1}\kappa^{-2})^t\cdot\textup{Err}(\bm{F}^{(0)}) + C\alpha^{-2}\bar{\sigma}^{-4d/(d+1)}\kappa^4\sum_{k=0}^d\xi_k^2,
    \end{equation}
    and
    \begin{equation}
        \|\cm{A}^{(t)}-\cm{A}^*\|_\textup{F}^2 \lesssim ~ \kappa^2(1-C\alpha\beta^{-1}\kappa^{-2})^t\cdot\|\cm{A}^{(0)}-\cm{A}^*\|_\textup{F}^2 + \bar{\sigma}^{-2d/(d+1)}\alpha^{-2}\kappa^4\sum_{k=0}^d\xi_k^2.
    \end{equation}
\end{theorem}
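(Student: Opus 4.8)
The plan is to prove a one-step contraction for $\textup{Err}(\cdot)$ and then unroll it. Fix iteration $t$. Using the rotational invariance of the Tucker parametrization, pick orthogonal $\bm{O}_k^{(t)}$ attaining the minimum defining $\textup{Err}(\bm{F}^{(t)})$; after absorbing them into $\bm{F}^*$ we may assume $\bm{F}^{(t)}$ is already aligned with $\bm{F}^*$, and set $\bm{\Delta}_k=\bm{U}_k^{(t)}-\bm{U}_k^*$ and $\cm{H}=\cm{S}^{(t)}-\cm{S}^*$. Since $\textup{Err}(\bm{F}^{(t+1)})$ is itself a minimum over rotations, it is bounded above by the value obtained by reusing the rotations $\bm{O}_k^{(t)}$, so it suffices to expand $\|\cm{S}^{(t+1)}-\cm{S}^*\|_\textup{F}^2+\sum_{k=1}^d\|\bm{U}_k^{(t+1)}-\bm{U}_k^*\|_\textup{F}^2$. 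Substituting the updates of Algorithm \ref{alg:1} produces $\textup{Err}(\bm{F}^{(t)})$ together with a linear gradient term $-2\eta\big[\sum_k\langle\bm{G}_k(\bm{F}^{(t)}),\bm{\Delta}_k\rangle+\langle\cm{G}_0(\bm{F}^{(t)}),\cm{H}\rangle\big]$, a linear regularization term $-2\eta a\sum_k\langle\bm{U}_k^{(t)}(\bm{U}_k^{(t)\top}\bm{U}_k^{(t)}-b^2\bm{I}_{r_k}),\bm{\Delta}_k\rangle$, and quadratic ($\eta^2$-order) terms.

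The linear gradient term is the core of the argument. Write $\bm{G}_k(\bm{F}^{(t)})=\mathbb{E}[\nabla_{\bm{U}_k}\mathcal{L}(\bm{F}^{(t)};z_i)]+\bm{E}_k$ with $\|\bm{E}_k\|_\textup{F}^2\le\phi\|\cm{A}^{(t)}-\cm{A}^*\|_\textup{F}^2+\xi_k^2$, and $\cm{G}_0(\bm{F}^{(t)})=\mathbb{E}[\nabla_{\cm{S}}\mathcal{L}(\bm{F}^{(t)};z_i)]+\cm{E}_0$ with the analogous bound, by Definition \ref{def:2}. For the population parts, the chain rule on $\mathcal{L}(\bm{F};z)=\overline{\mathcal{L}}(\cm{S}\times_{j=1}^d\bm{U}_j;z)$ gives $\sum_k\langle\mathbb{E}[\nabla_{\bm{U}_k}\mathcal{L}],\bm{\Delta}_k\rangle+\langle\mathbb{E}[\nabla_{\cm{S}}\mathcal{L}],\cm{H}\rangle=\langle\mathbb{E}[\nabla\overline{\mathcal{L}}(\cm{A}^{(t)};z_i)],\mathrm{D}\cm{A}[\cm{H},\bm{\Delta}_1,\dots,\bm{\Delta}_d]\rangle$, where $\mathrm{D}\cm{A}$ is the linearization of $\bm{F}\mapsto\cm{S}\times_{j=1}^d\bm{U}_j$. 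One then writes $\mathrm{D}\cm{A}[\cm{H},\bm{\Delta}_\bullet]=(\cm{A}^{(t)}-\cm{A}^*)+\cm{R}$, where $\cm{R}$ collects the terms carrying at least two error factors, so $\|\cm{R}\|_\textup{F}$ is quadratic in the factor error. Applying the RCG condition of Definition \ref{def:1} at $\cm{A}^{(t)}$ (legitimate because $\mathrm{rank}(\cm{A}^{(t)}_{(k)})\le r_k$) bounds $\langle\mathbb{E}[\nabla\overline{\mathcal{L}}(\cm{A}^{(t)})],\cm{A}^{(t)}-\cm{A}^*\rangle$ below by $\tfrac{\alpha}{2}\|\cm{A}^{(t)}-\cm{A}^*\|_\textup{F}^2+\tfrac{1}{2\beta}\|\mathbb{E}\nabla\overline{\mathcal{L}}(\cm{A}^{(t)})\|_\textup{F}^2$; the cross term $\langle\mathbb{E}[\nabla\overline{\mathcal{L}}(\cm{A}^{(t)})],\cm{R}\rangle$ is absorbed by Young's inequality into a fraction of the gradient-norm budget plus a higher-order remainder. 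The regularization term is handled by the standard computation showing that, since the factored gradient is blind to scaling imbalance among the factors, the regularizer supplies the missing restoring force of order $ab^2$ on $\sum_k\|\bm{U}_k^{(t)\top}\bm{U}_k^{(t)}-b^2\bm{I}_{r_k}\|_\textup{F}^2$. Finally the $\eta^2$-order terms are dominated using $\|\bm{G}_k\|_\textup{F}^2\lesssim\|\mathbb{E}\nabla_{\bm{U}_k}\mathcal{L}\|_\textup{F}^2+\|\bm{E}_k\|_\textup{F}^2$ and a bound of $\|\mathbb{E}\nabla_{\bm{U}_k}\mathcal{L}\|_\textup{F}$ by $\|\mathbb{E}\nabla\overline{\mathcal{L}}(\cm{A}^{(t)})\|_\textup{F}$ times the factor norms (which stay $\asymp b$ along the trajectory); with $\eta$ at the prescribed scale these are absorbed by the negative $-\tfrac{\eta}{\beta}\|\mathbb{E}\nabla\overline{\mathcal{L}}(\cm{A}^{(t)})\|_\textup{F}^2$ contribution.

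Collecting everything yields, schematically, $\textup{Err}(\bm{F}^{(t+1)})\le\textup{Err}(\bm{F}^{(t)})-c\eta\big(\alpha\|\cm{A}^{(t)}-\cm{A}^*\|_\textup{F}^2+ab^2\textstyle\sum_k\|\bm{U}_k^{(t)\top}\bm{U}_k^{(t)}-b^2\bm{I}_{r_k}\|_\textup{F}^2\big)+c'\eta\rho^{-1}\big(\phi\|\cm{A}^{(t)}-\cm{A}^*\|_\textup{F}^2+\textstyle\sum_k\xi_k^2\big)+(\text{higher order})$, where $\rho$ is the Young parameter used on the noise cross terms $-2\eta\big(\sum_k\langle\bm{E}_k,\bm{\Delta}_k\rangle+\langle\cm{E}_0,\cm{H}\rangle\big)$, taken proportional to a fixed fraction of the curvature coefficient. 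The key remaining ingredient is a local identifiability lemma — the converse of the elementary estimate $\|\cm{A}-\cm{A}^*\|_\textup{F}^2\lesssim\bar\sigma^{2d/(d+1)}\textup{Err}(\bm{F})$ — asserting that, within the prescribed basin, $\alpha\|\cm{A}^{(t)}-\cm{A}^*\|_\textup{F}^2+ab^2\sum_k\|\bm{U}_k^{(t)\top}\bm{U}_k^{(t)}-b^2\bm{I}_{r_k}\|_\textup{F}^2\gtrsim\mu\,\textup{Err}(\bm{F}^{(t)})$ for a curvature constant $\mu$, the choices $a\asymp\alpha\kappa^{-2}\bar\sigma^{(2d-2)/(d+1)}$ and $b\asymp\bar\sigma^{1/(d+1)}$ being exactly those that balance the full-tensor part (controlled through $\underline\sigma$) against the imbalance part (controlled through $ab^2$). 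The condition $\phi\lesssim\alpha^2\kappa^{-4}\bar\sigma^{2d/(d+1)}$ renders the $\phi$-proportional term a fraction of the curvature, so it is absorbed as well; what survives is the recursion $\textup{Err}(\bm{F}^{(t+1)})\le(1-C\alpha\beta^{-1}\kappa^{-2})\textup{Err}(\bm{F}^{(t)})+C\alpha^{-2}\bar\sigma^{-4d/(d+1)}\kappa^4\sum_k\xi_k^2$, valid provided its right-hand side is still in the basin. An induction on $t$ using $\textup{Err}(\bm{F}^{(0)})\lesssim\alpha\beta^{-1}\bar\sigma^{2/(d+1)}\kappa^{-2}$ — and the fact that the error floor is of strictly smaller order throughout the basin — closes the argument; unrolling gives the first displayed inequality, and the second follows by applying $\|\cm{A}^{(t)}-\cm{A}^*\|_\textup{F}^2\lesssim\bar\sigma^{2d/(d+1)}\textup{Err}(\bm{F}^{(t)})$ to both terms, with the reverse comparison $\textup{Err}(\bm{F}^{(0)})\lesssim\underline\sigma^{-2d/(d+1)}\|\cm{A}^{(0)}-\cm{A}^*\|_\textup{F}^2$ used on the transient term (available once the initial imbalance is small). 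I expect the main obstacle to be precisely this two-sided comparison between $\|\cm{A}-\cm{A}^*\|_\textup{F}$ and $\textup{Err}(\bm{F})$, together with the bound on $\|\cm{R}\|_\textup{F}$ — getting every power of $\bar\sigma$ and $\kappa$ right — and then threading these estimates through the induction so that each absorption step is quantitatively consistent with the stated ranges of $a$, $b$, $\eta$, and $\phi$.
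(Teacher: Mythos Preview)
Your proposal is correct and follows essentially the same route as the paper's proof: reuse the step-$t$ rotations to upper-bound $\textup{Err}(\bm{F}^{(t+1)})$, expand the updates, pull the population partial gradients back to $\langle\mathbb{E}\nabla\overline{\mathcal{L}}(\cm{A}^{(t)}),\,\cdot\,\rangle$ via the multilinear chain rule, apply RCG, bound the higher-order remainder $\cm{R}$ (the paper's Lemma~\ref{lemma:perturb}), use the regularizer to control the imbalance, and invoke the two-sided comparison between $\textup{Err}$ and $\|\cm{A}-\cm{A}^*\|_\textup{F}^2$ (the paper's Lemma~\ref{lemma:1}) before unrolling the recursion. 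The only cosmetic difference is that the paper isolates the noise via a $(1+\zeta)/(1+\zeta^{-1})$ splitting of the squared norms rather than Young's inequality on the cross terms $\langle\bm{E}_k,\bm{\Delta}_k\rangle$, and it carries out the regularization identity $\langle\bm{\Delta}_k,\bm{U}_k(\bm{U}_k^\top\bm{U}_k-b^2\bm{I})\rangle\ge\tfrac14\|\bm{U}_k^\top\bm{U}_k-b^2\bm{I}\|_\textup{F}^2-\tfrac14\textup{Err}^{(t)}\|\bm{\Delta}_k\|_\textup{F}^2$ explicitly; both choices lead to the same contraction.
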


Theorem \ref{thm:1} estalibshes the linear convergence of the robust gradient descent iterates provided the robust gradients are well-behaved and the initialization is sufficiently accurate. In each upper bounds provided, the first term corresponds to optimization error that decays exponentially with the number of iterations, reflecting the improvement in the solution as the gradient descent progresses. The second term, on the other hand, captures the statistical error, which depends on the accuracy of the robust gradient estimators. Notably, the iterates does not necessarily converge to a fixed estimate, but maybe to a region of estimates with equivalent statistical properties. Thus, fast convergence relies on both a good initialization and high-quality robust gradient estimates. We provide a general strategy for robust gradient construction on Section \ref{sec:2.3}, and discuss model-specific initialization methods in Section \ref{sec:3}.

\subsection{Robust Gradient Estimation via Entrywise Truncation}\label{sec:2.3}

The robust gradient estimates in Algorithm \ref{alg:1} play a central role in ensuring stability under heavy-tailed distribution. In this subsection, we propose a concrete and general-purpose method for constructing such robust gradients, based on entrywise truncation, a simple yet powerful technique that provides robustness by controlling the influence of extreme values.

Recall that the standard partial gradients are essentially sample means of gradients across observations. As is well known, sample means are highly sensitive to outliers, especially when the data exhibit heavy tails. This sensitivity directly translates to instability in the gradient estimates used for optimization. To mitigate this issue, we adopt a strategy inspired by robust mean estimation: rather than using the raw gradient components, we replace them with truncated versions that bound the influence of extreme values. This approach is computationally simple and leverages techniques that have been shown to achieve near-optimal statistical performance under weak moment conditions \citep{fan2021shrinkage}.

Let $\bm{M}\in\mathbb{R}^{p\times q}$, and let $\tau>0$ be a user-specified truncation threshold. We define the entrywise truncation operator $\text{T}(\cdot,\cdot):\mathbb{R}^{p\times q}\times\mathbb{R}^+\to\mathbb{R}^{p\times q}$ as
\begin{equation}
    \text{T}(\bm{M},\tau)_{j,k}=\text{sgn}(\bm{M}_{j,k})\min(|\bm{M}_{j,k}|,\tau),~~\text{for }j=1,\dots,p,~~k=1,\dots,q,
\end{equation}
where $\text{sgn}(\cdot)$ denotes the sign function. This operator truncates each entry of $\bm{M}$ to have magnitude no greater than $\tau$, while preserving its sign. The same operation extends naturally to tensors. The truncation parameter $\tau$ plays a critical role in balancing the trade-off between truncation bias and robustness. A smaller $\tau$ increases robustness by suppressing outliers more aggressively, but may introduce bias if set too conservatively; a larger $\tau$ retains more of the original gradient signal, but risks amplifying the influence of anomalous observations. This trade-off is carefully balanced in our theoretical analysis, where the statistical error depends explicitly on the accuracy of the truncated gradient estimates.

Using the entrywise truncation operator, we construct robust estimators for the partial gradients of the loss function. The robust gradient estimators with respect to $\bm{U}_k$ and \cm{S} are 
\begin{equation}
    \begin{split}
        \bm{G}_k(\bm{F};\tau) = & \frac{1}{n}\sum_{i=1}^n\text{T}(\nabla_{\bm{U}_k}\mathcal{L}(\bm{F};z_i),\tau)
        = \frac{1}{n}\sum_{i=1}^n\text{T}(\nabla\overline{\mathcal{L}}(\cm{S}\times_{j=1}^d\bm{U}_j;z_i)_{(k)}(\otimes_{j\neq k}\bm{U}_j)\cm{S}_{(k)}^\top,\tau),\\
        \cm{G}_0(\bm{F};\tau) & = \frac{1}{n}\sum_{i=1}^n\text{T}(\nabla_{\scalebox{0.7}{\cm{S}}}\mathcal{L}(\bm{F};z_i),\tau)
        = \frac{1}{n}\sum_{i=1}^n\text{T}(\nabla\overline{\mathcal{L}}(\cm{S}\times_{j=1}^d\bm{U}_j;z_i)\times_{j=1}^d\bm{U}_j^\top,\tau).
    \end{split}
\end{equation}
Note that the truncation-based robust gradient estimator is generally applicable to a wide range of tensor models. In Sections \ref{sec:4} and \ref{sec:5}, we will show both theoretically and numerically that the entrywise truncation using a single parameter $\tau$ can achieve optimal estimation performance under various distributional assumptions.

\section{Applications to Tensor Models}\label{sec:3}

In this section, we apply the proposed robust gradient descent algorithm, equipped with entrywise truncated gradient estimators, to three fundamental tensor models: tensor linear regression, tensor logistic regression, and tensor PCA. These models cover a broad range of appliations, including multi-response regression, binary classification with tensor covariates, and unsupervised tensor signal extraction. In each setting, we assume that either the covariates, the noise, or both exhibit heavy-tailed behavior, motivating the need for robust estimation methods that go beyond sub-Gaussian assumptions.

For all models, we let $\bar{p}=\max_{1\leq j\leq d}p_j$ denote the maximum dimension across tensor modes, and define the effective dimension of the Tucker decomposition as
\begin{equation}
    d_{\text{eff}} = \sum_{k=1}^dp_kr_k + \prod_{k=1}^d r_k,
\end{equation}
which corresponds to the total number of free parameters in the low-rank tensor representation. Our theoretical analysis is based on a novel local moment condition, which will be introduced in next subsection, and serves as the foundation for our theoretical guarantees across all three models.

\subsection{Local Moments for Partial Gradients}\label{sec:3.1}

In the tensor models considered in this article, the partial gradients depend on both the tensor factors and the observed data. A key structural insight is that when the estimated factor matrices $\bm{U}_k$ are close to their ground truth counterparts $\bm{U}_k^*$, these gradients depend primarily on low-dimensional projections of the data onto the subspaces spanned by the factors. This motivates the use of \textit{local moment conditions} that capture the tail behavior of these projected components, rather than imposing global moment assumptions on the full data distribution. Such localized conditions are particularly useful in high-dimensional settings, where global moment constraints may be overly restrictive or unrealistic.

For a given sample $z_i$ and fixed estimates $\bm{F}=(\cm{S},\bm{U}_1,\dots,\bm{U}_d)$, the partial gradients with respect to the factor matrices and core tensors are given by
\begin{equation}
    \nabla_{\bm{U}_k}\mathcal{L}(\bm{F};z_i) = (\nabla\overline{\mathcal{L}}(\cm{A};z_i)\times_{j\neq k}\bm{U}_j^\top)_{(k)}\cm{S}_{(k)}^\top~~\text{and}~~\nabla_{\scalebox{0.7}{\cm{S}}}\mathcal{L}(\bm{F};z_i) = \nabla\overline{\mathcal{L}}(\cm{A};z_i)\times_{j=1}^d\bm{U}_j^\top,
\end{equation}
where $\cm{A}=\cm{S}\times_{j=1}^d\bm{U}_j$ is the reconstructed tensor, and $\nabla\overline{\mathcal{L}}(\cm{A};z_i)$ is the gradient of the underlying loss with respect to the full tensor. Observe that these partial gradients are obtained via multilinear projections of the full gradient onto the subspaces defined by the factor matrices $\bm{U}_j$. Consequently, their statistical behavior is governed not by the global distribution of the data, but by the distribution of these projected components, specifically, in neighborhoods of the true factor directions.

To formalize this, we introduce local moment conditions that characterize the tail behavior of the projected data and gradients in the vicinity of the true factor subspaces. For a given ground truth $\bm{U}_j^*\in\mathbb{R}^{p_j\times r_j}$ and a small radius $\delta\in[0,1]$, we define the set of unit vectors that lie within an angular distance of approximately $\arcsin(\delta)$ from the column space of $\bm{U}_j^*$ as 
\begin{equation}
    \mathcal{V}(\bm{U}_j^*,\delta)=\{\bm{v}\in\mathbb{R}^{p_j}:\|\bm{v}\|_2=1~\text{and}~\sin\arccos(\|\mathcal{P}_{\bm{U}^*_j}\bm{v}\|_2)\leq\delta\},
\end{equation}
where $\mathcal{P}_{\bm{U}^*_j} = \bm{U}_j^* (\bm{U}_j^{*\top} \bm{U}_j^*)^\dagger \bm{U}_j^{*\top}$ is the orthogonal projector onto the column space of $\bm{U}_j^*$, and $\dagger$ denotes the Moore--Penrose pseudo-inverse. The parameter $\delta$ controls the maximum allowable angular deviation of a unit vector $\bm{v}$ from the subspace spanned by $\bm{U}_j^*$. 

Equipped with these sets, we define two types of local moments that quantify the tail behavior of tensor-valued quantities. For a random tensor $\cm{T}\in\mathbb{R}^{p_1\times\cdots\times p_d}$ and fixed ground truth factors $\{\bm{U}_j^*\}_{j=1}^d$, moment order $\eta>0$, and radius $\delta\in[0,1]$, we define:
\begin{definition}[Local moments]
    The $\eta$-th all-mode local moment of $\cm{T}$ with radius $\delta$ is 
    \begin{equation}
        \textup{LM}_0(\cm{T};\eta,\delta,\{\bm{U}_j^*\}_{j=1}^d) = \sup_{\bm{v}_j\in\mathcal{V}(\bm{U}_j^*,\delta)}\mathbb{E}\left[|\cm{T}\times_{j=1}^d\bm{v}_j^\top|^{\eta}\right].
    \end{equation}
    Also, for $1\leq k\leq d$, its $\eta$-th mode-$k$-excluded local moment with radius $\delta$ is defined as
    \begin{equation}
        \textup{LM}_k(\cm{T};\eta,\delta,\{\bm{U}_j^*\}_{j=1}^d) = \sup_{\bm{v}_j\in\mathcal{V}(\bm{U}_j^*,\delta),~1\leq l\leq p_k}\mathbb{E}\left[|\cm{T}\times_{j=1,j\neq k}^d\bm{v}_j^\top\times_k\bm{c}_l^\top|^{\eta}\right],
    \end{equation}
    where $\bm{c}_l$ is the coordinate vector whose $j$-th entry is one and others zero.
\end{definition}

The all-mode local moment $\text{LM}_0$ captures the distributional behavior of the full projected gradient tensor $\nabla\overline{\mathcal{L}}(\cm{A};z_i)\times_{j=1}^d\bm{U}_j^\top$, which involves contributions from all factor modes. The mode-$k$-excluded local moment $\text{LM}_k$, on the other hand, focuses on the projection of the gradient onto all modes except the $k$-th, and is thus tailored to the estimation of the partial gradients with respect to $\bm{U}_k$. These definitions generalize traditional moment conditions by localizing them to the low-dimensional subspaces relevant to the underlying tensor structure.

When $\delta=1$, the sets $\mathcal{V}(\bm{U}_j^*,1)$ encompass the entire unit sphere, and the local moments reduce to their global counterparts:
\begin{equation}
    \text{LM}_0(\cm{T}; \eta, 1, \{\bm{U}_j^*\}_{j=1}^d) = \sup_{\|\bm{v}_j\|_2 = 1} \mathbb{E}\left[ \big| \cm{T} \times_{j=1}^d \bm{v}_j^\top \big|^\eta \right],
\end{equation}
with a similar reduction for $\text{LM}_k$. Thus, local moments provide a natural generalization of global moment assumptions, allowing for substantially weaker conditions in scenarios where the data exhibit heavy tails or high dimensionality, provided that the relevant projections behave benignly. To illustrate this advantage, consider the following example.

\begin{example}
    Let $\cm{X}\in\mathbb{R}^{p\times p\times p}$, where $\textup{vec}(\cm{X})\sim N(\bm{0}_{p^3},\bm{\Sigma}_{0.5})$ and $\bm{\Sigma}_{0.5}=0.5\textup{diag}(\bm{1}_{p^3})+0.5\bm{1}_{p^3}\bm{1}_{p^3}^\top$. Suppose that the ground truths are $\bm{U}_k^*=(1,\bm{0}_{p-1}^\top)^\top$ for $k=1,2,3$. Then, the global second moment of $\cm{X}$ is $\textup{LM}(\cm{X};2,1,\{\bm{U}_j^*\}_{j=1}^3) = (p^3+1)/2$, which grows with dimension. However, when restricted to directions $\bm{v}$ within an angular radius $\delta_1\leq p^{-3/2}$ (for $\textup{LM}_0$) and $\delta_2\leq p^{-1}$ (for each $\textup{LM}_k$), the corresponding local second moments are bounded by 2.
\end{example}
This example underscores the key advantage of our local moment framework: by focusing on the low-dimensional subspaces aligned with the true factors, we can work under much weaker moment assumptions than would be required if the full data distribution were considered. This property is essential for establishing robust and statistically optimal estimation under heavy-tailed distributions, as will be formalized in the subsequent theoretical analysis. More discussions of the local moment conditions are provided in Appendix \ref{append:local_moments} of supplementary materials.

\subsection{Heavy-Tailed Tensor Linear Regression}\label{sec:3.2}

We begin with tensor linear regression, a natural extension of classical linear models to tensor-valued predictors or responses. Given $0\leq d_0\leq d$, consider the model
\begin{equation}\label{eq:linearregression}
    \cm{Y}_i = \langle \cm{A}^*,\cm{X}_i \rangle + \cm{E}_i,\quad i=1,2,\dots,n,
\end{equation}
where $\cm{X}_i\in\mathbb{R}^{p_1\times\cdots\times p_{d_0}}$ is the $d_0$-th order tensor covariate, $\cm{Y}_i\in\mathbb{R}^{p_{d_0+1}\times\cdots\times p_d}$ is the $(d-d_0)$-th order tensor response, $\cm{E}_i\in\mathbb{R}^{p_{d_0+1}\times\cdots\times p_d}$ is the noise tensor with $\mathbb{E}[\cm{E}_i|\cm{X}_i]=\bm{0}$, $\cm{A}^*\in\mathbb{R}^{p_1\times \cdots\times p_d}$ is the coefficient tensor with Tucker ranks $(r_1,\dots,r_d)$. The goal is to estimate the coefficient tensor $\cm{A}^*$ from noisy observations, even when both the covariates and noise are heavy-tailed.

We adopt the least squares loss function for each observation
\begin{equation}
    \mathcal{L}(\bm{F}; z_i) = \frac{1}{2} \|\cm{Y}_i - \langle \cm{S} \times_{j=1}^d \bm{U}_j, \cm{X}_i \rangle\|_\text{F}^2,
\end{equation}
where $\bm{F}=(\cm{S},\bm{U}_1,\dots,\bm{U}_d)$. A key insight is that the partial gradients depend not on the high-dimensional raw data $\cm{X}_i$ and $\cm{Y}_i$, but on their low-dimensional projections induced by the factor matrices. Specifically, define the transformed variables $\overline{\cm{X}}_i = \cm{X}_i \times_{j=1}^{d_0} \bm{U}_j^\top$, $\overline{\cm{Y}}_i = \cm{Y}_i \times_{j=1}^{d - d_0} \bm{U}_{d_0 + j}^\top$, $\overline{\cm{X}}_{i,k}=\cm{X}_i\times_{j=1,j\neq k}^{d_0}\bm{U}_j^\top$ for $k=1,\dots,d_0$, and $\overline{\cm{Y}}_{i,k} =\cm{Y}_i\times_{j=1,j\neq k-d_0}^{d-d_0}\bm{U}_{d_0+j}^\top$ for $k=d_0+1,\dots,d$. 
For a truncation threshold $\tau>0$, the robust gradient estimators are
\begin{equation}\label{eq:robustgradients_lm2}
    \begin{split}
        \bm{G}_k(\bm{F}; \tau) &= \frac{1}{n} \sum_{i=1}^n \text{T}\big( \big[\overline{\cm{X}}_{i,k} \circ (\langle \cm{S}\times_{j=d_0+1}^{d}\bm{U}_j^\top\bm{U}_j, \overline{\cm{X}}_i \rangle - \overline{\cm{Y}}_i)\big]_{(k)} \cm{S}_{(k)}^\top, \tau \big), \quad\text{for }k = 1, \dots, d_0, \\
        \bm{G}_k(\bm{F}; \tau) &= \frac{1}{n} \sum_{i=1}^n \text{T}\big( \big[\overline{\cm{X}}_{i} \circ (\langle \cm{S}\times_k\bm{U}_k\times_{j=d_0+1,j\neq k}^{d}\bm{U}_j^\top\bm{U}_j, \overline{\cm{X}}_i \rangle - \overline{\cm{Y}}_{i,k})\big]_{(k)} \cm{S}_{(k)}^\top, \tau \big), \\
        & \quad\quad\quad\quad\quad\quad\quad\quad\quad\quad\quad\quad\quad\quad\quad\quad\quad\quad\quad\quad\quad\quad\quad\quad\text{for }k = d_0+1, \dots, d,\\
        \cm{G}_0(\bm{F}; \tau) &= \frac{1}{n} \sum_{i=1}^n \text{T}\big( \big[\overline{\cm{X}}_i \circ (\langle \cm{S}\times_{j=d_0+1}^d\bm{U}_j^\top\bm{U}_j, \overline{\cm{X}}_i \rangle - \overline{\cm{Y}}_i)\big] \times_{j=1}^d \bm{U}_j^\top, \tau \big).
    \end{split}
\end{equation}

As only the low-dimensional transformed data $\overline{\cm{X}}_i$, $\overline{\cm{Y}}_i$, $\overline{\cm{X}}_{i,k}$, and $\overline{\cm{Y}}_{i,k}$ appear in the truncated gradients in \eqref{eq:robustgradients_lm2}, it is crucial to characterize their distributional properties. Similar to $\overline{\cm{Y}}_i$ and $\overline{\cm{Y}}_{i,k}$, we can also define $\overline{\cm{E}}_i$ and $\overline{\cm{E}}_{i,k}$ as the transformed noise. We assume that the covariate $\cm{X}_i$ and noise $\cm{E}_i$ satisfy certain local moment bounds when projected onto subspaces defined by the true factors $\{\bm{U}_j^*\}_{j=1}^d$. These are formalized in Assumption \ref{asmp:1}, stated as follows.
\begin{assumption}\label{asmp:1}
    For some $\epsilon\in(0,1]$, $\lambda\in(0,1]$, and $\delta\in[0,1]$, the followings hold:
    \begin{itemize}
        \item[(a)] The vectorized covariate $\textup{vec}(\cm{X}_i)$ has the mean zero and positive definite variance $\bm{\Sigma}_x$ satisfying $0<\alpha_x\leq\lambda_{\min}(\bm{\Sigma}_x)\leq\lambda_{\max}(\bm{\Sigma}_x)\leq\beta_x$.
        \item[(b)] Conditioned on $\cm{X}_i$, the noise tensor $\cm{E}_i$ has the $(1+\epsilon)$-th local moment $M_{e,1+\epsilon,\delta} = \max_{0\leq k\leq d-d_0}[\textup{LM}_k(\cm{E}_i;1+\epsilon,\delta,\{\bm{U}_j^*\}_{j={d_0+1}}^d)]$.
        \item[(c)] $\cm{X}_i$ has the $(2+2\lambda)$-th global moment $M_{x,2+2\lambda} = \max_{0\leq k\leq d_0}\left[\textup{LM}_k(\cm{X}_i;2+2\lambda,1,\{\bm{U}_j^*\}_{j={1}}^{d_0})\right]$. In addition, as $1+\epsilon\leq 2+2\lambda$, let the $(1+\epsilon)$-th local moment of $\cm{X}_i$ be $M_{x,1+\epsilon,\delta} = \max_{0\leq k\leq d_0}\left[\textup{LM}_k(\cm{X}_i;1+\epsilon,\delta,\{\bm{U}_j^*\}_{j={1}}^{d_0})\right]$.
    \end{itemize}
\end{assumption}
These conditions are substantially weaker than sub-Gaussian or even fourth-moment assumptions, as they depend only on the behavior of the data in low-dimensional aligned subspaces, precisely where the gradients are concentrated due to the Tucker decomposition. Under Assumption \ref{asmp:1}, if all $\bm{U}_k$'s lie in the neighborhood of radius $\delta$ around their ground truth, all entries of $\overline{\cm{E}}_i$ and $\overline{\cm{X}}_{i}$ have a finite $(1+\epsilon)$-th moment bounded by $M_{e,1+\epsilon,\delta}$ and $M_{x,1+\epsilon,\delta}$, respectively.

Denote the estimator obtained by the robust gradient descent algorithm with gradient truncation parameter $\tau$ as $\cm{\widehat{A}}$, and the corresponding estimation error by $\textup{Err}(\widehat{\bm{F}})$ as in \eqref{eq:est_error}. Based on the bounded local moment conditions, we have the following guarantees.

\begin{theorem}\label{thm:linearregression}
    For tensor linear regression in \eqref{eq:linearregression}, suppose Assumption \ref{asmp:1} holds with the radius satisfying $\delta\geq\min\{\bar{\sigma}^{-1/(d+1)}\sqrt{\textup{Err}^{(0)}}+\kappa^2\alpha_x^{-1}\bar{\sigma}^{-1}d_\textup{eff}^{1/2}[M_{\textup{eff},1+\epsilon,\delta}\log(\bar{p})/n]^{\epsilon/(1+\epsilon)},1\}$. If the truncation parameter $\tau$ satisfies $\tau\asymp\bar{\sigma}^{d/(d+1)}[nM_{\textup{eff},1+\epsilon,\delta}/\log(\bar{p})]^{1/(1+\epsilon)}$, the sample size $n$ satisfies
    \begin{equation}\label{eq:condition1}
        n\gtrsim \left[\sqrt{\bar{p}}\alpha_x^{-1}\kappa^2 M_{x,2+2\lambda}\bar{\sigma}^\lambda\right]^{\frac{1+\max(\lambda,\epsilon)}{\lambda}}\log(\bar{p}),
    \end{equation}
    and the conditions of $a$, $b$, and $\eta$ in Theorem \ref{thm:1} hold with $\alpha=\alpha_{x}/2$ and $\beta=\beta_{x}/2$, then with probability at least $1-C\exp(-C\log(\bar{p}))$, after sufficient iterations of Algorithm \ref{alg:1},
    \begin{equation}\label{eq:error_1}
        \textup{Err}(\widehat{\bm{F}}) \lesssim \kappa^4\alpha_x^{-2}\bar{\sigma}^{-2d/(d+1)}d_\textup{eff}\left[M_{\textup{eff},1+\epsilon,\delta}^{1/\epsilon}\log(\bar{p})/n\right]^{2\epsilon/(1+\epsilon)},
    \end{equation}
    and
    \begin{equation}\label{eq:error2}
        \|\cm{\widehat{A}}-\cm{A}^*\|_\textup{F} \lesssim\ \kappa^2\alpha_{x}^{-1}d_{\textup{eff}}^{1/2}\left[M_{\textup{eff},1+\epsilon,\delta}^{1/\epsilon}\log(\bar{p})/n\right]^{\epsilon/(1+\epsilon)},
    \end{equation}
    where $M_{\textup{eff},1+\epsilon,\delta}=M_{x,1+\epsilon,\delta}\cdot M_{e,1+\epsilon,\delta}$ is the effective $(1+\epsilon)$-th local moment.
\end{theorem}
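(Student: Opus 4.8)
The plan is to instantiate Theorem~\ref{thm:1}: verify the RCG condition of Definition~\ref{def:1} for the least squares loss with $\alpha=\alpha_x/2$ and $\beta=\beta_x/2$, verify the stability condition of Definition~\ref{def:2} for the truncated gradient estimators in \eqref{eq:robustgradients_lm2} with $\phi$ small enough and $\xi_k^2\asymp\bar{\sigma}^{2d/(d+1)}p_kr_k[M_{\textup{eff},1+\epsilon,\delta}^{1/\epsilon}\log\bar{p}/n]^{2\epsilon/(1+\epsilon)}$ (and $\xi_0^2$ with $\prod_jr_j$ replacing $p_kr_k$), and then substitute these into the two bounds of Theorem~\ref{thm:1}. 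For the RCG step, $\nabla\overline{\mathcal{L}}(\cm{A};z_i)=\cm{X}_i\circ(\langle\cm{A},\cm{X}_i\rangle-\cm{Y}_i)$, so by $\mathbb{E}[\cm{E}_i\mid\cm{X}_i]=\bm{0}$ the population gradient equals $\mathbb{E}[\cm{X}_i\circ\langle\cm{A}-\cm{A}^*,\cm{X}_i\rangle]$; writing $\cm{D}=\cm{A}-\cm{A}^*$ and slicing along the response modes, both $\langle\mathbb{E}[\nabla\overline{\mathcal{L}}(\cm{A})],\cm{D}\rangle$ and $\|\mathbb{E}[\nabla\overline{\mathcal{L}}(\cm{A})]\|_\textup{F}^2$ become sums over the response indices of quadratic forms in $\bm{\Sigma}_x$ and $\bm{\Sigma}_x^2$ applied to the vectorized slices of $\cm{D}$, and Assumption~\ref{asmp:1}(a) then yields the RCG inequality with the stated parameters. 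This is a population identity, so no restricted-eigenvalue argument is needed here.

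The core of the proof is the gradient-stability step. Fix an iterate $\bm{F}$ whose (rotated) factor matrices lie within angular radius $\delta$ of $\{\bm{U}_j^*\}_{j=1}^d$, so that every entry of $\overline{\cm{X}}_i$, $\overline{\cm{X}}_{i,k}$, $\overline{\cm{E}}_i$, $\overline{\cm{E}}_{i,k}$ appearing in \eqref{eq:robustgradients_lm2} has $(1+\epsilon)$-th moment bounded by $M_{x,1+\epsilon,\delta}$ or $M_{e,1+\epsilon,\delta}$, and the relevant noise-times-covariate products have $(1+\epsilon)$-th moment $\lesssim M_{\textup{eff},1+\epsilon,\delta}$ after conditioning on $\cm{X}_i$. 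Using $\mathbb{E}[\nabla_{\bm{U}_k}\mathcal{L}(\bm{F}^*)]=\bm{0}$, decompose
\begin{equation}
\bm{G}_k(\bm{F};\tau)-\mathbb{E}[\nabla_{\bm{U}_k}\mathcal{L}(\bm{F})]=\bm{G}_k(\bm{F}^*;\tau)+\Big(\big[\bm{G}_k(\bm{F};\tau)-\bm{G}_k(\bm{F}^*;\tau)\big]-\mathbb{E}[\nabla_{\bm{U}_k}\mathcal{L}(\bm{F})]\Big).
\end{equation}
The first summand is free of $\bm{F}$, being a truncated sample mean of the bilinear array $[\overline{\cm{X}}_{i,k}^*\circ(-\overline{\cm{E}}_i^*)]_{(k)}\cm{S}_{(k)}^{*\top}$: each of its $p_kr_k$ entries has truncation bias $\lesssim M/\tau^\epsilon$ and, by a Bernstein bound using $\mathrm{Var}(\mathrm{T}(W,\tau))\le M\tau^{1-\epsilon}$, stochastic error $\lesssim\sqrt{M\tau^{1-\epsilon}\log\bar{p}/n}+\tau\log\bar{p}/n$; with $\tau\asymp\bar{\sigma}^{d/(d+1)}[nM_{\textup{eff},1+\epsilon,\delta}/\log\bar{p}]^{1/(1+\epsilon)}$ these balance at the order $\bar{\sigma}^{d/(d+1)}[M_{\textup{eff},1+\epsilon,\delta}^{1/\epsilon}\log\bar{p}/n]^{\epsilon/(1+\epsilon)}$, and a union bound over the entries delivers $\xi_k$ of the claimed size, whence $\sum_{k=0}^d\xi_k^2\asymp\bar{\sigma}^{2d/(d+1)}d_{\textup{eff}}[M_{\textup{eff},1+\epsilon,\delta}^{1/\epsilon}\log\bar{p}/n]^{2\epsilon/(1+\epsilon)}$.

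The second summand vanishes at $\bm{F}=\bm{F}^*$. By the $1$-Lipschitz property of $\mathrm{T}(\cdot,\tau)$, its Frobenius norm is controlled by an average of terms linear in $\mathrm{dist}(\bm{F},\bm{F}^*)$, hence, up to $\bar{\sigma}$ and $\kappa$ factors, in $\|\cm{A}-\cm{A}^*\|_\textup{F}$, whose random multipliers are products of projected covariates and noise; centering against the matching piece of the population gradient (of norm $\lesssim\beta_x\|\cm{A}-\cm{A}^*\|_\textup{F}$) and taking a supremum over the $\delta$-basin via an $\varepsilon$-net of the $d_{\textup{eff}}$-dimensional factorization manifold yields the bound $\phi^{1/2}\|\cm{A}-\cm{A}^*\|_\textup{F}$, where $\phi$ is a product of moment constants times $d_{\textup{eff}}\log\bar{p}/n$ up to the degradation forced by the merely $(2+2\lambda)$-th covariate moment. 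Condition \eqref{eq:condition1} is exactly what guarantees $\phi\lesssim\alpha^2\kappa^{-4}\bar{\sigma}^{2d/(d+1)}$ as Theorem~\ref{thm:1} requires---its exponent $(1+\max(\lambda,\epsilon))/\lambda$ and the $\sqrt{\bar{p}}$ factor arising from a separate concentration bound for the heavy-tailed covariate products in this term, which uses only their $(2+2\lambda)$-th moments. Since the hypothesis supplies the $a,b,\eta$ of Theorem~\ref{thm:1} and a sufficiently accurate $\bm{F}^{(0)}$, an induction on $t$---using that the assumed lower bound on $\delta$ is precisely the sum of the initial factor-matrix distance $\bar{\sigma}^{-1/(d+1)}\sqrt{\textup{Err}^{(0)}}$ and the final statistical factor-matrix distance---shows that all iterates stay in the $\delta$-basin, so the local-moment bounds used above remain valid along the whole trajectory. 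Substituting $\alpha=\alpha_x/2$, $\beta=\beta_x/2$, and the $\xi_k$ above into Theorem~\ref{thm:1}, and taking $t$ large enough to absorb the geometric term, gives \eqref{eq:error_1}; converting $\textup{Err}(\widehat{\bm{F}})$ to the reconstruction error (which costs only $\kappa$ factors) gives \eqref{eq:error2}.

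The step I expect to be the main obstacle is this second summand: unlike the $\bm{F}$-free noise term, it cannot be reduced to a single fixed robust-mean problem, and controlling it uniformly over the $\delta$-basin while the covariates have only $(2+2\lambda)$ finite moments requires combining a covering argument over the low-rank factorization manifold with a concentration bound for heavy-tailed covariate products---this is the origin of the nonstandard sample-size requirement \eqref{eq:condition1}, and the place where the bookkeeping of the $\bar{\sigma}$, $b$, and $\kappa$ exponents is most delicate.
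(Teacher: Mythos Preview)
Your high-level strategy---verify RCG for the least squares loss and the stability condition of Definition~\ref{def:2} for the truncated gradients, then plug into Theorem~\ref{thm:1}---is exactly what the paper does. Your RCG verification is correct and in fact more explicit than the paper's, which simply invokes Lemma~3.11 of \citet{bubeck2015convex}.

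The two substantive differences are in the stability proof. First, the paper's decomposition is not the two-term split you propose but a four-term split $T_{k,1}+T_{k,2}+T_{k,3}+T_{k,4}$ in which the projection matrices $\bm{V}_k=(\otimes_{j\neq k}\bm{U}_j)\cm{S}_{(k)}^\top$ are kept at the \emph{current} iterate $\bm{F}$, and only the evaluation point of the full gradient $\nabla\overline{\mathcal{L}}$ is switched between $\cm{A}$ and $\cm{A}^*$. This cleanly separates the contributions: $T_{k,1},T_{k,2}$ are the truncation bias and fluctuation of $\nabla\overline{\mathcal{L}}(\cm{A}^*;z_i)_{(k)}\bm{V}_k=[\cm{X}_i\circ(-\cm{E}_i)]_{(k)}\bm{V}_k$ (noise$\times$covariate, producing $\xi_k$), while $T_{k,3},T_{k,4}$ are the bias and fluctuation of the difference, whose entries are $\bm{w}^\top\text{vec}(\cm{X}_i)\text{vec}(\cm{X}_i)^\top\text{vec}(\cm{A}-\cm{A}^*)$ (pure covariate products, producing $\phi$ and using only the global $(2{+}2\lambda)$-th moment $M_{x,2+2\lambda}$). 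Your anchoring at $\bm{F}^*$ instead of $\bm{F}$ means your second summand mixes changes in the projection with changes in the residual, so the noise and covariate contributions are entangled; the argument can still be made to work, but the bookkeeping is messier and you lose the clean reason why $\xi_k$ sees $M_{\textup{eff},1+\epsilon,\delta}$ while $\phi$ sees $M_{x,2+2\lambda}$.

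Second---and this is where your plan is most exposed---the paper does \emph{not} use an $\varepsilon$-net over the factorization manifold to obtain uniformity over the $\delta$-basin. Instead, Step~7 of the proof uses a short algebraic argument: for any $\bm{U}_l$ with $\|\sin\Theta(\bm{U}_l,\bm{U}_l^*)\|\le\delta'$, write $\bm{U}_l=\bm{U}_l^*\bm{O}_l+\bm{\Delta}_{l,1}+\bm{\Delta}_{l,2}$ with each piece in $\mathcal{C}(\bm{U}_l^*,\delta')$ and $\|\bm{\Delta}_{l,j}\|\le\sqrt{2}\delta'$, and deduce that the supremum over the ball is at most $(1-2\sqrt{2}\delta')^{-1}$ times the value at $\bm{U}_l^*$. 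Iterating over modes reduces everything to a single fixed projection, so the high-probability entrywise Bernstein bounds proved at fixed $\bm{F}$ transfer to the entire trajectory without a union bound over a net. Your $\varepsilon$-net route would need concentration with failure probability $\exp(-Cd_{\textup{eff}})$ at each net point, and under only $(2{+}2\lambda)$-th moments this either fails or forces a substantially larger sample-size condition than \eqref{eq:condition1}. This is the one place I would flag as a genuine risk in your plan; the paper's device sidesteps it entirely.
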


The statistical guarantees for model \eqref{eq:linearregression} are summarized in Theorem \ref{thm:linearregression}, which provides bounds on both the estimation error $\textup{Err}(\widehat{\bm{F}})$ and the Frobenius norm error $\|\cm{\widehat{A}}-\cm{A}^*\|_\text{F}$. The conditions and results depend critically on $\lambda$ and $\epsilon$, namely the moment order of the covariates and noise. Specifically, the sample size requirement in \eqref{eq:condition1} depends on $\lambda$ and $\epsilon$. Given other parameters fixed, we need $n\gtrsim \bar{p}^{(1+\max(\lambda,\epsilon))/(2\lambda)}$. Larger values of $\lambda$ (i.e., higher-order moment availability) relax the sample complexity, while smaller $\lambda$ lead to strong requirement but still valid estimation for the covariates in heavy-tailed regimes. 

On the other hand, the rate of convergence is solely governed by $\epsilon$. When $\epsilon=1$, the noise has finite second local moment, and the estimator attains a fast rate, matching those under Gaussian conditions \citep{,raskutti2019convex,han2022optimal}. When $\epsilon<1$, the noise exhibit heavier tails, and the rate slows down but remains minimax optimal for the given moment condition \citep{sun2020adaptive,tan2023sparse}. The truncation threshold $\tau$ is chosen adaptively based on the effective noise level and sample size, ensuring that the truncated gradients remain statistically well-behaved. In practice, $\tau$ can be chosen via cross-validation, where the detailed implementations are provided in Appendix \ref{append:init} of supplementary materials.

Our analysis relies on local moment conditions, which capture the tail behavior of the data in the low-dimensional subspaces defined by the true tensor factors. The localization leads to moment assumptions that are substantially weaker than global ones, and consequently, our statistical guarantees are potentially much sharper. These advantages are empirically validated through simulation experiments in Section \ref{sec:4}.

\begin{remark}
    Compared with the existing methods, we relax the distributional condition on the covariates. For example, Huber regression is widely-used for robust estimation of linear regression, with the loss function
    \begin{equation}\label{eq:huber}
        \mathcal{L}_\textup{H}(\bm{F};\mathcal{D}_n)=\frac{1}{2}\sum_{i=1}^n\ell_\nu(\cm{Y}_i-\langle\cm{S}\times_{j=1}^d\bm{U}_j,\cm{X}_i\rangle),
    \end{equation}
    where $\ell_\nu(\cm{T})=\sum_{i_1,\dots,i_d}\ell_\nu(\cm{T}_{i_1\dots i_d})$ for any tensor $\cm{T}$, $\ell_\nu(x)=x^2\cdot1(|x|\leq\nu)+(2\nu x-\nu^2)\cdot1(|x|>\nu)$ is the Huber loss, and $\nu>0$ is the robustness parameter. The partial gradients of $\mathcal{L}_\textup{H}$ are
    \begin{equation}
        \begin{split}
            \nabla_{\bm{U}_k}\mathcal{L}_{\textup{H}}(\bm{F};\mathcal{D}_n)&=\frac{1}{n}\sum_{i=1}^n[\cm{X}_i\circ\textup{T}(\langle\cm{A},\cm{X}_i\rangle-\cm{Y}_i,\nu)]_{(k)}(\otimes_{j\neq k}\bm{U}_j)\cm{S}_{(k)}^\top,\\
            \text{and }\nabla_{\scalebox{0.7}{\cm{S}}}\mathcal{L}_{\textup{H}}(\bm{F};\mathcal{D}_n)&=\frac{1}{n}\sum_{i=1}^n[\cm{X}_i\circ\textup{T}(\langle\cm{A},\cm{X}_i\rangle-\cm{Y}_i,\nu)]\times_{j=1}^d\bm{U}_j^\top,
        \end{split}
    \end{equation}
    where the gradients bound the residuals $(\langle\cm{A},\cm{X}_i\rangle-\cm{Y}_i)$ solely, and have no control on the covariates $\cm{X}_i$. Hence, the covariates are typically assumed to be sub-Gaussian or bounded \citep{fan2017estimation,sun2020adaptive,tan2023sparse,shen2022computationally}. In contrast, the proposed method, based on gradient robustification, can handle both heavy-tailed covariates and noise without stringent moment conditions on the covariates themselves. 
\end{remark}

The convergence guarantees in Theorem \ref{thm:1} depend on a good initialization error $\text{Err}(\bm{F}^{(0)})$. To initialize the estimate for tensor linear regression, we propose to reformulate it to
\begin{equation}
    \text{vec}(\cm{Y}_i) = \text{mat}(\cm{A}^*)\text{vec}(\cm{X}_i) + \text{vec}(\cm{E}_i),\quad i=1,\dots,n,
\end{equation}
where $\text{mat}(\cdot)$ is a tensor matricization. Due to the low-rank structure of $\cm{A}^*$, $\text{mat}(\cm{A}^*)$ is a low-rank matrix. Similarly to \citet{sun2020adaptive} and \citet{fan2021shrinkage}, we perform data truncation to the covariates and apply the reduced-rank Huber regression model in \eqref{eq:huber} with a nuclear norm penalty by \citet{tan2023sparse}. After obtaining the initial value of $\cm{A}$, we apply the higher-order orthogonal iterations \citep[HOOI]{delathauwer2000multilinear} to obtain the intial values of $\bm{F}$. The details of initialization and correponding theoretical guarantees are relegated to Appendix \ref{append:init} of supplementary materials.

\subsection{Heavy-Tailed Tensor Logistic Regression}

For the generalized linear model, conditioned on the tensor covariate $\cm{X}_i$, the response variable $y_i$ follows the distribution
\begin{equation}\label{eq:GLM}
    \mathbb{P}(y_i|\cm{X}_i)\propto\exp\left\{\frac{y_i\langle\cm{X}_i,\cm{A}^*\rangle-\Phi(\langle\cm{X}_i,\cm{A}^*\rangle)}{c(\gamma)}\right\},~~i=1,2,\dots,n,
\end{equation}
where $\Phi(\cdot)$ is a convex link function, and $c(\gamma)$ is a normalization constant that may depend on additional parameters $\gamma$. The corresponding negative log-likelihood loss function is
\begin{equation}
    \overline{\mathcal{L}}(\cm{A};z_i)=\Phi(\langle\cm{X}_i,\cm{A}\rangle) - y_i\langle\cm{X}_i,\cm{A}\rangle.
\end{equation}

A widely studied instance of this model is logistic regression, where $\Phi(t)=\log(1+\exp(t))$. For this model, since $y_i$ is a binary variable, we assume that the covariate $\cm{X}_i$ may follow a heavy-tailed distribution. Similarly to tensor linear regression, for a given $\bm{F}$, we consider the multilinear transformations induced by the factor matrices: $\overline{\cm{X}}_i=\cm{X}_i\times_{j=1}^d\bm{U}_j^\top$ and $\overline{\cm{X}}_{i,k}=\cm{X}_i\times_{j=1,j\neq k}^d\bm{U}_j^\top$. These transformations project the original covariate tensors onto the low-dimensional subspaces defined by the factor matrices $\bm{U}_j$, which align with the low-rank structure of the coefficient tensor $\cm{A}^*$. The partial gradients of the logistic loss with respect to the factor matrices and core tensor are
\begin{equation}
    \begin{split}    
        \nabla_{\bm{U}_k}\mathcal{L}(\bm{F};z_i) & = \left(\frac{\exp(\langle\overline{\cm{X}}_i,\cm{S}\rangle)}{1+\exp(\langle\overline{\cm{X}}_i,\cm{S}\rangle)}-y_i\right)(\overline{\cm{X}}_{i,k})_{(k)}\cm{S}_{(k)}^\top,~1\leq k\leq d_0,\\
        \text{and }\nabla_{\scalebox{0.7}{\cm{S}}}\mathcal{L}(\bm{F};z_i) & = \left(\frac{\exp(\langle\overline{\cm{X}}_i,\cm{S}\rangle)}{1+\exp(\langle\overline{\cm{X}}_i,\cm{S}\rangle)}-y_i\right)\overline{\cm{X}}_i.
    \end{split}
\end{equation}

As in the case of tensor linear regression, the partial gradients depend on the low-dimensional transformations $\overline{\cm{X}}_i$ and $\overline{\cm{X}}_{i,k}$. Therefore, to derive sharp statistical guarantees, it is essential to characterize their distributional properties. In this article, we impose the following local moment conditions on the covariate tensor $\cm{X}_i$.
\begin{assumption}\label{asmp:logistic}
    For some $\lambda\in(0,1]$ and $\delta\in[0,1]$, $\cm{X}_i$ satisfies:
    \begin{itemize}
        \item[(a)] $\textup{vec}(\cm{X}_i)$ has mean zero and a positive definite variance matrix $\bm{\Sigma}_x$, with $0<\alpha_x\leq\lambda_{\min}(\bm{\Sigma}_x)\leq \lambda_{\max}(\bm{\Sigma}_x)\leq\beta_x$.
        \item[(b)] $\cm{X}_i$ has the $(2+2\lambda)$-th global moment $M_{x,2+2\lambda}=\max_{0\leq k\leq d}[\textup{LM}_{k}(\cm{X}_i;2+2\lambda,1,\{\bm{U}_j^*\}_{j=1}^{d})]$, and has the second local moment $M_{x,2,\delta}=\max_{0\leq k\leq d}[\textup{LM}_{k}(\cm{X}_i;2,\delta,\{\bm{U}_j^*\}_{j=1}^{d})]$.
    \end{itemize}
\end{assumption}

By definition, the local second moment $M_{x,2,\delta}$ is typically much smaller than the global variance bound $\beta_x$. The statistical convergence rate of the robust estimator, denoted as $\cm{\widehat{A}}$, is governed by the local moment $M_{x,2,\delta}$, while the $(2+2\lambda)$-th global moment $M_{x,2+2\lambda}$ influences the sample size requirement.

\begin{theorem}\label{thm:logistic}
    For low-rank tensor logistic regression, suppose Assumption \ref{asmp:1} holds with some $\epsilon\in(0,1]$ and $\delta\geq\min\{\bar{\sigma}^{-1/(d+1)}\sqrt{\textup{Err}^{(0)}}+C\kappa^2\alpha_x^{-1}\bar{\sigma}^{-1}\sqrt{d_\textup{eff}M_{\textup{eff},\delta}\log(\bar{p})/n},1\}$. If
    \begin{equation}
        \tau\asymp\bar{\sigma}^{d/(d+1)}[nM_{x,2,\delta}/\log(\bar{p})]^{1/2},\quad n\gtrsim \bar{p}^{1/\lambda}\alpha_x^{-2/\lambda}\kappa^{4/\kappa}M_{x,2+2\lambda}^{2/\lambda}\bar{\sigma}^2\log(\bar{p}),
    \end{equation}
    and other conditions of $a$, $b$ and $\eta$ in Theorem \ref{thm:1} hold with $\alpha=\alpha_x/2$ and $\beta=\beta_x/2$, then with probability at least $1-C\exp(-C\log(\bar{p}))$, after sufficient iterations of Algorithm \ref{alg:1},
    \begin{equation}
        \textup{Err}(\widehat{\cm{S}},\widehat{\bm{U}}_1,\dots,\widehat{\bm{U}}_d) \lesssim \kappa^4\alpha_x^{-2}\bar{\sigma}^{-2d/(d+1)}d_{\textup{eff}}M_{x,2,\delta}\log(\bar{p})/n.
    \end{equation}
    and
    \begin{equation}
        \|\cm{\widehat{A}}-\cm{A}^*\|_\textup{F}\lesssim \kappa^2\alpha_x^{-1}d_{\textup{eff}}^{1/2}\sqrt{M_{x,2,\delta}\log(\bar{p})/n}.
    \end{equation}
\end{theorem}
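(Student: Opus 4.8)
The strategy is to derive Theorem~\ref{thm:logistic} from the abstract guarantee of Theorem~\ref{thm:1} by verifying its two structural hypotheses for the logistic loss on the relevant local neighborhood, and then substituting the resulting gradient-stability parameters $\xi_k$ into its error bounds. The first hypothesis is the RCG condition. Writing $\nabla\overline{\mathcal{L}}(\cm{A};z_i)=(\Phi'(\langle\cm{X}_i,\cm{A}\rangle)-y_i)\cm{X}_i$ with $\Phi'(t)=e^t/(1+e^t)$ and using $\mathbb{E}[y_i\mid\cm{X}_i]=\Phi'(\langle\cm{X}_i,\cm{A}^*\rangle)$, a mean-value expansion gives $\langle\mathbb{E}\nabla\overline{\mathcal{L}}(\cm{A};z_i),\cm{A}-\cm{A}^*\rangle=\mathbb{E}[\Phi''(\zeta_i)\langle\cm{X}_i,\cm{A}-\cm{A}^*\rangle^2]$ for intermediate points $\zeta_i$, with $0<\Phi''\le\tfrac14$. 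Since $\cm{A}-\cm{A}^*$ has mode-$k$ rank at most $2r_k$ and the local region is bounded, $\mathbb{E}\langle\cm{X}_i,\cm{A}-\cm{A}^*\rangle^2$ is between $\alpha_x$ and $\beta_x$ times $\|\cm{A}-\cm{A}^*\|_\text{F}^2$ by Assumption~\ref{asmp:logistic}; restricting to the event where $|\langle\cm{X}_i,\cm{A}\rangle|$ and $|\langle\cm{X}_i,\cm{A}^*\rangle|$ are bounded (so $\Phi''(\zeta_i)$ is bounded below by an absolute constant) and controlling the complementary event through the second moment produces the quadratic lower bound with constant $\asymp\alpha_x$, while $0<\Phi''\le\tfrac14$ and $\lambda_{\max}(\bm{\Sigma}_x)\le\beta_x$ give the $\|\mathbb{E}\nabla\overline{\mathcal{L}}\|_\text{F}^2$ term. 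This yields RCG with $\alpha=\alpha_x/2$ and $\beta=\beta_x/2$, the link-curvature constants absorbed into the implied constants.

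The crux is verifying the gradient-stability condition of Definition~\ref{def:2} for the entrywise-truncated estimators, uniformly over all iterates $\bm{F}^{(t)}$. Fix $\bm{F}=(\cm{S},\bm{U}_1,\dots,\bm{U}_d)$ with $\textup{Err}(\bm{F})$ inside the basin of Theorem~\ref{thm:1}; the partial gradients involve only the multilinear projections $\overline{\cm{X}}_i=\cm{X}_i\times_{j=1}^d\bm{U}_j^\top$ and $\overline{\cm{X}}_{i,k}=\cm{X}_i\times_{j\ne k}\bm{U}_j^\top$. Because $\|\bm{U}_j^*\|=b$ and $\|\bm{U}_j-\bm{U}_j^*\bm{O}_j\|$ is small, each column of $b^{-1}\bm{U}_j$ lies in $\mathcal{V}(\bm{U}_j^*,\delta)$ once $\delta$ exceeds the stated threshold, so every entry of $b^{-(d-1)}\overline{\cm{X}}_{i,k}$ and of $b^{-d}\overline{\cm{X}}_i$ has second moment at most $M_{x,2,\delta}$ by the definition of the mode-$k$-excluded and all-mode local moments. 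For each scalar entry $W=(\Phi'(\langle\cm{X}_i,\cm{A}\rangle)-y_i)Z_i$ of $\nabla_{\bm{U}_k}\mathcal{L}(\bm{F};z_i)$ or of $\cm{G}_0(\bm{F};\tau)$'s integrand --- where $Z_i$ is the corresponding entry of $(\overline{\cm{X}}_{i,k})_{(k)}\cm{S}_{(k)}^\top$ or of $\overline{\cm{X}}_i\times_j\bm{U}_j^\top$, so $\mathbb{E}[Z_i^2]\lesssim\|\cm{S}\|^2b^{2(d-1)}M_{x,2,\delta}\asymp\bar{\sigma}^{2d/(d+1)}M_{x,2,\delta}$ under the prescribed scaling $b\asymp\bar{\sigma}^{1/(d+1)}$ --- I would split $\tfrac1n\sum_i\textup{T}(W_i,\tau)-\mathbb{E}[W]$ into the truncation bias $\mathbb{E}[\textup{T}(W,\tau)]-\mathbb{E}[W]$, bounded by $\mathbb{E}[|W|\mathbf{1}\{|W|>\tau\}]\le\mathbb{E}[|W|^{2+2\lambda}]/\tau^{1+2\lambda}$ (this is where the global $(2+2\lambda)$-th covariate moment $M_{x,2+2\lambda}$ enters, since $W$ is the product of a covariate projection with the $[0,1]$-bounded residual), plus the centered average of the $\tau$-bounded variables $\textup{T}(W_i,\tau)$, controlled by a Bernstein inequality with variance proxy $\mathbb{E}[W^2]$. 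Using $\mathbb{E}[(\Phi'(\langle\cm{X}_i,\cm{A}\rangle)-y_i)^2\mid\cm{X}_i]\lesssim(\Phi'(\langle\cm{X}_i,\cm{A}\rangle)-\Phi'(\langle\cm{X}_i,\cm{A}^*\rangle))^2+\tfrac14\lesssim\langle\cm{X}_i,\cm{A}-\cm{A}^*\rangle^2+1$ by the $\tfrac14$-Lipschitzness of $\Phi'$, the variance proxy decomposes into a part proportional to $\|\cm{A}-\cm{A}^*\|_\text{F}^2$ --- feeding the $\phi$-term, with $\phi$ a small constant multiple of $\alpha_x^2$ that satisfies the smallness requirement $\phi\lesssim\alpha^2\kappa^{-4}\bar{\sigma}^{2d/(d+1)}$ of Theorem~\ref{thm:1} --- and an irreducible part of order $\bar{\sigma}^{2d/(d+1)}M_{x,2,\delta}$. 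Taking $\tau\asymp\bar{\sigma}^{d/(d+1)}[nM_{x,2,\delta}/\log\bar{p}]^{1/2}$ to balance bias against fluctuation and a union bound over the at most $p_kr_k$ (resp.\ $\prod_jr_j$) entries gives $\xi_k^2\lesssim p_kr_k\,\bar{\sigma}^{2d/(d+1)}M_{x,2,\delta}\log(\bar{p})/n$ and $\xi_0^2\lesssim(\prod_jr_j)\bar{\sigma}^{2d/(d+1)}M_{x,2,\delta}\log(\bar{p})/n$, hence $\sum_{k=0}^d\xi_k^2\lesssim d_\textup{eff}\,\bar{\sigma}^{2d/(d+1)}M_{x,2,\delta}\log(\bar{p})/n$.

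With both hypotheses in hand, I would close the argument by induction on $t$: the initialization bound $\textup{Err}(\bm{F}^{(0)})\lesssim\alpha\beta^{-1}\bar{\sigma}^{2/(d+1)}\kappa^{-2}$ holds for the truncated-data spectral/HOOI-type initializer whose guarantee is deferred to the appendix, and if $\bm{F}^{(t)}$ lies in the radius-$\delta$ neighborhood then the RCG and stability conditions hold at step $t$, so the one-step contraction of Theorem~\ref{thm:1} keeps $\textup{Err}(\bm{F}^{(t+1)})$ below the basin radius and below the level $\asymp\delta^2\bar{\sigma}^{2/(d+1)}$ implied by the lower bound imposed on $\delta$, so $\bm{F}^{(t+1)}$ remains in the neighborhood. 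Iterating, the optimization error decays geometrically and $\textup{Err}(\bm{F}^{(t)})$ stabilizes at $C\alpha^{-2}\bar{\sigma}^{-4d/(d+1)}\kappa^4\sum_k\xi_k^2$; plugging in $\alpha=\alpha_x/2$ and the bound on $\sum_k\xi_k^2$ gives the stated bound on $\textup{Err}(\widehat{\cm{S}},\widehat{\bm{U}}_1,\dots,\widehat{\bm{U}}_d)$, and the second display of Theorem~\ref{thm:1} followed by a square root gives the bound on $\|\cm{\widehat{A}}-\cm{A}^*\|_\text{F}$; the conditions on $a,b,\eta$ are immediate once $\alpha\asymp\alpha_x$, $\beta\asymp\beta_x$. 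The main obstacle is the gradient-stability step, and inside it two coupled points: extracting the \emph{sharp local} variance proxy $\bar{\sigma}^{2d/(d+1)}M_{x,2,\delta}$ rather than the crude global bound $\bar{\sigma}^{2d/(d+1)}\beta_x$, which requires carefully tracking that the iterates' (random) factor columns stay in $\mathcal{V}(\bm{U}_j^*,\delta)$ and that their projections inherit the local second moment; and controlling the truncation bias and Bernstein remainder for a gradient entry that is a product of a heavy-tailed covariate projection with a residual when only $(2+2\lambda)$ covariate moments are available --- including the uniformity over the iterate-dependent projection directions via a covering argument, which is what forces the ambient dimension $\bar{p}$ into the stated sample-size requirement.
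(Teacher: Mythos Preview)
Your overall architecture---verify RCG, verify gradient stability, then invoke Theorem~\ref{thm:1}---matches the paper, and your RCG discussion is fine. The gap is in the gradient-stability step.

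You propose a two-term bias--plus--concentration split of $\tfrac1n\sum_i\textup{T}(W_i,\tau)-\mathbb{E}[W]$ at the \emph{current} iterate $\cm{A}$. The paper does something structurally different: it anchors at the truth $\cm{A}^*$ and uses a \emph{four}-term decomposition $T_{k,1}+T_{k,2}+T_{k,3}+T_{k,4}$, where $T_{k,1},T_{k,2}$ are the truncation bias and concentration of the partial gradient evaluated at $\cm{A}^*$, and $T_{k,3},T_{k,4}$ are the bias and concentration of the \emph{difference} between the truncated gradients at $\cm{A}$ and at $\cm{A}^*$. This anchoring is not cosmetic; it drives all three of the issues in your sketch.

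First, your bias bound $\mathbb{E}[|W|\mathbf{1}\{|W|>\tau\}]\le\mathbb{E}[|W|^{2+2\lambda}]/\tau^{1+2\lambda}$ pulls in the \emph{global} moment $M_{x,2+2\lambda}$, so the resulting $\xi_k$ would carry $M_{x,2+2\lambda}$ rather than the \emph{local} $M_{x,2,\delta}$ that appears in the theorem. The paper bounds $T_{k,1}$ with only the second local moment, giving bias $\lesssim M_{x,2,\delta}/\tau_k\asymp[M_{x,2,\delta}\log\bar p/n]^{1/2}$; the $(2+2\lambda)$-th moment never touches $\xi_k$. Second, your claim that the variance proxy $\mathbb{E}[W^2]$ furnishes the $\phi$-term via the decomposition $(\Phi'(\langle\cm{X}_i,\cm{A}\rangle)-y_i)^2\lesssim\langle\cm{X}_i,\cm{A}-\cm{A}^*\rangle^2+1$ is incorrect: since $|\Phi'-y_i|\le1$, one has $\mathbb{E}[W^2]\le\mathbb{E}[Z_i^2]$ uniformly in $\cm{A}$, so the variance contributes nothing to $\phi$. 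In the paper, $\phi$ (denoted $\phi_\lambda$) comes entirely from $T_{k,3},T_{k,4}$, whose entries are $s_{k,m,l}^{(i)}\propto(\Phi'(\langle\cm{X}_i,\cm{A}\rangle)-\Phi'(\langle\cm{X}_i,\cm{A}^*\rangle))Z_i$; bounding $\mathbb{E}|s_{k,m,l}^{(i)}|^{1+\lambda}$ by Cauchy--Schwarz on the two covariate projections is precisely where $M_{x,2+2\lambda}$ enters, yielding $\phi_\lambda=\bar\sigma^{2\lambda}M_{x,2+2\lambda}^2\bar p[\log\bar p/n]^{2\lambda/(1+\lambda)}$, and the sample-size condition in the theorem is exactly what forces $\phi_\lambda\bar\sigma^{2d/(d+1)}\lesssim\alpha^2\kappa^{-4}\bar\sigma^{2d/(d+1)}$. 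Your stated $\phi$ ``a small constant multiple of $\alpha_x^2$'' would not satisfy Theorem~\ref{thm:1}'s requirement without extra assumptions. Third, the anchoring is what makes uniformity over the random iterates tractable: $T_{k,1},T_{k,2}$ depend on the iterate only multilinearly through $\bm{V}_k$, so the paper handles the sup over $\bm{U}_l$ near $\bm{U}_l^*$ by a self-bounding argument (not an $\epsilon$-net), while $T_{k,3},T_{k,4}$ are already bounded pointwise by $\|\cm{A}-\cm{A}^*\|_\textup{F}$ and hence need no further uniformization. Your two-term split leaves the nonlinear residual $\Phi'(\langle\cm{X}_i,\cm{A}\rangle)$ inside the concentration term, and a naive covering of the low-rank set would cost $\exp(Cd_{\textup{eff}})$ in the union bound---far more than the $\exp(C\log\bar p)$ Bernstein budget.
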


Theorem \ref{thm:logistic} establishes sharp statistical convergence rates for robust tensor logistic regression under local moment assumptions on covariates. Similarly to Theorem \ref{thm:linearregression}, the sample size requirement depends on $\lambda$. Notably, the convergence rate matches that of the vanilla gradient descent algorithm under Gaussian design \citep{chen2019non}, demonstrating that our method achieves robust and optimal estimation even when the covariates are heavy-tailed. Moreover, if the initial estimation error satsifies $\text{Err}^{(0)}\leq\bar{\sigma}^{2/(d+1)}$ and the sample size $n$ is sufficiently large, the local radius $\delta$ remains less than one, ensuring that the statistical guarantees hold under the local moment conditions.

\begin{remark}

    The proposed method extends to tensor logistic regression and improves over existing approaches \citep{prasad2020robust,zhu2021taming} by requiring only a local $(2+2\lambda)$-th moment condition on the covariates, rather than fourth or higher moments. By leveraging the low-rank structure, the statistical behavior of the gradients is governed by the projected covariates within the low-dimensional subspaces defined by the true factors. This localization enables relaxed moment assumptions and improved convergence rates tailored to tensor models.

\end{remark}

For initialization of tensor logistic regression, we propose to vectorize the tensor covariates, perform vector norm truncation to them, apply the robust estimation similar to \citet{zhu2021taming}, and perform tensor HOOI to initialize $\bm{F}$. The detailed implementation and theoretical guarantees are relegated to Appendix \ref{append:init} of supplementary materials.

\subsection{Heavy-Tailed Tensor PCA}

Another important statistical model for tensor data is tensor principal component analysis (PCA). Specically, we consider the model
\begin{equation}\label{eq:PCA}
    \cm{Y} = \cm{A}^* + \cm{E} = \cm{S}^*\times_{j=1}^d\bm{U}_j^* + \cm{E},
\end{equation}
where $\cm{Y}\in\mathbb{R}^{p_1\times\cdots\times p_d}$ is the observed tensor, $\cm{A}^*=\cm{S}^*\times_{j=1}^d\bm{U}_j^*$ is the low-rank signal tensor, and $\cm{E}$ is a mean-zero noise tensor. In the existing literature, most theoretical analyses of tensor PCA focus on  Gaussian or sub-Gaussian noise \citep{richard2014statistical,zhang2019optimal,han2022optimal}.

In this subsection, we consider the setting where the noise tensor $\cm{E}$ is heavy-tailed. We propose estimating the low-rank signal $\cm{A}^*$ using robust gradient descent with truncated gradient estimators. The loss function for tensor PCA is given by
\begin{equation}
    \mathcal{L}(\cm{S},\bm{U}_1,\dots,\bm{U}_d;\cm{Y})=\|\cm{Y}-\cm{S}\times_{j=1}^d\bm{U}_j\|_\text{F}^2/2.
\end{equation}
The partial gradient with respect to the factor matrices and the core tensor are
\begin{equation}
    \begin{split}    
        \nabla_{\bm{U}_k}\mathcal{L}(\cm{S},\bm{U}_1,\dots,\bm{U}_d) & =  (\cm{S}\times_{j=1,j\neq k}^d\bm{U}_j^\top\bm{U}_j\times_{k}\bm{U}_k-\cm{Y}\times_{j=1,j\neq k}^d\bm{U}_j^\top)_{(k)}\cm{S}_{(k)}^\top,~~k=1,\dots,d,\\
        \text{and  }\nabla_{\scalebox{0.7}{\cm{S}}}\mathcal{L}(\cm{S},\bm{U}_1,\dots,\bm{U}_d) & =\cm{S}\times_{j=1}^d\bm{U}_j^\top\bm{U}_j-\cm{Y}\times_{j=1}^d\bm{U}_j^\top.
    \end{split}
\end{equation}

These gradients are computed via multilinear transformation of the observed tensor onto the subspaces spanned by the estimated factor matrices, specifically $\cm{Y}\times_{j=1,j\neq k}^d\bm{U}_j^\top$ and $\cm{Y}\times_{j=1}^d\bm{U}_j^\top$. Consequently, the statistical behavior of the gradients depend primarily on the projected noise components in these subspaces, rather than the ambient noise distribution. To ensure robustness in the presence of heavy-tailed noise, we impose a local $(1+\epsilon)$-th moment condition on the noise tensor $\cm{E}$, as formalized in the following assumption.
\begin{assumption}\label{asmp:PCA}
    For some $\epsilon\in(0,1]$ and $\delta\in[0,1]$, $\cm{E}$ has the local $(1+\epsilon)$-th moment
    \begin{equation}
        M_{e,1+\epsilon,\delta} = \max_{0\leq k\leq d}[\textup{LM}_k(\cm{E};1+\epsilon,\delta,\{\bm{U_j^*}\}_{j=1}^d)].
    \end{equation}
\end{assumption}

In constrast to many existing statistical analyses of tensor PCA, our method does not require the entries of the random noise $\cm{E}$ to be independent or idetically distributed. This is a key feature of our approach, as it allows us to handle more general noise structures, including those with dependencies and heavy tails.

For the estimator obtained by the robust gradient descent, denoted as $\cm{\widehat{A}}$, as well as the  estimation error $\textup{Err}(\widehat{\cm{S}},\widehat{\bm{U}}_1,\dots,\widehat{\bm{U}}_d)$, we have the following convergence guarantees.

\begin{theorem}\label{thm:PCA}
    For tensor PCA in \eqref{eq:PCA}, suppose Assumption \ref{asmp:PCA} holds with some $\epsilon\in(0,1]$ and $\delta\geq{\min}(\bar{\sigma}^{-1/(d+1)}\sqrt{\textup{Err}^{(0)}}+C\bar{\sigma}^{-1}d_\textup{eff}^{1/2}M_{e,1+\epsilon,\delta}^{1/(1+\epsilon)},1)$. If the truncation parameter $\tau$ satisfies $\tau\asymp\kappa^{2/\epsilon}\bar{\sigma}^{d/(d+1)}M_{e,1+\epsilon,\delta}^{1/(1+\epsilon)}$, the minimal signal strength $\underline{\sigma}$ satisfies 
    \begin{equation}\label{eq:SNR}
        \underline{\sigma}/M_{e,1+\epsilon,\delta}^{1/(1+\epsilon)}\gtrsim\sqrt{\bar{p}},
    \end{equation}
    and other conditions of $a$, $b$, and $\eta$ in Theorem \ref{thm:1} hold with $\alpha=\beta=1/2$, then with probability at least $1-C\exp(-C\bar{p})$, after sufficient iterations of Algorithm \ref{alg:1},
    \begin{equation}
        \textup{Err}(\widehat{\cm{S}},\widehat{\bm{U}}_1,\dots,\widehat{\bm{U}}_d) \lesssim \bar{\sigma}^{-2d/(d+1)}d_{\textup{eff}}M_{e,1+\epsilon,\delta}^{2/(1+\epsilon)}
    \end{equation}
    and
    \begin{equation}
        \|\cm{\widehat{A}}-\cm{A}^*\|_\textup{F}\lesssim d_{\textup{eff}}^{1/2}M_{e,1+\epsilon,\delta}^{1/(1+\epsilon)}.
    \end{equation}
\end{theorem}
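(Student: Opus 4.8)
The plan is to apply Theorem~\ref{thm:1} with $\alpha=\beta=1/2$, which reduces the problem to two tasks: verifying the RCG condition for the tensor PCA loss, and establishing the stability of the truncated gradient estimators in the sense of Definition~\ref{def:2}. For the RCG condition, note that for tensor PCA the population loss is $\mathbb{E}\|\cm{Y}-\cm{A}\|_\textup{F}^2/2 = \|\cm{A}-\cm{A}^*\|_\textup{F}^2/2 + \textup{const}$, so $\mathbb{E}[\nabla\overline{\mathcal{L}}(\cm{A};z_i)] = \cm{A}-\cm{A}^*$. Then $\langle\mathbb{E}[\nabla\overline{\mathcal{L}}(\cm{A};z_i)],\cm{A}-\cm{A}^*\rangle = \|\cm{A}-\cm{A}^*\|_\textup{F}^2 = \tfrac{1}{2}\|\cm{A}-\cm{A}^*\|_\textup{F}^2 + \tfrac{1}{2}\|\mathbb{E}\nabla\overline{\mathcal{L}}(\cm{A};z)\|_\textup{F}^2$, which is exactly the RCG condition with $\alpha=\beta=1$ (and hence also with $\alpha=\beta=1/2$ after absorbing constants, consistent with the theorem's hypothesis).

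The main work is the stability bound. The error between the truncated gradient and the population gradient splits into a truncation bias term and a centered stochastic deviation term. The population gradient with respect to, say, $\bm{U}_k$, evaluated at $\bm{F}$, equals $((\cm{A}-\cm{A}^*)\times_{j\neq k}\bm{U}_j^\top)_{(k)}\cm{S}_{(k)}^\top$; the raw (untruncated) sample gradient $\nabla_{\bm{U}_k}\mathcal{L}(\bm{F};\cm{Y})$ differs from this by the noise term $-(\cm{E}\times_{j\neq k}\bm{U}_j^\top)_{(k)}\cm{S}_{(k)}^\top$, which has mean zero. So I would write $\bm{G}_k(\bm{F};\tau)-\mathbb{E}[\nabla_{\bm{U}_k}\mathcal{L}(\bm{F};z_i)]$ as the truncated noise gradient, and bound it by standard truncated-mean arguments: for a random variable $W$ with $\mathbb{E}|W|^{1+\epsilon}\le M$, the truncated estimator $\textup{T}(W,\tau)$ has bias $O(M\tau^{-\epsilon})$ and, with a single sample (here $n=1$, the PCA case), fluctuation controlled by $\mathbb{E}[\textup{T}(W,\tau)^2]\le M\tau^{1-\epsilon}$. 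The key point is that the relevant entries are bilinear/multilinear contractions of $\cm{E}$ against columns of the $\bm{U}_j$'s and of $\cm{S}_{(k)}$. When $\bm{U}_j$ lies within radius $\delta$ of $\bm{U}_j^*$ (guaranteed along the iterates by the convergence result, provided the stated $\delta$ lower bound holds) and the factors are suitably scaled, these contractions fall into the directions controlled by $\textup{LM}_k(\cm{E};1+\epsilon,\delta,\{\bm{U}_j^*\})$, up to factors of $\|\cm{S}_{(k)}\|\asymp\bar\sigma$ and $b\asymp\bar\sigma^{1/(d+1)}$. A covering-number argument over the low-dimensional index set (size $d_\textup{eff}$) combined with a union bound over $\bar p$ coordinates per mode converts the entrywise control into the Frobenius-norm bound $\|\bm{G}_k-\mathbb{E}\nabla_{\bm{U}_k}\mathcal{L}\|_\textup{F}^2 \lesssim \phi\|\cm{A}-\cm{A}^*\|_\textup{F}^2 + \xi_k^2$ with $\xi_k^2\asymp\bar\sigma^{2d/(d+1)}d_\textup{eff}M_{e,1+\epsilon,\delta}^{2/(1+\epsilon)}$ after optimizing $\tau\asymp\bar\sigma^{d/(d+1)}M_{e,1+\epsilon,\delta}^{1/(1+\epsilon)}$ (the extra $\kappa^{2/\epsilon}$ factor in the stated $\tau$ accommodates the core-tensor gradient, where the scaling picks up condition-number dependence). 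The $\phi\|\cm{A}-\cm{A}^*\|_\textup{F}^2$ piece arises from the deterministic part of the gradient difference plus the cross terms, and one checks $\phi\lesssim\bar\sigma^{2d/(d+1)}$, which meets the requirement $\phi\lesssim\alpha^2\kappa^{-4}\bar\sigma^{2d/(d+1)}$ since $\kappa$ is a constant here only if we are careful — in fact one must track $\kappa$ and the stated $\tau\asymp\kappa^{2/\epsilon}\bar\sigma^{d/(d+1)}M_{e,1+\epsilon,\delta}^{1/(1+\epsilon)}$ is chosen precisely so that $\phi$ is small enough.

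Finally, I would feed $\xi_k^2\asymp\bar\sigma^{2d/(d+1)}d_\textup{eff}M_{e,1+\epsilon,\delta}^{2/(1+\epsilon)}$ into the conclusion of Theorem~\ref{thm:1}: the statistical-error term is $C\alpha^{-2}\bar\sigma^{-4d/(d+1)}\kappa^4\sum_{k=0}^d\xi_k^2 \asymp \bar\sigma^{-4d/(d+1)}\cdot\bar\sigma^{2d/(d+1)}d_\textup{eff}M_{e,1+\epsilon,\delta}^{2/(1+\epsilon)} = \bar\sigma^{-2d/(d+1)}d_\textup{eff}M_{e,1+\epsilon,\delta}^{2/(1+\epsilon)}$, which is the claimed bound on $\textup{Err}(\widehat{\cm{S}},\widehat{\bm{U}}_1,\dots,\widehat{\bm{U}}_d)$; the bound on $\|\cm{\widehat{A}}-\cm{A}^*\|_\textup{F}$ follows from the companion inequality in Theorem~\ref{thm:1} (or directly from $\|\cm{A}-\cm{A}^*\|_\textup{F}\lesssim\bar\sigma^{d/(d+1)}\sqrt{\textup{Err}}$). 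The signal-strength condition $\underline\sigma/M_{e,1+\epsilon,\delta}^{1/(1+\epsilon)}\gtrsim\sqrt{\bar p}$ is what guarantees both that the initialization requirement $\textup{Err}(\bm{F}^{(0)})\lesssim\bar\sigma^{2/(d+1)}\kappa^{-2}$ can be met (via a spectral/HOOI initializer applied to a preliminary truncated estimate of $\cm{Y}$) and that the statistical error is small enough to keep the iterates inside the radius-$\delta$ neighborhood where the local moment condition applies, closing the induction implicit in the iteration. The main obstacle is the stability bound: carefully showing that only the \emph{local} moment $M_{e,1+\epsilon,\delta}$ enters each $\xi_k$, which requires exploiting that the $\bm{U}_j$-columns stay $\delta$-close to the true factor subspaces throughout the trajectory, and handling the core-tensor gradient's extra condition-number dependence — everything else is a bookkeeping exercise in plugging into Theorem~\ref{thm:1}.
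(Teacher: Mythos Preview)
Your proposal is correct and follows essentially the same route as the paper: verify the RCG condition directly from the quadratic population loss, decompose the robust-gradient error into a truncation-bias piece plus a bounded-variable deviation controlled via Bernstein's inequality together with a union/covering argument over the $d_{\textup{eff}}$-sized index set, and then plug the resulting stability constants $\phi,\xi_k$ into Theorem~\ref{thm:1}. One small point to tighten: the $\kappa^{2/\epsilon}$ in $\tau$ is not there to make $\phi$ small (in the PCA case the paper bounds the $T_{k,3},T_{k,4}$ terms deterministically by $\bar\sigma^{2d/(d+1)}\|\cm{A}-\cm{A}^*\|_\textup{F}^2$, giving $\phi$ with no $\kappa$ dependence) but rather to force the bias $M_{e,1+\epsilon,\delta}\tau_k^{-\epsilon}\asymp M_{e,1+\epsilon,\delta}^{1/(1+\epsilon)}\kappa^{-2}$, so that each $\xi_k^2$ carries a $\kappa^{-4}$ factor and the $\kappa^4$ multiplier in Theorem~\ref{thm:1}'s statistical term cancels---your final display silently drops this $\kappa^4$, which is only legitimate once that cancellation is in place.
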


Under the local $(1+\epsilon)$-th moment condition for the noise tensor $\cm{E}$, the convergence rate of the proposed robust gradient descent method is shown to be comparable to that of vanilla gradient descent under Gaussian noise \citep{ZX18}, and achieves minimax optimality \citep{han2022optimal}. Specically, when $\epsilon=1$, the signal-to-noise ratio (SNR) requirement in \eqref{eq:SNR} is identical to the SNR condition under the sub-Gaussian noise setting \citep{ZX18}. This demonstrates that our method is capable of effectively handling heavy-tailed noise, while still achieving optimal statistical performance. Furthermore, similar to tensor linear regression and logistic regression, if the signal strength satisfies $\bar{\sigma}\gtrsim\sqrt{\bar{p}}$ and the initial error is sufficiently small (i.e., $\textup{Err}^{(0)}<\bar{\sigma}^{2/(d+1)}$), then the local radius $\delta$ remains below one all along the iterations. This demonstrates that our robust gradient framework is not limited to supervised learning, but also enables reliable unsupervised tensor analysis under minimal assumptions.

\begin{remark}
    Our method accommodates noise tensors $\cm{E}$ with only a $(1+\epsilon)$-th moment, relaxing the common sub-Gaussian assumption. Although the noise may be correlated, only the projection onto local low-dimensional regions, characterized by $M_{e,1+\epsilon,\delta}$, affects estimation. Furthermore, $M_{e,1+\epsilon,\delta}$ may grow unbounded, allowing for large noise magnitudes in localized regions, provided the signal-to-noise ratio satisfies $\underline{\sigma}/M_{e,1+\epsilon,\delta}^{1/2}\gtrsim\sqrt{\bar{p}}$ to ensure consistent estimation.
\end{remark}

\section{Simulation Experiments}\label{sec:4}

In this section, we conduct four simulation experiments to validate the theoretical insights from Section \ref{sec:3} and to empirically demonstrate the advantages of the proposed robust gradient descent (RGD) method over existing approaches. We focus on the tensor linear regression as a primary case study in the main text, with extensions to tensor logistic regression and PCA provided in Appendix \ref{append:numerical} of the supplementary materials. We consider two tensor linear regression models.
\begin{equation}\label{eq:sim_model1}
    \text{Model I:}\quad y_i = \langle\cm{A}^*,\cm{X}_i\rangle + e_i,\quad i=1,\dots,n,
\end{equation}
where $\cm{X}_i\in\mathbb{R}^{10\times 10\times 10}$ is a tensor covariate, $y_i$ and $e_i$ are scalar response and noise.
\begin{equation}\label{eq:sim_model2}
    \text{Model II:}\quad \bm{y}_i = \langle\cm{A}^*,\bm{X}_i\rangle + \bm{e}_i,\quad i=1,\dots,n,
\end{equation}
where $\bm{X}_i\in\mathbb{R}^{10\times 10}$ is the matrix covariate , $\bm{y}_i,\bm{e}_i\in\mathbb{R}^{10}$ are the vector response and noise. In both models, we set the coefficient tensor as $\cm{A}^*=\sqrt{10}\cdot\bm{1}_{10}\circ\bm{1}_{10}\circ\bm{1}_{10}=\cm{S}^*\times_{j=1}^3\bm{U}_j^*$.

The first three experiments are designed to verify how the tail behaviors of the covariates and noise, quantified by $\lambda$ and $\epsilon$, as well as the local moment, are related to the computational and statistical performance of the proposed method. The last experiment includes a comparative study between RGD and competing methods, including vanilla gradient descent (VGD) and Huber regression (HUB) in \eqref{eq:huber}, to assess robustness in real-world settings. 

\subsection{Experiment 1: Dependence on Tail Behavior of Covariates} 

In both models, we consider that all entries in $\cm{X}_i$ (or $\bm{X}_i$) are independent and follow the Student's $t_{2+2\lambda}$ distibution, and all entries in $\bm{e}_i$ (or $e_i$) are independent and follow the $t_{1.5}$ distribution. We vary $\lambda\in\{0.1,0.4,0.7,1.0,1.3,1.6\}$ and set the sample size as $n=10\times 2^{m}$, where $m\in\{1,2,3,4,5\}$. For the generated data, we apply the proposed RGD method with initial values set to the ground truth, $a=b=1$, step size $\eta=10^{-3}$, truncation threshold $\tau=\sqrt{n/\log(\bar{p})}$, and number of iterations $T=300$. 

In this experiment, we aim to verify whether the RGD iterates converge and to explore the relationship between the emprical convergence rate and $\lambda$. According to Theorem \ref{thm:linearregression}, if the iterates converge, then $\|\cm{A}^{(t)}-\cm{A}^*\|_\text{F}^2$ lie in a region with radius smaller than $M_{\text{eff},2,\delta}^{1/2}d_{\text{eff}}\log(\bar{p})/n$. To empirically assess convergence, we compute the sample standard deviation of $\|\cm{A}^{(t)}-\cm{A}^*\|_\text{F}^2$ over iterations $t=251,\dots,300$, and label the algorithm as having converged only if this quantity is smaller than $\bar{p}\log(\bar{p})/(100n)$. 

For each pair of $\lambda$ and $m$, we replicate the entire procedure  200 times and summarize the proportion of replications that achieve convergence versus $m$ in Figure \ref{fig:Exp1}. The results confirm that the smaller value of $\lambda$, corresponding to heavy-tailed covariates, leads to a greater sample size requirement for convergence. However, for $\lambda\geq1$, the convergence patterns across different $m$ are similar, which is consistent with the theoretical sample size requirement derived in Theorem \ref{thm:linearregression}.

\subsection{Experiment 2: Dependence on Tail Behavior of Noise} 

In both models, we consider that all entries of the covariates follow either a standard Gaussian distribution or a $t_3$ distribution, and all entries of the noise follow a $t_{1+\epsilon}$ distribution. We vary $\epsilon\in\{0.1,0.4,0.7,1.0,1.3,1.6\}$ and set the sample size as $n=200\times 2^{m}$, where $m\in\{1,2,3,4,5\}$. For the generated data, we apply the proposed RGD method with the same tuning parameters as in Experiment 1, except that the truncation threshold is adjusted to $\tau=(n/\log(\bar{p}))^{1/(1+\epsilon_{\text{eff}})}$, where $\epsilon_{\text{eff}}=\min(1,\epsilon)$. According to Theorem \ref{thm:linearregression}, after a sufficent number of iterations, $-\log(\|\cm{A}^{(T)}-\cm{A}^*\|_\text{F}^2) = C(\bar{p},\epsilon) + C[\epsilon_{\text{eff}}/(1+\epsilon_{\text{eff}})]m$, where $C(\bar{p},\epsilon)$ is a constant depending on $\bar{p}$ and $\epsilon$. 

Therefore, for each pair of $\epsilon$ and $m$, we replicate the procedure 200 times and summarize the average of negative log errors versus $m$ in Figure \ref{fig:Exp2}. For each value of $\epsilon$, the average negative log errors exhibit a linear relationship with respect to $m$. Notably, the slope of this linear relationship shows a smooth transition: when $\epsilon\in(0,1)$, the slope increases as $\epsilon$ increases; when $\epsilon\geq1$, the slopes stablize. These empirical findings verify the smooth transition in statistical convergence rate as stated in Theorem \ref{thm:linearregression}.

\subsection{Experiment 3: Dependence on Local Moment Conditions}

We consider the vectorized covariate $\text{vec}(\cm{X}_i)$ (or $\text{vec}(\bm{X}_i)$) follows a multivariate Gaussian distribution with mean zero and covariance $(\otimes_{j=1}^{d_0}\bm{\Sigma}_\theta)$, where $\bm{\Sigma}_\theta=0.5\bm{I}_{10}+0.5\bm{v}_\theta\bm{v}_\theta^\top$, where $\bm{v}_\theta=\sin(\theta)\bm{1}_{10} + \cos(\theta)\bm{w}$ and $\bm{w}=(1,-1,1,-1,\dots,1,-1)^\top\in\mathbb{R}^{10}$. For Model I, the noise term $e_i$ follows a standard Gaussian distribution. For Model II, the vectorized noise $\text{vec}(\bm{e}_i)$ follows a multivariate Gaussian distribution with mean zero and covariance $\bm{\Sigma}_\theta$. In this setup, the entries in covariates or noise are dependent, and the dependency is governed by the angle parameter $\theta\in[0,\pi/2]$. Specifically, when $\theta=\pi/2$, the vector $\bm{v}_\theta$ aligns with $\bm{1}_{10}$, which coincides with the true factor directions $\bm{U}_1^*=\bm{U}_2^*=\bm{U}_3^*$, resulting in a large local moment condition. When $\theta=0$, the correlation direction $\bm{v}_\theta=\bm{w}$ is orthogonal to the true factors, leading to a much smaller local moment. Thus, in this experiment, the local moment of the data varies with $\theta$, while the global moment remains unchanged. The details of local moment computing are relegated in the supplementary materials.

We consider $\theta=\theta_0\pi/8$ with $\theta_0\in\{0,1,2,3,4\}$ and set $n\in\{300,400,500,600,700\}$. For each pair of $\theta_0$ and $n$, we replicate the procedure 200 times and summarize the average of $\|\cm{A}^{(T)}-\cm{A}^*\|_\text{F}^2$ versus $n$ in Figure \ref{fig:Exp3}. As $\theta_0$ increases, the local moments increase, and the average estimation errors increase accordingly, further validating the importance of leveraging local moment conditions as emphasized in our theoretical analysis.

\subsection{Experiment 4: Comparison with Other Methods}

For both models, four distributional cases are adopted: (1) $N(0,1)$ covariate and $N(0,1)$ noise; (2) $N(0,1)$ covariate and $t_{1.2}$ noise; (3) $t_{2.1}$ covariate and $N(0,1)$ noise; and (4) $t_{2.1}$ covariate and $t_{1.2}$ noise. All entries in covariates and noise are independent, and we set $n=500$. We apply the proposed RGD algorithm, as well as the vanilla gradient descent (VGD) and adaptive Huber regression (HUB) as competitors, to the data generated from each model. For all methods, intial values are obtained in a data-driven manner as suggested in Appendix \ref{append:init} of the supplentary materials. We set $a=b=1$, $\eta=10^{-3}$, $T=300$, and the truncation parameter $\tau$ is selected via five-fold cross-validation. 

For each model and distributional setting, we replicate the procedure 200 times and summarize the average of $\log(\|\cm{A}^{(T)}-\cm{A}^*\|_\text{F}^2)$, as well as their upper and lower quartiles, for the above four cases in Figure \ref{fig:Exp4}. When both the covariate and noise are light-tailed, the performances of three estimation methods are nearly identical. However, in heavy-tailed cases, the performance of VGD deteriorates significantly, with estimation errors much larger than those of the other two methods. Overall, the RGD method consistently yields the smallest estimation errors across all three methods. These numerical findings confirm the robustness and efficiency of the proposed method in handling heavy-tailed data.

\section{Real Data Example: Chest CT Images}\label{sec:5}

In this section, we apply the proposed robust gradient descent (RGD) estimation approach to the publicly available COVID-CT dataset \citep{yang2020covid} (the data can be downloaded from \href{ https://github.com/UCSD-AI4H/COVID-CT}{ https://github.com/UCSD-AI4H/COVID-CT}), which consists of chest CT scans collected for COVID-19 diagnosis. The dataset includes 317 COVID-19 positive scans and 397 negative scans, sourced from four open-access databases. Each scan is a $150 \times 150$ greyscale image with a binary label indicating the disease status. 

Medical imaging data such as CT scans often exhibit non-Gaussian, heavy-tailed noise due to variability in imaging conditions, patient anatomy, and disease manifestation. This is supported by the empirical kurtosis analysis shown in Figure \ref{fig:CT_histogram}, which displays the distribution of pixel-level kurtosis values for COVID and non-COVID scans. The high kurtosis observed in both groups indicates substantial deviations from Gaussianity, suggesting that traditional methods relying on light-tailed assumptions may be suboptimal for this task.

To classify COVID-positive scans based on their visual characteristics, we employ a two-dimensional low-rank tensor logistic regression model $(d=2, p_1=p_2=150)$. To balance model flexibility and robustness, we impose a low-rank structure with Tucker ranks $r_1=r_2=5$. The rank selection was guided by preliminary analysis of the singular value spectra. We randomly partition the data into a training set (200 positive and 250 negative scans) and a test set (117 positive and 147 negative scans). Using this split, we compare the performance of the proposed robust gradient descent (RGD) algorithm with that of vanilla gradient descent (VGD), which corresponds to using untruncated gradients. Both methods are used to estimate the low-rank tensor logistic model parameters.

Using each estimation method, we classify the testing data into four categories: true positive (TP), false positive (FP), true negative (TN), and false negative (FN). The performance metrics used for evaluation include: precision rate: $P = \text{TP}/(\text{TP} + \text{FP})$; recall rate: $R = \text{TP}/(\text{TP} + \text{FN})$; and F1 score: $F_1 = 2/(P^{-1} + R^{-1})$.
The precision, recall, and F1 scores for the RGD method are reported in Table \ref{tbl:1}, alongside the performance of the VGD method as a benchmark.

The results, summarized in Table \ref{tbl:1}, demonstrate that the RGD method significantly outperforms VGD across all three metrics. In particular, RGD achieves a precision of 0.954, recall of 0.880, and F1 score of 0.916, compared to VGD’s precision of 0.898, recall of 0.829, and F1 score of 0.862. These improvements indicate that robust gradient descent leads to more reliable and stable inference, particularly in the presence of heavy-tailed noise and potential outliers in the imaging data.

\begin{table}[!htp]
    \caption{Classification performance of VGD and RGD on chest CT images}
    \label{tbl:1}
    \begin{center}
    \begin{tabular}{c|ccc}
        \toprule
        Method & Precision & Recall & F1 Score\\
        \midrule
        VGD & 0.898 & 0.829 & 0.862 \\
        RGD & \textbf{0.954} & \textbf{0.880} & \textbf{0.916} \\
        \bottomrule
    \end{tabular}
    \end{center}
\end{table}

These findings highlight the practical advantages of the proposed robust tensor estimation framework, particularly in real-world applications involving noisy, high-dimensional, and potentially heavy-tailed data. The ability of RGD to maintain high classification performance in the presence of distributional uncertainties underscores its value for medical imaging diagnostics and other domains where robustness is critical.

\section{Conclusion and Discussion}
\label{sec:6}

We propose a unified and computationally efficient framework for robust tensor estimation, based on gradient descent with entrywise gradient truncation. By stabilizing the gradient updates, rather than modifying the loss or preprocessing the data, we achieve distributional robustness under heavy-tailed noise and covariates, while maintaining statistical optimality and computational scalability.

Applied to tensor linear regression, logistic regression, and PCA, our method attains minimax-optimal error rates under mild local moment conditions, without requiring sub-Gaussian assumptions. The approach is flexible and can incorporate alternative robust gradient mechanisms, such as median-of-means or rank-based estimators. It is also applicable to broader contamination models and settings with structured outliers.

Though our method achieves robust estimation under relaxed moment conditions, including covariates with $(2+2\lambda)$-th moments and noise with $(1+\epsilon)$-th moments, establishing minimax optimality in this regime remains challenging. Due to the inherent difficulty in deriving tight lower bounds for tensor estimation under such heavy-tailed covariate assumptions, we leave the question of minimax optimality in this setting as an important direction for future research.

\begin{figure}[!htp]
    \begin{centering}
    \includegraphics[width=0.9\textwidth]{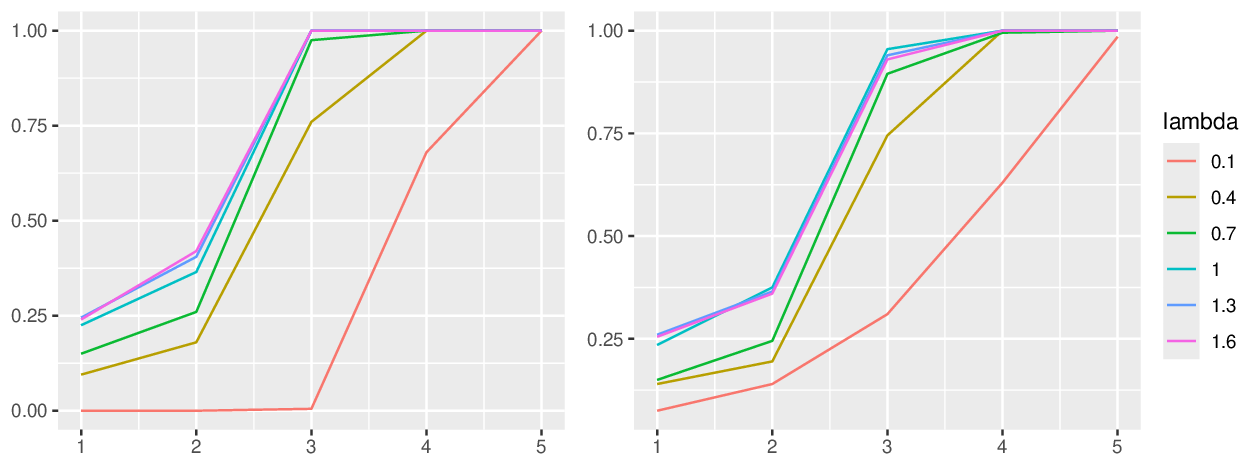}
    \vspace{-0.2cm}
    \caption{Average convergence proportion (y-axis) vs $m$ (x-axis) with varying $\lambda$ for Model I (left panel) and Model II (right panel) in Experiment 1}
    \label{fig:Exp1}
    \end{centering}
\end{figure}

\begin{figure}[!htp]
    \begin{centering}
    \includegraphics[width=0.9\textwidth]{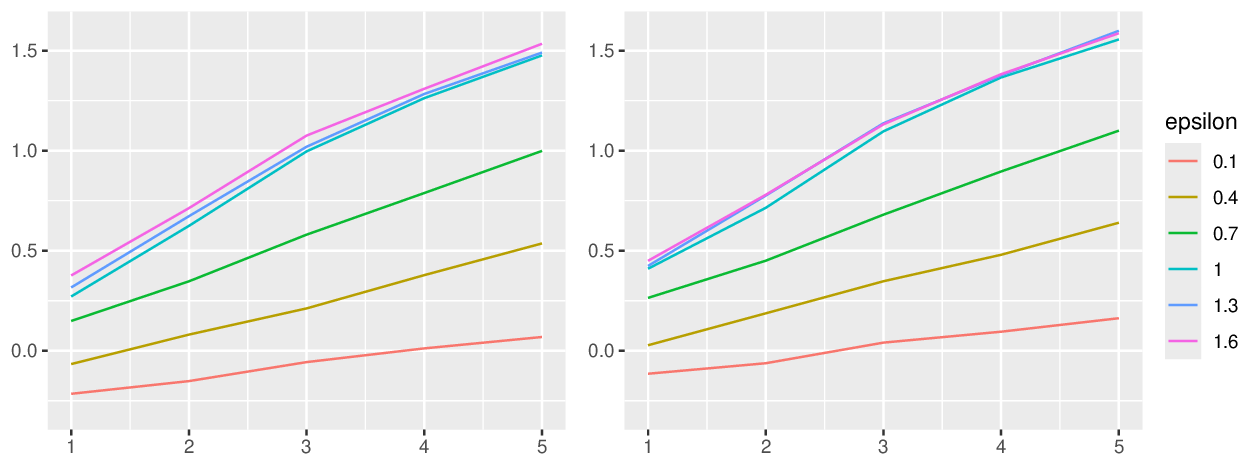}
    \vspace{-0.2cm}
    \caption{Average $-\log(\|\cm{\widehat{A}}-\cm{A}^*\|_\text{F}^2)$ (y-axis) vs $m$ (x-axis) with varying $\epsilon$ for Model I (left panel) and Model II (right panel) in Experiment 2}
    \label{fig:Exp2}
    \end{centering}
\end{figure}

\begin{figure}[!htp]
    \begin{centering}
    \includegraphics[width=0.9\textwidth]{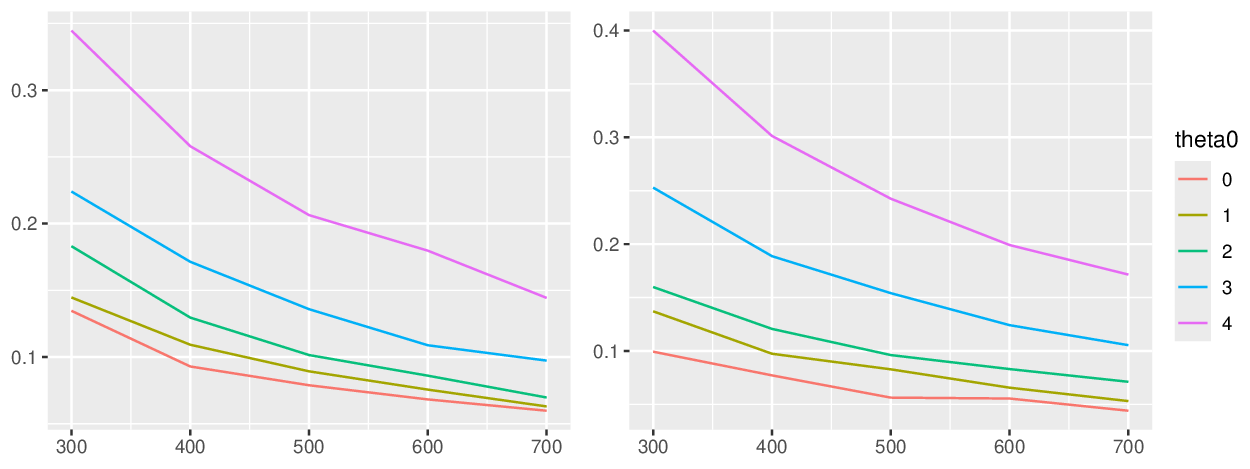}
    \vspace{-0.2cm}
    \caption{Average $\|\cm{\widehat{A}}-\cm{A}^*\|_\text{F}$ (y-axis) vs $n$ (x-axis) with varying $\theta_0$ for Model I (left panel) and Model II (right panel) in Experiment 3}
    \label{fig:Exp3}
    \end{centering}
\end{figure}

\begin{figure}[!htp]
    \begin{centering}
    \includegraphics[width=0.95\textwidth]{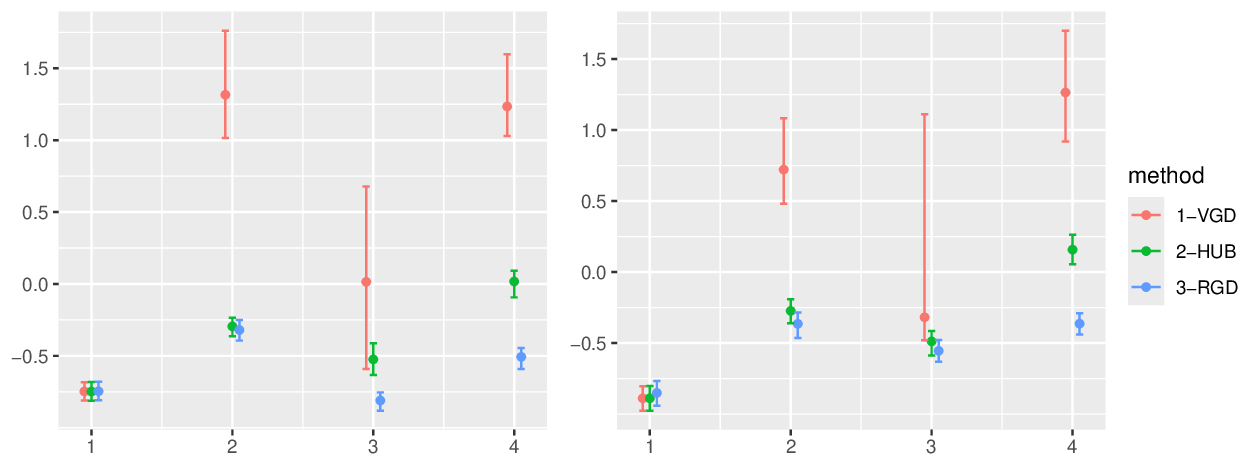}
    \vspace{-0.2cm}
    \caption{Average $\log(\|\cm{\widehat{A}}-\cm{A}^*\|_\text{F}^2)$ (y-axis) in different distributional cases (x-axis) by different methods for Model I (left panel) and Model II (right panel) in Experiment 4}
    \label{fig:Exp4}
    \end{centering}
\end{figure}

\begin{figure}[!htp]
    \begin{centering}
    \includegraphics[width=0.49\textwidth]{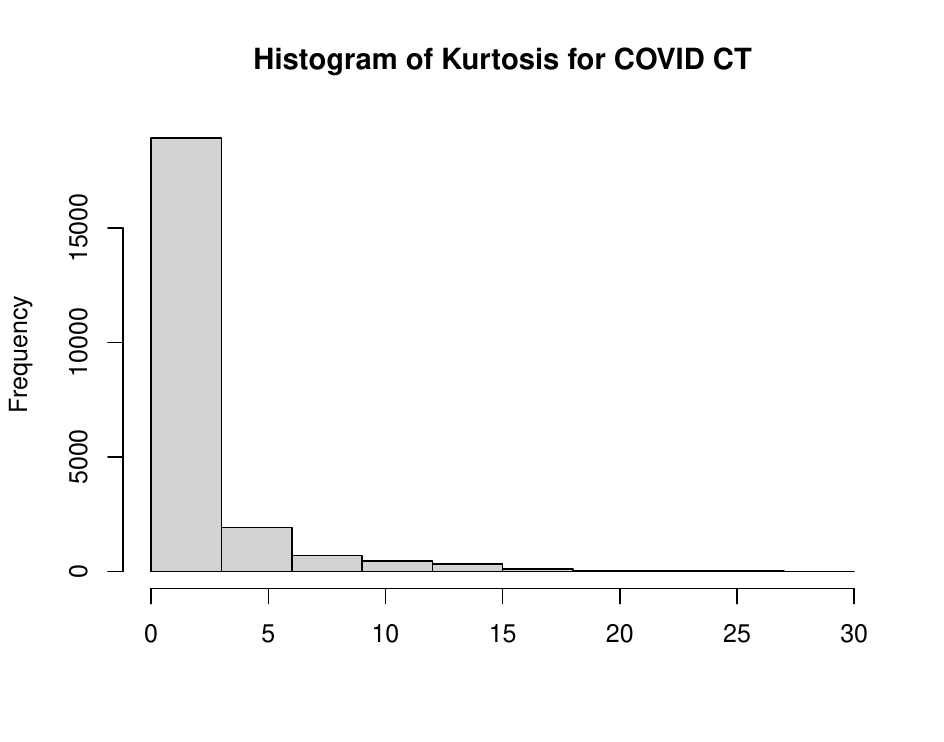}
    \includegraphics[width=0.49\textwidth]{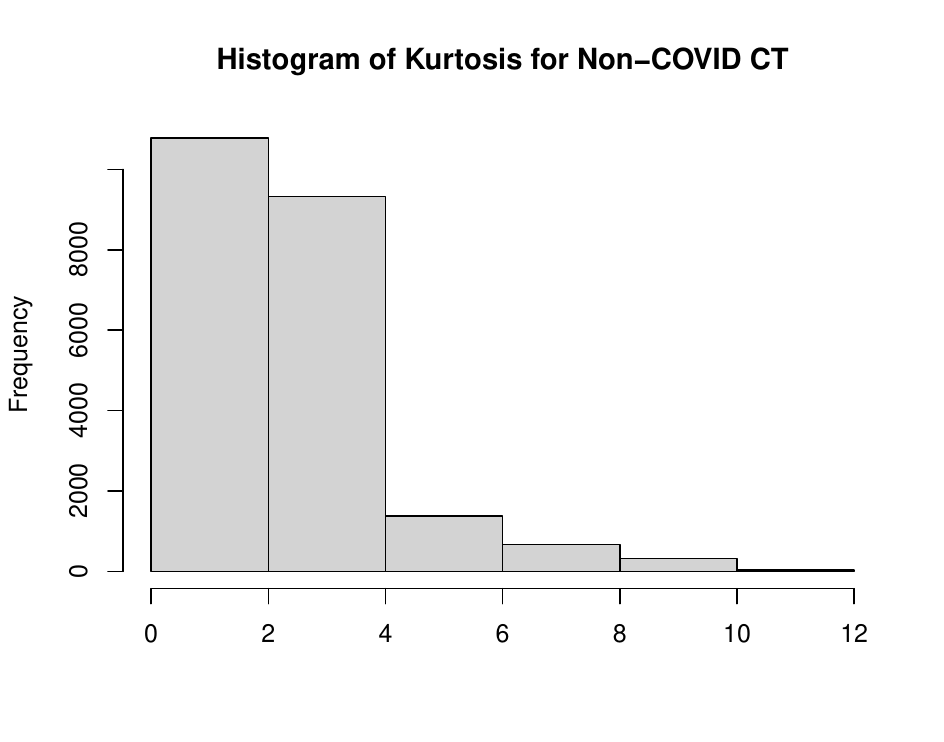}
    \vspace{-0.5cm}
    \caption{Histograms of kurtosis for COVID and non-COVID CT image pixels}
    \label{fig:CT_histogram}
    \end{centering}
\end{figure}

{

\bibliography{mybib.bib}

}

\newpage

\begin{center}
    \textbf{Supplementary Materials for ``Robust Gradient Descent Estimation for Tensor Models under Heavy-Tailed Distributions''}
\end{center}

This supplementary material provides all technical proofs of the theoretical results in the main article, as well as some discussions, examples, implementation details, and additionals numerical results. Specifically, the tensor algebra and notations are described in Appendix \ref{append:tensor_algebra_and_notations}. Discussions and examples of local moment conditions are provided in Appendix \ref{append:local_moments}. Computational and statistical analysis, particularlly the proofs of Theorems \ref{thm:1}-\ref{thm:PCA} are given in Appendices \ref{append:convergence} and \ref{append:B}. Initialization methods, as well as their theoretical guarantees, are given in Appendix \ref{append:init}. Additional simulation experiments and results, for tensor logistic regression and tensor PCA, are given in Appendix \ref{append:numerical}.

\begin{appendix}

\section{Tensor Algebra and Notations}
\label{append:tensor_algebra_and_notations}

Tensors are multi-dimensional arrays that generalize matrices to higher-order data. A $d$-th order tensor is represented as $\cm{X} \in \mathbb{R}^{p_1 \times p_2 \times \cdots \times p_d}$, where $p_k$ is the dimension along the $k$-th mode. In this article, we adopt the following notation:

\begin{itemize}
  \item \textbf{Vectors}: denoted by boldface lowercase letters, e.g., $\bm{x} \in \mathbb{R}^p$,
  \item \textbf{Matrices}: denoted by boldface uppercase letters, e.g., $\bm{X} \in \mathbb{R}^{p \times q}$,
  \item \textbf{Tensors}: denoted by boldface Euler letters, e.g., $\cm{X} \in \mathbb{R}^{p_1 \times \cdots \times p_d}$.
\end{itemize}

We refer readers to \citet{kolda2009tensor} for a comprehensive review of tensor operations and decompositions.

\paragraph{Mode-$k$ matricization}

The \textbf{mode-$k$ matricization} (or \textbf{unfolding}) of a tensor $\cm{X} \in \mathbb{R}^{p_1 \times \cdots \times p_d}$ is a matrix obtained by rearranging the fibers of $\cm{X}$ along the $k$-th mode into columns. The result is denoted by $\cm{X}_{(k)} \in \mathbb{R}^{p_k \times p_{-k}}$, where $p_{-k} = \prod_{\ell=1, \ell \neq k}^d p_\ell$.

Each column of $\cm{X}_{(k)}$ corresponds to a \textit{fiber} of $\cm{X}$ along mode $k$, stacked in lexicographic order. The element $(i_1, i_2, \ldots, i_d)$ of $\cm{X}$ is mapped to the $(i_k, j)$-th entry of $\cm{X}_{(k)}$, where the index $j$ is given by:
\begin{equation}
  j = 1 + \sum_{\substack{s=1 \\ s 
\neq k}}^d (i_s - 1) \cdot J_s^{(k)},
  \quad \text{with} \quad
  J_s^{(k)} = \prod_{\substack{\ell=1 \\ \ell < s \\ \ell 
\neq k}}^d p_\ell, \quad \text{and} \quad p_0 = 1.
\end{equation}
This operation is central to defining mode-$k$ products and understanding Tucker decompositions.

\paragraph{Mode-$k$ Product}

For a tensor $\cm{X} \in \mathbb{R}^{p_1 \times \cdots \times p_d}$ and a matrix $\bm{Y} \in \mathbb{R}^{q_k \times p_k}$, the \textbf{mode-$k$ product}, denoted $\cm{X} \times_k \bm{Y}$, results in a new tensor of size $p_1 \times \cdots \times p_{k-1} \times q_k \times p_{k+1} \times \cdots \times p_d$. Its entries are given by:
\begin{equation}
  \left( \cm{X} \times_k \bm{Y} \right)_{i_1 \cdots i_{k-1} j i_{k+1} \cdots i_d}
  = \sum_{i_k=1}^{p_k} \cm{X}_{i_1 \cdots i_k \cdots i_d} \cdot \bm{Y}_{j i_k},
  \quad \text{for all } j = 1, \ldots, q_k.
\end{equation}
This operation recombines the tensor along mode $k$ with the matrix $\bm{Y}$.

\paragraph{Generalized Inner Product}

For two tensors $\cm{X} \in \mathbb{R}^{p_1 \times \cdots \times p_d}$ and $\cm{Y} \in \mathbb{R}^{p_1 \times \cdots \times p_{d_0}}$ where $d \geq d_0$, their \textbf{generalized inner product} is defined as:
\begin{equation}
  \langle \cm{X}, \cm{Y} \rangle
  = \sum_{i_1=1}^{p_1} \cdots \sum_{i_{d_0}=1}^{p_{d_0}} \cm{X}_{i_1 \cdots i_{d_0} i_{d_0+1} \cdots i_d} \cdot \cm{Y}_{i_1 \cdots i_{d_0}},
\end{equation}
and it results in a $(d - d_0)$-th order tensor with entries indexed by $(i_{d_0+1}, \ldots, i_d)$.
In the special case where $d = d_0$, the generalized inner product reduces to the standard \textbf{Frobenius inner product}, and we define the \textbf{Frobenius norm} of $\cm{X}$ as:
\begin{equation}
  \| \cm{X} \|_{\text{F}} = \sqrt{ \langle \cm{X}, \cm{X} \rangle }.
\end{equation}

\paragraph{Outer Product}

The \textbf{outer product} of two tensors $\cm{X} \in \mathbb{R}^{p_1 \times \cdots \times p_{d_1}}$ and $\cm{Y} \in \mathbb{R}^{q_1 \times \cdots \times q_{d_2}}$ is denoted by $\cm{X} \circ \cm{Y}$ and results in a tensor of order $d_1 + d_2$ with entries:
\begin{equation}
  (\cm{X} \circ \cm{Y})_{i_1 \cdots i_{d_1} j_1 \cdots j_{d_2}}
  = \cm{X}_{i_1 \cdots i_{d_1}} \cdot \cm{Y}_{j_1 \cdots j_{d_2}},
  \quad \text{for all indices } i_k, j_\ell.
\end{equation}

\paragraph{Tucker Decomposition and Tucker Ranks}

The \textbf{Tucker rank} of a tensor $\cm{X} \in \mathbb{R}^{p_1 \times \cdots \times p_d}$ is a vector $(r_1, \ldots, r_d)$, where each $r_k$ is the rank of the mode-$k$ matricization $\cm{X}_{(k)}$, i.e.,
\begin{equation}
  r_k = \text{rank}(\cm{X}_{(k)}) \in \mathbb{N}, \quad \text{for } k = 1, \ldots, d.
\end{equation}

While Tucker ranks are defined via matricization ranks, they correspond to the number of components retained in the \textbf{Tucker decomposition} of $\cm{X}$. If $\cm{X}$ has Tucker ranks $(r_1, \ldots, r_d)$, it can be written as:
\begin{equation}
  \cm{X} = \cm{Y} \times_{j=1}^d \bm{Y}_j
  = \cm{Y} \times_1 \bm{Y}_1 \times_2 \bm{Y}_2 \cdots \times_d \bm{Y}_d,
\end{equation}
where $\bm{Y}_j \in \mathbb{R}^{p_j \times r_j}$ is the factor matrix for mode $j$, and $\cm{Y} \in \mathbb{R}^{r_1 \times \cdots \times r_d}$ is the core tensor.

The mode-$k$ matricization of $\cm{X}$ under the Tucker decomposition can be expressed as:
\begin{equation}
  \cm{X}_{(k)} = \cm{Y}_{(k)} \left( \otimes_{j=1,j\neq k}^d \bm{Y}_j \right)^\top,
\end{equation}
where $\otimes$ is the Kronecker product, and the product is taken over all modes except $k$. This structure is central to our algorithm and theoretical analysis, as it allows us to work with low-rank representations in high-dimensional spaces.

\newpage

\section{Local Moment Conditions of Tensors}
\label{append:local_moments}

In this appendix, we provide a detailed discussion of the \textit{local moment conditions} for tensors, which play a central role in the theoretical analysis of our robust tensor estimation framework. These conditions generalize traditional moment assumptions by focusing on the \textit{tail behavior of tensor components in low-dimensional subspaces} aligned with the underlying tensor structure. Such localization enables us to establish statistical guarantees under much weaker conditions than those required by global moment assumptions, particularly in the presence of heavy-tailed noise or covariates.

\subsection{Motivation: Why Local Moments?}

In our robust estimation framework, the gradient of the loss with respect to the full tensor $\cm{A}$ is projected onto the subspaces defined by the factor matrices $\{\bm{U}_j\}_{j=1}^d$ via multilinear projections. Specifically, the partial gradients used for updating each factor depend only on the projections:
\[
\cm{A} \times_{j=1}^d \bm{U}_j^\top,
\]
rather than the full tensor itself. Consequently, the statistical behavior of the gradients—and hence the convergence and risk properties of our algorithm—is governed by the \textit{distribution of these projected components}, not the ambient distribution of the full data.

When the data or noise exhibit heavy tails, global moment conditions (e.g., boundedness, sub-Gaussianity, or even finite fourth moments) may be too restrictive or entirely unavailable. However, if the projections of the data onto the relevant low-dimensional subspaces (induced by the true or approximate factor directions) have \textit{better-behaved tails}, then robust estimation remains feasible. This motivates the introduction of local moment conditions, which restrict attention to the distributions of tensor components within neighborhoods of the true factor subspaces.

\subsection{Definitions and Properties}

Let $\cm{A} \in \mathbb{R}^{p_1 \times \cdots \times p_d}$ be a tensor, and let $\{\bm{U}_j^* \in \mathbb{R}^{p_j \times r_j}\}_{j=1}^d$ denote the true (or target) factor matrices that define the low-rank structure of $\cm{A}^* = \cm{S}^* \times_{j=1}^d \bm{U}_j^*$. For each $j$, define the projection operator onto the column space of $\bm{U}_j^*$ as:
\[
\mathcal{P}_{\bm{U}_j^*} = \bm{U}_j^* (\bm{U}_j^{* \top} \bm{U}_j^*)^\dagger \bm{U}_j^{* \top},
\]
where $(\cdot)^\dagger$ denotes the Moore--Penrose pseudo-inverse. The angular deviation of a unit vector $\bm{v} \in \mathbb{R}^{p_j}$ from the subspace $\text{col}(\bm{U}_j^*)$ is measured by:
\[
\sin \arccos \left( \|\mathcal{P}_{\bm{U}_j^*} \bm{v}\|_2 \right).
\]

For a tolerance parameter $\delta \in [0,1]$, we define the set of admissible directions for mode $j$ as:
\begin{equation}
  \mathcal{V}(\bm{U}_j^*, \delta)
  =
  \left\{
    \bm{v} \in \mathbb{R}^{p_j} \;\bigg|\; \|\bm{v}\|_2 = 1 \;\text{and}\; \sin \arccos \left( \|\mathcal{P}_{\bm{U}_j^*} \bm{v}\|_2 \right) \leq \delta
  \right\}.
\end{equation}
Intuitively, $\mathcal{V}(\bm{U}_j^*, \delta)$ consists of all unit vectors that lie within an angle $\arcsin(\delta)$ of the column space of $\bm{U}_j^*$; smaller $\delta$ corresponds to stricter proximity to the true factor subspace.

With this, we define two types of local moment conditions:

\begin{definition}[Local Moments of Tensors]
  Let $\cm{T} \in \mathbb{R}^{p_1 \times \cdots \times p_d}$ be a tensor, and let $\{\bm{U}_j^*\}_{j=1}^d$ be fixed factor matrices.

  \begin{enumerate}
    \item The $\eta$-th all-mode local moment of $\cm{T}$ is defined as:
      \begin{equation}
        \textup{LM}_0(\cm{T}; \eta, \delta, \{\bm{U}_j^*\}_{j=1}^d)
        =
        \sup_{\bm{v}_j \in \mathcal{V}(\bm{U}_j^*, \delta),\; j=1,\dots,d}
        \mathbb{E}\left[ \left| \cm{T} \times_{j=1}^d \bm{v}_j^\top \right|^\eta \right].
      \end{equation}

    \item For $1 \leq k \leq d$, the $\eta$-th mode-$k$-excluded local moment is defined as:
      \begin{equation}
        \textup{LM}_k(\cm{T}; \eta, \delta, \{\bm{U}_j^*\}_{j=1}^d)
        =
        \sup_{\bm{v}_j \in \mathcal{V}(\bm{U}_j^*, \delta),\; 1 \leq l \leq p_k}
        \mathbb{E}\left[ \left| \cm{T} \times_{j=1,j \neq k}^d \bm{v}_j^\top \times_k \bm{c}_l^\top \right|^\eta \right],
      \end{equation}
      where $\bm{c}_l$ is the $l$-th canonical basis vector in $\mathbb{R}^{p_k}$.
  \end{enumerate}
\end{definition}

These definitions generalize the notion of moments to \textit{directions within low-dimensional subspaces}. The all-mode local moment $\text{LM}_0$ captures the overall tail behavior of the full multilinear projection $\cm{T} \times_{j=1}^d \bm{v}_j^\top$, while the mode-$k$-excluded local moment $\text{LM}_k$ focuses on projections excluding the $k$-th mode, which is particularly useful when estimating the $k$-th factor matrix.

\textbf{Properties:}
\begin{itemize}
  \item When $\delta = 1$, we have $\mathcal{V}(\bm{U}_j^*, 1) = \{\bm{v} : \|\bm{v}\|_2 = 1\}$, so $\text{LM}_0$ and $\text{LM}_k$ reduce to their global counterparts (i.e., moments over the full unit sphere).
  \item Smaller $\delta$ restricts attention to directions \textit{closer to the true factor subspaces}, where the projected data or gradients are often better behaved—even if the ambient distribution is heavy-tailed.
  \item These moments control the tails of projections that directly influence the gradient updates in our robust algorithm, thereby determining the stability and convergence of the estimation procedure.
\end{itemize}

\subsection{Example: Local Moments Under Directional Dependence (Used in Experiment 3)}\label{append:B.3}

Consider a tensor \(\cm{X} \in \mathbb{R}^{p \times p \times p}\) (with \(d = 3\) and \(p_1 = p_2 = p_3 = p\)) whose vectorized form \(\text{vec}(\cm{X}) \in \mathbb{R}^{p^3}\) follows a multivariate distribution with mean zero and a structured covariance matrix. Unlike the i.i.d.\ case, the entries of \(\cm{X}\) are not independent but exhibit dependence that varies with direction. This example is motivated by the setup in Experiment~3 of the main text, where the local moment of the tensor depends on the alignment between the underlying factor structure and the dominant directions of variation in the data.

Let the covariance matrix \(\bm{\Sigma} \in \mathbb{R}^{p^3 \times p^3}\) be such that the variance of projections of \(\text{vec}(\cm{X})\) onto certain directions is modulated by an angle parameter \(\theta \in [0, \pi/2]\). In this setup, the directions that align closely with the column spaces of the true factor matrices (denoted \(\bm{U}_1^*, \bm{U}_2^*, \bm{U}_3^*\)) exhibit different levels of variance depending on \(\theta\).

Now, consider the \textit{local second moment} of \(\cm{X}\) with respect to these subspaces, defined as
\[
\text{LM}_0(\cm{X}; 2, \delta, \{\bm{U}_j^*\}_{j=1}^3)
=
\sup_{\bm{v}_j \in \mathcal{V}(\bm{U}_j^*, \delta), \, j = 1, 2, 3}
\mathbb{E}\left[ \left| \cm{X} \times_{j=1}^3 \bm{v}_j^\top \right|^2 \right].
\]

Suppose that when \(\theta = 0\), the dominant directions of variation in \(\text{vec}(\cm{X})\) are nearly orthogonal to the column spaces of \(\bm{U}_1^*, \bm{U}_2^*, \bm{U}_3^*\). In this case, the projections \(\cm{X} \times_{j=1}^3 \bm{v}_j^\top\) for \(\bm{v}_j \in \mathcal{V}(\bm{U}_j^*, \delta)\) have relatively small variance, and the local second moment is close to a minimal value, say on the order of 1.

In contrast, when \(\theta = \pi/2\), the dominant directions of \(\text{vec}(\cm{X})\) align well with the column spaces of the true factors. The projections \(\cm{X} \times_{j=1}^3 \bm{v}_j^\top\) then capture a significant portion of the total variance, and the local second moment increases to a larger but still manageable value, say on the order of 5.

Importantly, the \textit{global second moment} of \(\cm{X}\)—defined as \(\sup_{\|\bm{v}\|_2 = 1} \mathbb{E}[|\cm{X} \times_{j=1}^3 \bm{v}^\top|^2]\)—remains unchanged across different values of \(\theta\). However, the \textit{local moment} varies considerably, depending on how well the projection directions align with the true factor subspaces.

This example demonstrates that the local moment condition is sensitive to the \textit{geometric alignment} of the data structure, even when global distributional properties are held fixed. It also highlights why the local moment framework is better suited to capturing the effective behavior of gradients in tensor estimation problems, particularly when the signal resides in a low-dimensional subspace defined by the underlying factors.

\subsection{Usefulness in Robust Tensor Estimation}

The local moment conditions are \textbf{crucial} for the theoretical analysis of our robust gradient descent framework. Specifically:

\begin{itemize}
  \item They allow us to control the tails of the projected gradients $\cm{T} \times_{j=1}^d \bm{v}_j^\top$, which are the quantities actually used in the robust gradient estimators.
  \item By focusing on low-dimensional subspaces where the signal resides (as defined by $\{\bm{U}_j^*\}$), we can obtain sharp, adaptive moment bounds without relying on heavy global assumptions.
  \item This localization leads to weaker sufficient conditions for statistical consistency and convergence, enabling robust estimation under heavy-tailed noise or covariates—even when traditional moment conditions fail.
\end{itemize}

In summary, the concept of \textbf{local moments} provides a principled way to extend classical moment-based analysis to the tensor setting, accounting for the intrinsic geometry of low-rank tensor models and ensuring robustness in real-world, heavy-tailed environments.

\newpage

\section{Convergence Analysis of Robust Gradient Descent}\label{append:convergence}

\subsection{Proofs of Theorem \ref{thm:1}}

The proof consists of five steps. In the first step, we introduce the notations and the regularity conditions in the following steps. In the second to fourth steps, we establish the convergence analysis of the estimation errors. Finally, in the last step, we verify the conditions given in the first steps recursively.\\

\noindent\textit{Step 1.} (Notations and conditions)

\noindent We first introduce the notations used in the proof.
At step $t$, we simplify the notations of the robust gradient estimators to 
\begin{equation}
    \cm{G}_0^{(t)} = \cm{G}(\bm{F}^{(t)}),~~\text{and}~~\bm{G}_k^{(t)} = \bm{G}(\bm{F}^{(t)}),
\end{equation}
for $k=1,\dots,d$ and $t=1,\dots,T$.
Denote $\bm{V}_k^{(t)}=(\otimes_{j\neq k}\bm{U}_j^{(t)})\cm{S}^{(t)\top}_{(k)}$,
\begin{equation}
    \bm{\Delta}_k^{(t)}=\bm{G}_k^{(t)}-\mathbb{E}[\nabla_k\mathcal{L}^{(t)}]=\bm{G}_k^{(t)}-\mathbb{E}[\nabla\mathcal{L}(\cm{A}^{(t)})_{(k)}\bm{V}_k^{(t)}],
\end{equation} 
and 
\begin{equation}
    \bm{\Delta}_0^{(t)}=\cm{G}_0^{(t)}-\mathbb{E}[\nabla_0\mathcal{L}^{(t)}]=\cm{G}_0^{(t)}-\mathbb{E}[\nabla\mathcal{L}(\cm{A}^{(t)})\times_{j=1}^d\bm{U}_j^{(t)\top}]
\end{equation}
as the robust gradient estimation errors. By the stability of the robust gradients, $\|\bm{\Delta}_k^{(t)}\|_\text{F}^2\leq\phi\|\cm{A}^{(t)}-\cm{A}^*\|^2_\text{F}+\xi_k^2$, for all $k=0,1,\dots,d$ and $t=1,2,\dots,T$. In addition, we assume $b\asymp\bar{\sigma}^{1/(d+1)}$, as required in Theorem \ref{thm:1}.

Let $\cm{A}^*=\cm{S}^*\times_{k=1}^d\bm{U}_k^*$ such that $\bm{U}_k^{*\top}\bm{U}_k^*=b^2\bm{I}_{r_k}$, for $k=1,\dots,d$. Define $\mathbb{O}_r=\{\bm{M}\in\mathbb{R}^{r\times r}:\bm{M}^\top\bm{M}=\bm{I}_r\}$ as the set of $r\times r$ orthogonal matrices. For each step $t=0,1,\dots,T$, we define
\begin{equation}
    \text{Err}^{(t)} = \min_{\bm{O}_k\in\mathbb{O}_{r_k},1\leq k\leq d}\left\{\sum_{k=1}^d\|\bm{U}^{(t)}_k-\bm{U}_k^*\bm{O}_k\|_\textup{F}^2+\|\cm{S}^{(t)}-\cm{S}^*\times_{j=1}^d\bm{O}_j^\top\|^2_\textup{F}\right\},
\end{equation}
and
\begin{equation}
    (\bm{O}_1^{(t)},\cdots,\bm{O}_d^{(t)}) = \argmin_{{\bm{O}_k\in\mathbb{O}_{r_k},1\leq k\leq d}}\left\{\sum_{k=1}^d\|\bm{U}^{(t)}_k-\bm{U}_k^*\bm{O}_k\|_\textup{F}^2+\|\cm{S}^{(t)}-\cm{S}^*\times_{j=1}^d\bm{O}_j^\top\|^2_\textup{F}\right\}.
\end{equation}
Here, $\text{Err}^{(t)}$ collects the combined estimation errors for all tensor decomposition components at step $t$, and $\bm{O}_k^{(t)}$'s are the optimal rotations used to handle the non-identifiability of the Tucker decomposition. 

Next, we discuss some additional conditions used in the convergence analysis. To ease presentation, we first assume that these conditions hold and verify them in the last step.

\noindent(C1) For any $t=0,1,\dots,T$ and $k=1,2,\dots,d$, $\|\cm{S}_{(k)}^{(t)}\|\leq C\bar{\sigma}b^{-d}$ and $\|\bm{U}_k^{(t)}\|\leq Cb$ for some absolute constant greater than one. Hence, $\|\bm{V}_k^{(t)}\|\leq\|\cm{S}_{(k)}^{(t)}\|\cdot\prod_{j\neq k}\|\bm{U}_j^{(t)}\|\leq C_d\bar{\sigma}b^{-1}$.

\noindent(C2) For any $t=0,1,\dots,T$, $\text{Err}^{(t)}\leq C\alpha\beta^{-1}b^2\kappa^{-2} $.\\

\noindent\textit{Step 2.} (Descent of $\text{Err}^{(t)}$)

\noindent By definition of $\text{Err}^{(t)}$ and $\bm{O}_k^{(t)}$'s,
\begin{equation}
    \begin{split}
        \text{Err}^{(t+1)} & = \sum_{k=1}^d\left\|\bm{U}_k^{(t+1)}-\bm{U}^*_k\bm{O}_k^{(t+1)}\right\|^2_\textup{F}+\left\|\cm{S}^{(t+1)}-\cm{S}^*\times_{j=1}^d\bm{O}_j^{(t+1)\top}\right\|^2_\textup{F}\\
        & \leq \sum_{k=1}^d\left\|\bm{U}_k^{(t+1)}-\bm{U}^*_k\bm{O}_k^{(t)}\right\|^2_\textup{F}+\left\|\cm{S}^{(t+1)}-\cm{S}^*\times_{j=1}^d\bm{O}_j^{(t)\top}\right\|^2_\textup{F}.
    \end{split}
\end{equation}

For each $k=1,\cdots,d$, since $\bm{U}_k^{(t+1)}=\bm{U}_k^{(t)}-\eta\bm{G}_k^{(t)}-a\eta\bm{U}_k^{(t)}(\bm{U}_k^{(t)\top}\bm{U}_k^{(t)}-b^2\bm{I}_{r_k})$, we have that for any $\zeta>0$, 
\begin{equation}
    \label{eq:U_grad}
    \begin{split}
        & \|\bm{U}^{(t+1)}_k-\bm{U}_k^*\bm{O}_k^{(t)}\|_\textup{F}^2\\
        = & \|\bm{U}_k^{(t)}-\bm{U}_k^*\bm{O}_k^{(t)}-\eta(\bm{G}_k^{(t)}+a\bm{U}_k^{(t)}(\bm{U}_k^{(t)\top}\bm{U}_k^{(t)}-b^2\bm{I}_{r_k}))\|_\textup{F}^2\\
        = & \|\bm{U}_k^{(t)}-\bm{U}_k^*\bm{O}_k^{(t)}-\eta(\mathbb{E}[\nabla_k\mathcal{L}^{(t)}]+a\bm{U}_k^{(t)}(\bm{U}_k^{(t)\top}\bm{U}_k^{(t)}-b^2\bm{I}_{r_k}))-\eta \bm{\Delta}_k^{(t)}\|_\textup{F}^2\\
        \leq & \|\bm{U}_k^{(t)}-\bm{U}_k^*\bm{O}_k^{(t)}-\eta(\mathbb{E}[\nabla_k\mathcal{L}^{(t)}]+a\bm{U}_k^{(t)}(\bm{U}_k^{(t)\top}\bm{U}_k^{(t)}-b^2\bm{I}_{r_k}))\|_\text{F}^2+\eta^2\|\bm{\Delta}_k^{(t)}\|_\textup{F}^2\\
        & + 2\eta\|\bm{\Delta}_k^{(t)}\|_\textup{F}\cdot\|\bm{U}_k^{(t)}-\bm{U}_k^*\bm{O}_k^{(t)}-\eta(\mathbb{E}[\nabla_k\mathcal{L}^{(t)}]+a\bm{U}_k^{(t)}(\bm{U}_k^{(t)\top}\bm{U}_k^{(t)}-b^2\bm{I}_{r_k}))\|_\text{F}\\
        \leq & (1+\zeta)\|\bm{U}_k^{(t)}-\bm{U}_k^*\bm{O}_k^{(t)}-\eta(\mathbb{E}[\nabla_k\mathcal{L}^{(t)}]+a\bm{U}_k^{(t)}(\bm{U}_k^{(t)\top}\bm{U}_k^{(t)}-b^2\bm{I}_{r_k}))\|_\text{F}^2\\
        & + (1+\zeta^{-1})\eta^2\|\bm{\Delta}_k^{(t)}\|_\text{F}^2,
    \end{split}
\end{equation}
where the last inequality stems from the mean inequality.

For the first term on the right hand side in \eqref{eq:U_grad}, we have the following decomposition
\begin{equation}\label{eq:decomposition_U}
    \begin{split}
        & \|\bm{U}_k^{(t)}-\bm{U}_k^*\bm{O}_k^{(t)}-\eta(\mathbb{E}[\nabla_k\mathcal{L}^{(t)}]+a\bm{U}_k^{(t)}(\bm{U}_k^{(t)\top}\bm{U}_k^{(t)}-b^2\bm{I}_{r_k}))\|_\text{F}^2 \\
        = & \|\bm{U}_k^{(t)}-\bm{U}_k^*\bm{O}_k^{(t)}\|_\text{F}^2 + \eta^2 \|\mathbb{E}[\nabla_k\mathcal{L}^{(t)}]+a\bm{U}_k^{(t)}(\bm{U}_k^{(t)\top}\bm{U}_k^{(t)}-b^2\bm{I}_{r_k})\|_\text{F}^2\\
        & - 2\eta\langle\bm{U}_k^{(t)}-\bm{U}_k^*\bm{O}_k^{(t)},\mathbb{E}[\nabla_k\mathcal{L}^{(t)}]\rangle
        - 2\eta a\langle\bm{U}_k^{(t)}-\bm{U}_k^*\bm{O}_k^{(t)},\bm{U}_k^{(t)}(\bm{U}_k^{(t)\top}\bm{U}_k^{(t)}-b^2\bm{I}_{r_k})\rangle.
    \end{split}
\end{equation}
Here, by condition (C1), the second term in \eqref{eq:decomposition_U} can be bounded by
\begin{equation}
    \begin{split}
        &\|\mathbb{E}[\nabla_k\mathcal{L}^{(t)}]+a\bm{U}_k^{(t)}(\bm{U}_k^{(t)\top}\bm{U}_k^{(t)}-b^2\bm{I}_{r_k})\|_\text{F}^2\\
        \leq & 2\|\mathbb{E}[\nabla\overline{\mathcal{L}}(\cm{A}^{(t)})]_{(k)}\bm{V}_k^{(t)}\|_\text{F}^2 + 2a^2\|\bm{U}_k^{(t)}(\bm{U}_k^{(t)\top}\bm{U}_k^{(t)}-b^2\bm{I}_{r_k})\|_\text{F}^2\\
        \leq & 2\|\bm{V}_k^{(t)}\|^2\|\mathbb{E}[\nabla\overline{\mathcal{L}}(\cm{A}^{(t)})]\|_\text{F}^2 + 2a^2\|\bm{U}_k^{(t)}\|^2\|\bm{U}_k^{(t)\top}\bm{U}_k^{(t)}-b^2\bm{I}_{r_k}\|_\text{F}^2\\
        \leq & C_db^{-2}\bar{\sigma}^2\|\mathbb{E}[\nabla\overline{\mathcal{L}}(\cm{A}^{(t)})]\|_\text{F}^2 + Ca^2b^2\|\bm{U}_k^{(t)\top}\bm{U}_k^{(t)}-b^2\bm{I}_{r_k}\|_\text{F}^2.
    \end{split}
\end{equation}
The third term in \eqref{eq:decomposition_U} can be rewritten as
\begin{equation}
    \begin{split}
        & \langle\bm{U}_k^{(t)}-\bm{U}_k^*\bm{O}_k^{(t)},\mathbb{E}[\nabla_k\mathcal{L}^{(t)}]\rangle\\
        = & \langle\cm{A}^{(t)}-\cm{S}^{(t)}\times_{j\neq k}\bm{U}_j^{(t)}\times_k\bm{U}_k^*\bm{O}_k^{(t)},\mathbb{E}[\nabla\overline{\mathcal{L}}(\cm{A}^{(t)})]-\mathbb{E}[\nabla\overline{\mathcal{L}}(\cm{A}^*)]\rangle\\
        = & \langle\cm{A}^{(t)}-\cm{A}_k^{(t)},\mathbb{E}[\overline{\mathcal{L}}(\cm{A}^{(t)})]-\mathbb{E}[\nabla\overline{\mathcal{L}}(\cm{A}^*)]\rangle,
    \end{split}
\end{equation}
where $\cm{A}_k^{(t)}:=\cm{S}^{(t)}\times_{j\neq k}\bm{U}_j^{(t)}\times_k\bm{U}_k^*\bm{O}_k^{(t)}$.
For the fourth term in \eqref{eq:decomposition_U}, we have
\begin{equation}
    \begin{split}
        & \left\langle\bm{U}_k^{(t)}-\bm{U}_k^*\bm{O}_k^{(t)},\bm{U}_k^{(t)}(\bm{U}_k^{(t)\top}\bm{U}_k^{(t)}-b^2\bm{I}_{r_k})\right\rangle\\
        = & \left\langle\bm{U}_k^{(t)\top}\bm{U}_k^{(t)}-\bm{U}_k^{(t)\top}\bm{U}_k^*\bm{O}_k^{(t)},\bm{U}_k^{(t)\top}\bm{U}_k^{(t)}-b^2\bm{I}_{r_k}\right\rangle\\
        = & \frac{1}{2}\left\langle\bm{U}_k^{(t)\top}\bm{U}_k^{(t)}-\bm{U}_k^{*\top}\bm{U}_k^*,\bm{U}_k^{(t)\top}\bm{U}_k^{(t)}-b^2\bm{I}_{r_k}\right\rangle\\
        & + \frac{1}{2}\left\langle\bm{U}_k^{*\top}\bm{U}_k^{*}-2\bm{U}_k^{(t)\top}\bm{U}_k^*\bm{O}_k^{(t)}+\bm{U}_k^{(t)\top}\bm{U}_k^{(t)},\bm{U}_k^{(t)\top}\bm{U}_k^{(t)}-b^2\bm{I}_{r_k}\right\rangle\\
        = & \frac{1}{2}\|\bm{U}_k^{(t)\top}\bm{U}_k^{(t)}-b^2\bm{I}_{r_k}\|_\text{F}^2\\
        & + \frac{1}{2}\left\langle(\bm{U}_k^*\bm{O}_k^{(t)}-\bm{U}_k^{(t)})^\top(\bm{U}_k^*\bm{O}_k^{(t)}-\bm{U}_k^{(t)}),\bm{U}_k^{(t)\top}\bm{U}_k^{(t)}-b^2\bm{I}_{r_k}\right\rangle\\
        \geq & \frac{1}{2}\|\bm{U}_k^{(t)\top}\bm{U}_k^{(t)}-b^2\bm{I}_{r_k}\|_\text{F}^2 - \frac{1}{2}\|\bm{U}_k^*\bm{O}_k^{(t)}-\bm{U}_k^{(t)}\|_\text{F}^2\cdot\|\bm{U}_k^{(t)\top}\bm{U}_k^{(t)}-b^2\bm{I}_{r_k}\|_\text{F} \\
        \geq & \frac{1}{2}\|\bm{U}_k^{(t)\top}\bm{U}_k^{(t)}-b^2\bm{I}_{r_k}\|_\text{F}^2 - \frac{1}{4}\|\bm{U}_k^*\bm{O}_k^{(t)}-\bm{U}_k^{(t)}\|_\text{F}^4-\frac{1}{4}\|\bm{U}_k^{(t)\top}\bm{U}_k^{(t)}-b^2\bm{I}_{r_k}\|_\text{F}^2 \\
        \geq & \frac{1}{4}\|\bm{U}_k^{(t)\top}\bm{U}_k^{(t)}-b^2\bm{I}_{r_k}\|_\text{F}^2-\frac{\text{Err}^{(t)}}{4}\|\bm{U}_k^{(t)}-\bm{U}_k^*\bm{O}_k^{(t)}\|_\text{F}^2,
    \end{split}
\end{equation}
where we use the fact that $\|\bm{U}_k^*\bm{O}_k^{(t)}-\bm{U}_k^{(t)}\|_\text{F}^2\leq \text{Err}^{(t)}$.

Hence, for any $k=1,2,\dots,d$,
\begin{equation}
    \begin{split}
        &\|\bm{U}_k^{(t)}-\bm{U}_k^*\bm{O}_k^{(t)}-\eta(\mathbb{E}[\nabla_k\mathcal{L}^{(t)}]+a\bm{U}_k^{(t)}(\bm{U}_k^{(t)\top}\bm{U}_k^{(t)}-b^2\bm{I}_{r_k}))\|_\text{F}^2 \\
        \leq & \|\bm{U}_k^{(t)}-\bm{U}_k^*\bm{O}_k^{(t)}\|_\text{F}^2 - 2Q_{k,1}^{(t)}\eta + Q_{k,2}^{(t)}\eta^2,
    \end{split}
\end{equation}
where
\begin{equation}
    Q_{k,1}^{(t)}=\langle\cm{A}^{(t)}-\cm{A}_k^{(t)},\mathbb{E}[\nabla\overline{\mathcal{L}}(\cm{A}^{(t)})]\rangle+\frac{a}{4}\left\|\bm{U}_k^{(t)\top}\bm{U}_k^{(t)}-b^2\bm{I}_{r_k}\right\|_\text{F}^2-\frac{a\text{Err}^{(t)}}{4}\left\|\bm{U}_k^{(t)}-\bm{U}_k^*\bm{O}_k^{(t)}\right\|_\text{F}^2
\end{equation} and
\begin{equation}
    Q_{k,2}^{(t)}=C_db^{-2}\bar{\sigma}^2\|\mathbb{E}[\nabla\overline{\mathcal{L}}(\cm{A}^{(t)})]\|_\text{F}^2 + Ca^2b^2\|\bm{U}_k^{(t)\top}\bm{U}_k^{(t)}-b^2\bm{I}_{r_k}\|_\text{F}^2.
\end{equation}

Similarly, for any $\zeta>0$,
\begin{equation}
    \begin{split}
        & \|\widetilde{\cm{S}}^{(t+1)}-\cm{S}^*\times_{k=1}^d\bm{O}_k^{(t)\top}\|_\text{F}^2
        = \|\cm{S}^{(t)}-\eta\cm{G}_0^{(t)}-\cm{S}^*\times_{k=1}^d\bm{O}_k^{(t)\top}\|_\text{F}^2\\
        = & \|\cm{S}^{(t)}-\cm{S}^*\times_{k=1}^d\bm{O}_k^{(t)\top}-\eta\mathbb{E}[\nabla_0\mathcal{L}^{(t)}]-\eta\bm{\Delta}_0^{(t)}\|_\text{F}^2\\
        \leq & (1+\zeta)\|\cm{S}^{(t)}-\cm{S}^*\times_{k=1}^d\bm{O}_k^{(t)\top}-\eta\mathbb{E}[\nabla_0\mathcal{L}^{(t)}]\|_\text{F}^2 + \eta^2(1+\zeta^{-1})\|\bm{\Delta}_0^{(t)}\|_\text{F}^2,
    \end{split}
\end{equation}
and
\begin{equation}
    \|\cm{S}^{(t)}-\cm{S}^*\times_{k=1}^d\bm{O}_k^{(t)\top}-\eta\mathbb{E}[\nabla_0\mathcal{L}^{(t)}]\|_\text{F}^2 \leq \|\cm{S}^{(t)}-\cm{S}^*\times_{k=1}^d\bm{O}_k^{(t)\top}\|_\text{F}^2 - 2Q_{0,1}^{(t)}\eta + Q_{0,2}^{(t)}\eta^2,
\end{equation}
where 
$$Q_{0,1}^{(t)}=\langle\cm{A}^{(t)}-\cm{A}_0^{(t)},\mathbb{E}[\nabla\overline{\mathcal{L}}(\cm{A}^{(t)})]\rangle \text{ with }\cm{A}_0^{(t)}=\cm{S}^*\times_{k=1}^d\bm{U}_k^{(t)}\bm{O}_k^{(t)\top}$$
and 
$Q_{0,2}^{(t)}=C_db^{2d}\|\mathbb{E}[\nabla\overline{\mathcal{L}}(\cm{A}^{(t)})]\|_\text{F}^2$.

Hence, combining the above results, we have
\begin{equation}
    \text{Err}^{(t+1)} \leq (1+\zeta)\left\{\text{Err}^{(t)} - 2\eta\sum_{k=0}^dQ^{(t)}_{k,1}+\eta^2\sum_{k=0}^dQ_{k,2}^{(t)}\right\} + (1+\zeta^{-1})\eta^2\sum_{k=0}^d\|\bm{\Delta}_k^{(t)}\|_\text{F}^2.
\end{equation}~

\noindent\textit{Step 3}. (Lower bound of $\sum_{k=0}^dQ^{(t)}_{k,1}$)

\noindent By definition of $Q^{(t)}_{k,1}$ for $k=0,\dots,d$, we have
\begin{equation}
    \begin{split}
        \sum_{k=0}^dQ_{k,1}^{(t)}
        = & \left\langle(d+1)\cm{A}^{(t)}-\sum_{k=0}^d\cm{A}_k^{(t)},\mathbb{E}[\nabla\overline{\mathcal{L}}(\cm{A}^{(t)})]\right\rangle\\
        & + a\sum_{k=1}^d\left\{\frac{1}{4}\|\bm{U}_k^{(t)\top}\bm{U}_k^{(t)}-b^2\bm{I}_{r_k}\|_\text{F}^2 - \frac{\text{Err}^{(t)}}{4}\|\bm{U}_k^{(t)}-\bm{U}_k^*\bm{O}_k^{(t)}\|_\text{F}^2\right\}.
    \end{split}
\end{equation}
For the first term, by RCG condition of $\overline{\mathcal{L}}$ and Cauchy's inequality,
\begin{equation}
    \begin{split}
        & \left\langle(d+1)\cm{A}^{(t)}-\sum_{k=0}^d\cm{A}_k^{(t)},\mathbb{E}[\nabla\overline{\mathcal{L}}(\cm{A}^{(t)})]\right\rangle=\langle\cm{A}^{(t)}-\cm{A}^*+\cm{H},\mathbb{E}[\nabla\overline{\mathcal{L}}(\cm{A}^{(t)})]\rangle\\
        = & \langle\cm{A}^{(t)}-\cm{A}^*,\mathbb{E}[\nabla\overline{\mathcal{L}}(\cm{A}^{(t)})]-\mathbb{E}[\nabla\overline{\mathcal{L}}(\cm{A}^*)]\rangle + \langle\cm{H},\mathbb{E}[\nabla\overline{\mathcal{L}}(\cm{A}^{(t)})]\rangle\\
        \geq & \frac{\alpha}{2}\|\cm{A}^{(t)}-\cm{A}^*\|_\text{F}^2 + \frac{1}{2\beta}\|\mathbb{E}[\nabla\overline{\mathcal{L}}(\cm{A}^{(t)})]\|_\text{F}^2 - \|\cm{H}\|_\text{F}\cdot\|\mathbb{E}[\nabla\overline{\mathcal{L}}(\cm{A}^{(t)})]\|_\text{F}\\
        \geq & \frac{\alpha}{2}\|\cm{A}^{(t)}-\cm{A}^*\|_\text{F}^2 + \frac{1}{2\beta}\|\mathbb{E}[\nabla\overline{\mathcal{L}}(\cm{A}^{(t)})]\|_\text{F}^2 - \frac{1}{4\beta}\|\mathbb{E}[\nabla\overline{\mathcal{L}}(\cm{A}^{(t)})]\|_\text{F}^2 - \beta\|\cm{H}\|_\text{F}^2\\
        = & \frac{\alpha}{2}\|\cm{A}^{(t)}-\cm{A}^*\|_\text{F}^2 + \frac{1}{4\beta}\|\mathbb{E}[\nabla\overline{\mathcal{L}}(\cm{A}^{(t)})]\|_\text{F}^2 - \beta\|\cm{H}\|_\text{F}^2
    \end{split}
\end{equation}
where $\cm{H}$ is the higher-order perturbation term in
\begin{equation}
    \cm{A}^* = \cm{A}_0^{(t)} + \sum_{k=1}^d(\cm{A}_k^{(t)}-\cm{A}^{(t)}) + \cm{H}. 
\end{equation}

By Lemma \ref{lemma:perturb}, we have
$\|\cm{H}\|_\text{F}\leq C_db^{-2}\bar{\sigma}\text{Err}^{(t)}$. Hence, by Lemma \ref{lemma:1}, $\sum_{k=0}^dQ_{k,1}^{(t)}$ can be lower bounded by
\begin{equation}
    \begin{split}
        & \sum_{k=0}^dQ_{k,1}^{(t)} \geq \frac{\alpha}{2}\|\cm{A}^{(t)}-\cm{A}^*\|_\text{F}^2 + \frac{1}{4\beta}\|\mathbb{E}[\nabla\overline{\mathcal{L}}(\cm{A}^{(t)})]\|_\text{F}^2 - C_d\beta b^{-4}\bar{\sigma}^2(\text{Err}^{(t)})^2\\
        & + \frac{a}{4}\sum_{k=1}^d\|\bm{U}_k^{(t)\top}\bm{U}_k^{(t)}-b^2\bm{I}_{r_k}\|_\text{F}^2 - \frac{a}{4}(\text{Err}^{(t)})^2\\
        \geq & \left\{C\alpha b^{2d}\kappa^{-2}- C_d\beta b^{-4}\bar{\sigma}^2\text{Err}^{(t)}-\frac{a\text{Err}^{(t)}}{4}\right\}\text{Err}^{(t)}\\
        & + \frac{1}{4\beta}\|\mathbb{E}[\nabla\overline{\mathcal{L}}(\cm{A}^{(t)})]\|_\text{F}^2 + \left(\frac{a}{4}-C_d\alpha b^{2d-2}\kappa^{-2}\right)\sum_{k=1}^d\|\bm{U}_k^{(t)\top}\bm{U}_k^{(t)}-b^2\bm{I}_{r_k}\|_\text{F}^2\\
        \geq & ~C\alpha b^{2d}\kappa^{-2}\text{Err}^{(t)} + \frac{1}{4\beta}\|\mathbb{E}[\nabla\overline{\mathcal{L}}(\cm{A}^{(t)})]\|_\text{F}^2+\left(\frac{a}{4}-C_d\alpha b^{2d-2}\kappa^{-2}\right)\sum_{k=1}^d\|\bm{U}_k^{(t)\top}\bm{U}_k^{(t)}-b^2\bm{I}_{r_k}\|_\text{F}^2.
    \end{split}
\end{equation}

\noindent\textit{Step 4.} (Convergence analysis)

We have the following bound for $\sum_{k=0}^dQ_{k,2}^{(t)}$
\begin{equation}
    \begin{split}
        \sum_{k=0}^dQ_{k,2}^{(t)} & \leq C_db^{2d}\|\mathbb{E}[\nabla\overline{\mathcal{L}}(\cm{A}^{(t)})]\|_\text{F}^2 + 3a^2b^2\sum_{k=1}^d\|\bm{U}_k^{(t)\top}\bm{U}_k^{(t)}-b^2\bm{I}_{r_k}\|_\text{F}^2.
    \end{split}
\end{equation}

Combining the results above, we have
\begin{equation}
    \begin{split}
        & \text{Err}^{(t)} - 2\eta\sum_{k=0}^dQ_{k,1}^{(t)} + \eta^2\sum_{k=0}^dQ_{k,2}^{(t)}\\
        \leq & \left(1-C\alpha b^{2d}\kappa^{-2}\eta\right)\text{Err}^{(t)} + \left(C_db^{2d}\eta^2-\frac{\eta}{4\beta}\right)\|\mathbb{E}[\nabla\overline{\mathcal{L}}(\cm{A}^{(t)})]\|_\text{F}^2 \\
        & + \left(3a^2b^2\eta^2+C_d\alpha b^{2d-2}\kappa^{-2}\eta-\frac{a\eta}{4}\right)\sum_{k=1}^d\|\bm{U}_k^{(t)\top}\bm{U}_k^{(t)}-b^2\bm{I}_{r_k}\|_\text{F}^2.
    \end{split}
\end{equation}
Taking $\eta=\eta_0b^{-2d}\beta^{-1}$ and $a=C_0b^{2d-2}\alpha\kappa^{-2}$ for some sufficiently small constants $\eta_0$ and $C_0$, we have
\begin{equation}
    \text{Err}^{(t)}-2\eta\sum_{k=0}^d Q_{k,1}^{(t)} + \eta^2\sum_{k=0}^dQ_{k,2}^{(t)} \leq (1-C\alpha\beta^{-1}\kappa^{-2})\text{Err}^{(t)}
\end{equation}
and
\begin{equation}
    \text{Err}^{(t+1)} \leq (1+\zeta)(1-\eta_0\alpha\beta^{-1}\kappa^{-2})\text{Err}^{(t)} + (1+\zeta^{-1})\eta^2\sum_{k=0}^d\|\bm{\Delta}_k^{(t)}\|_\text{F}^2.
\end{equation}
Taking $\zeta=\eta_0\alpha\beta^{-1}\kappa^{-2}/2$, we have
\begin{equation}
    \text{Err}^{(t+1)} \leq (1-\eta_0\alpha\beta^{-1}\kappa^{-2}/2)\text{Err}^{(t)}+C\alpha^{-1}\beta^{-1}\bar{\sigma}^{-4d/(d+1)}\kappa^2\sum_{k=0}^d\|\bm{\Delta}_k^{(t)}\|_\text{F}^2.
\end{equation}

By stability of the robust gradient estimators, for $k=0,1,\dots,d$ and $t=1,2,\dots,T$,
\begin{equation}
    \|\bm{\Delta}^{(t)}_k\|_\text{F}^2\leq\phi\|\cm{A}^{(t)}-\cm{A}^*\|_\text{F}^2 + \xi_k^2.
\end{equation}
Hence, as $\phi\lesssim \alpha^2\kappa^{-4}\bar{\sigma}^{2d/(d+1)}$, we have
\begin{equation}\label{eq:recursive}
    \begin{split}
        \text{Err}^{(t+1)} & \leq (1-\eta_0\alpha\beta^{-1}\kappa^{-2}/2)\text{Err}^{(t)}+C_d\alpha^{-1}\beta^{-1}\bar{\sigma}^{-4d/(d+1)}\kappa^2\left(\phi\|\cm{A}^{(t)}-\cm{A}^*\|_\text{F}^2+\sum_{k=0}^d\xi_k^2\right)\\
        & \leq (1-\eta_0\alpha\beta^{-1}\kappa^{-2}/2+C_d\alpha^{-1}\beta^{-1}\bar{\sigma}^{-2d/(d+1)}\kappa^2\phi)\text{Err}^{(t)}+C\alpha^{-1}\beta^{-1}\bar{\sigma}^{-4d/(d+1)}\kappa^2\sum_{k=0}^d\xi_k^2\\
        & \leq (1-C\alpha\beta^{-1}\kappa^{-2})\text{Err}^{(t)}+C\alpha^{-1}\beta^{-1}\bar{\sigma}^{-4d/(d+1)}\kappa^2\sum_{k=0}^d\xi_k^2\\
        & \leq (1-C\alpha\beta^{-1}\kappa^{-2})^{t+1}\text{Err}^{(0)}+C\alpha^{-2}\bar{\sigma}^{-4d/(d+1)}\kappa^4\sum_{k=0}^d\xi_k^2.
    \end{split}
\end{equation}

We apply Lemma \ref{lemma:1} again and obtain
\begin{equation}
    \begin{split}
        \|\cm{A}^{(t)}-\cm{A}^*\|_\text{F}^2 & \leq C\bar{\sigma}^{2d/(d+1)}\text{Err}^{(t+1)}\\
        & \leq C\bar{\sigma}^{2d/(d+1)}(1-C\alpha\beta^{-1}\kappa^{-2})^{t}\text{Err}^{(0)} + C\bar{\sigma}^{-2d/(d+1)}\alpha^{-2}\kappa^4\sum_{k=0}^d\xi_k^2\\
        & \leq C\kappa^2(1-C\alpha\beta^{-1}\kappa^{-2})^{t}\|\cm{A}^{(0)}-\cm{A}^*\|_\text{F}^2 + C\bar{\sigma}^{-2d/(d+1)}\alpha^{-2}\kappa^4\sum_{k=0}^d\xi_k^2.
    \end{split}
\end{equation}~

\noindent\textit{Step 5.} (Verfications of conditions)

Finally, we show the conditions (C1) and (C2) hold for all $t=1,2,\dots$. By Lemma \ref{lemma:1}, we have
\begin{equation}
    \text{Err}^{(0)} \leq C(\alpha/\beta)b^2\kappa^{-2}\leq Cb^2.
\end{equation}
By the recursive relationship in \eqref{eq:recursive}, by induction we can check that $\text{Err}^{(t)}\leq Cb^2$ for all $t=1,2,\dots,T$. Furthermore, it implies that
\begin{equation}
    \|\bm{U}_k^{(t)}\|\leq \|\bm{U}_k^*\|+\|\bm{U}_k^{(t)}-\bm{U}_k^*\bm{O}_k^{(t)}\|\leq Cb,~~k=1,2,\dots,d,
\end{equation}
and
\begin{equation}
    \max_{k}\|\cm{S}^{(t)}_{(k)}\|\leq\max_{k}\|\cm{S}^{*}_{(k)}\| + \max_{k}\|\cm{S}_{(k)}^{(t)}-\cm{S}^*\times_{j=1}^d\bm{O}_j^{(t)\top}\|\leq C\underline{\sigma}b^{-d},
\end{equation}
which completes the convergence analysis.

\subsection{Auxiliary Lemmas}

The first lemma shows the equivalence between $\|\cm{A}-\cm{A}^*\|_\text{F}^2$ and the combined error $E$, which is from Lemma E.2 in \citet{han2022optimal} and is presented here for self-containedness. The proof of Lemma \ref{lemma:1} can be found in \citet{han2022optimal} and hence is omitted.

\begin{lemma}\label{lemma:1}
    Suppose $\cm{A}^*=[\![\cm{S}^*;\bm{U}_1^*,\dots,\bm{U}_d^*]\!]$, $\bm{U}_k^{*\top}\bm{U}_k=b^2\bm{I}_{r_k}$, for $k=1,\dots,d$, $\bar{\sigma}=\max_{k}\|\cm{A}^*_{(k)}\|_\textup{sp}$, and $\underline{\sigma}=\min_{k}\sigma_{r_k}(\cm{A}^*_{(k)})$. Let $\cm{A}=[\![\cm{S};\bm{U}_1,\dots,\bm{U}_d]\!]$ be another Tucker low-rank tensor with $\bm{U}_k\in\mathbb{R}^{p_k\times r_k}$, $\|\bm{U}_k\|\leq(1+c_0)b$, and $\max_{k}\|\cm{S}_{(k)}\|\leq(1+c_0)\bar{\sigma}b^{-d}$ for some $c_0>0$. Define
    \begin{equation}
        E:=\min_{\mathbb{O}_k\in\mathbb{O}_{p_k,r_k}}\left\{\sum_{k=1}^d\|\bm{U}_k-\bm{U}_k^*\bm{O}_k\|_\textup{F}^2 + \left\|\cm{S}-[\![\cm{S}^*;\bm{O}_1^\top,\dots,\bm{O}_d^\top]\!]\right\|_\text{F}^2\right\}.
    \end{equation}
    Then, we have
    \begin{equation}
        \begin{split}
            & E \leq b^{-2d}(C+C_1b^{2d+2}\underline{\sigma}^{-2})\|\cm{A}-\cm{A}^*\|_\textup{F}^2 + 2b^{-2}C_1\sum_{k=1}^d\|\bm{U}_k^\top\bm{U}_k-b^2\bm{I}_{r_k}\|_\textup{F}^2,\\
            \text{and  }& \|\cm{A}-\cm{A}^*\|_\textup{F}^2 \leq Cb^{2d}(C+C_2\bar{\sigma}^2b^{-2(d+1)})E,
        \end{split}
    \end{equation}
    where $C_1,C_2>0$ are some constants related to $c_0$.
\end{lemma}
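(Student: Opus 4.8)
The two inequalities point in opposite directions, so I would prove them separately. In both I would use the rotation invariance of the Tucker map: for orthogonal $\bm{O}_k\in\mathbb{O}_{r_k}$, replacing $\bm{U}_k^*$ by $\bm{W}_k:=\bm{U}_k^*\bm{O}_k$ and $\cm{S}^*$ by $\cm{B}:=\cm{S}^*\times_{j=1}^d\bm{O}_j^\top$ leaves $\cm{A}^*=\cm{B}\times_{j=1}^d\bm{W}_j$ unchanged and preserves $\|\bm{W}_k\|=b$ and $\|\cm{B}_{(k)}\|=\|\cm{S}^*_{(k)}\|$. For the forward bound I would take the $E$-minimizing $\bm{O}_k$'s, so that $E=\sum_{k=1}^d\|\bm{U}_k-\bm{W}_k\|_\text{F}^2+\|\cm{S}-\cm{B}\|_\text{F}^2$; for the backward bound I would instead construct suitable $\bm{O}_k$'s and bound $E$ by the value they attain.

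\emph{Forward bound.} I would expand $\cm{A}-\cm{A}^*$ as a hybrid telescoping sum, swapping $\cm{S}$ for $\cm{B}$ and then each $\bm{U}_k$ for $\bm{W}_k$ one mode at a time, which produces $d+1$ terms, each a multilinear product of a single ``difference'' factor with the remaining true/estimated factors and a core. Bounding each by $\|\bm{N}\bm{T}_{(k)}\|_\text{F}\le\|\bm{N}\|_\text{F}\,\|\bm{T}_{(k)}\|$ together with the hypothesised operator-norm bounds $\|\bm{U}_k\|\le(1+c_0)b$ and $\|\cm{S}_{(k)}\|\le(1+c_0)\bar{\sigma}b^{-d}$, the core term is $\lesssim b^d\|\cm{S}-\cm{B}\|_\text{F}$ and the $k$-th factor term is $\lesssim\bar{\sigma}b^{-1}\|\bm{U}_k-\bm{W}_k\|_\text{F}$. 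Squaring, using Cauchy--Schwarz across the $d+1$ terms, and invoking $\|\cm{S}-\cm{B}\|_\text{F}^2\le E$ and $\sum_k\|\bm{U}_k-\bm{W}_k\|_\text{F}^2\le E$ then gives $\|\cm{A}-\cm{A}^*\|_\text{F}^2\lesssim(b^{2d}+\bar{\sigma}^2b^{-2})E$, which is the stated bound.

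\emph{Backward bound.} I would treat each mode according to its imbalance $\epsilon_k:=\|\bm{U}_k^\top\bm{U}_k-b^2\bm{I}_{r_k}\|_\text{F}$. When $\epsilon_k$ exceeds a constant fraction of $b^2$, the crude estimate $\|\bm{U}_k-\bm{W}_k\|_\text{F}^2\lesssim b^2\lesssim b^{-2}\epsilon_k^2$ (both factors having operator norm $\asymp b$) already suffices, and similarly the core error is controlled by the norm bounds alone; so assume every $\bm{U}_k$ is well-conditioned. From the SVD $\bm{U}_k=\bm{L}_k\bm{D}_k\bm{R}_k^\top$, setting $\widehat{\bm{Q}}_k:=\bm{L}_k\bm{R}_k^\top$ (the orthogonal polar factor) gives $\|\bm{U}_k-b\widehat{\bm{Q}}_k\|_\text{F}=\|\bm{D}_k-b\bm{I}_{r_k}\|_\text{F}\le\epsilon_k/b$. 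Since $\mathrm{col}(\cm{A}_{(k)})\subseteq\mathrm{col}(\bm{U}_k)=\mathrm{col}(\widehat{\bm{Q}}_k)$ while $\mathrm{col}(\cm{A}^*_{(k)})=\mathrm{col}(b^{-1}\bm{W}_k)$ has dimension exactly $r_k$ with $\sigma_{r_k}(\cm{A}^*_{(k)})\ge\underline{\sigma}$, a Wedin-type subspace perturbation bound yields $\|\sin\Theta(\widehat{\bm{Q}}_k,b^{-1}\bm{W}_k)\|\lesssim\|\cm{A}_{(k)}-\cm{A}^*_{(k)}\|/\underline{\sigma}\le\|\cm{A}-\cm{A}^*\|_\text{F}/\underline{\sigma}$; choosing $\bm{O}_k$ to realise this alignment and passing from operator to Frobenius norm, $\|b\widehat{\bm{Q}}_k-\bm{W}_k\|_\text{F}\lesssim b\,\underline{\sigma}^{-1}\|\cm{A}-\cm{A}^*\|_\text{F}$, hence $\|\bm{U}_k-\bm{W}_k\|_\text{F}^2\lesssim b^2\underline{\sigma}^{-2}\|\cm{A}-\cm{A}^*\|_\text{F}^2+b^{-2}\epsilon_k^2$. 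With the factors now aligned, I would express the cores through pseudo-inverses, $\cm{S}=\cm{A}\times_{j=1}^d\bm{U}_j^\dagger$ and $\cm{B}=\cm{A}^*\times_{j=1}^d\bm{W}_j^\dagger$, use $\|\bm{U}_j^\dagger\|\lesssim b^{-1}$ and the perturbation bound $\|\bm{U}_j^\dagger-\bm{W}_j^\dagger\|_\text{F}\lesssim b^{-2}\|\bm{U}_j-\bm{W}_j\|_\text{F}$, and telescope once more to get $\|\cm{S}-\cm{B}\|_\text{F}^2\lesssim b^{-2d}\|\cm{A}-\cm{A}^*\|_\text{F}^2+\bar{\sigma}^2b^{-2(d+1)}\sum_k\|\bm{U}_k-\bm{W}_k\|_\text{F}^2$. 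Adding the factor and core contributions and collecting powers of $b$, $\bar{\sigma}$, and $\underline{\sigma}$ then gives the claimed bound on $E$.

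\emph{Main obstacle.} The technical heart is the backward direction---turning a single Frobenius-norm reconstruction bound into mode-wise factor bounds. The plan there must simultaneously handle the non-identifiability (selecting rotations $\bm{O}_k$ that are consistent across all $d$ modes), the possible ill-conditioning of $\bm{U}_k$ (the balance-based case split), and the subspace-perturbation step, all while tracking how each mode's factor error propagates through the pseudo-inverse telescoping into the core error so that the final powers of $b$, $\bar{\sigma}$, and $\underline{\sigma}$ assemble into the stated form. It is this coordinated bookkeeping, rather than any single estimate, that makes the lemma nontrivial, which is why the authors invoke it directly from \citet{han2022optimal}.
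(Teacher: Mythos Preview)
The paper does not prove this lemma: it states that the result is Lemma~E.2 of \citet{han2022optimal} and explicitly omits the proof. So there is no in-paper argument to compare against; you correctly anticipated this in your final paragraph.

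Your sketch is the standard route for such Tucker ``error-equivalence'' lemmas and is almost certainly close in spirit to the cited proof. The forward bound via a $(d{+}1)$-term hybrid telescoping and operator-norm control is exactly right. For the backward bound, the combination of (i) polar factor to absorb the imbalance $\epsilon_k$, (ii) a Wedin/Davis--Kahan step on $\cm{A}_{(k)}$ versus $\cm{A}^*_{(k)}$ to align column spaces, and (iii) a pseudo-inverse telescoping for the core is the expected machinery. One small bookkeeping point: your core estimate $\|\cm{S}-\cm{B}\|_\text{F}^2\lesssim b^{-2d}\|\cm{A}-\cm{A}^*\|_\text{F}^2+\bar{\sigma}^2b^{-2(d+1)}\sum_k\|\bm{U}_k-\bm{W}_k\|_\text{F}^2$, once you substitute the factor bound, produces an extra $\bar{\sigma}^2b^{-2(d+2)}\sum_k\epsilon_k^2$ term beyond the stated $b^{-2}\sum_k\epsilon_k^2$. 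Under the paper's scaling $b\asymp\bar{\sigma}^{1/(d+1)}$ this extra term is harmless (it is $\asymp b^{-2}\sum_k\epsilon_k^2$), but if you want constants depending only on $c_0$ as the lemma asserts, you would need to organize the core step a bit more carefully---for instance, by writing $\cm{B}=\cm{A}^*\times_j b^{-2}\bm{W}_j^\top$ and comparing to $\cm{A}\times_j b^{-2}\bm{W}_j^\top$ first, so that the $\bar{\sigma}$ factor does not multiply the $\epsilon_k$ contribution. This is a refinement rather than a gap.
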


The second lemma is an upper bound for the second and higher-order terms in the perturbation of a tensor Tucker decomposition, as the higher-order generalization of Lemma E.3 in \citet{han2022optimal}.

\begin{lemma}\label{lemma:perturb}
    Suppose that $\cm{A}^*=\cm{S}^*\times_{k=1}^d\bm{U}_k^*$ and $\cm{A}=\cm{S}\times_{k=1}^d\bm{U}_k$ with $\|\bm{U}_k\|\asymp\|\bm{U}_k^*\|\asymp b$ and $\|\cm{S}_{(k)}\|\asymp\|\cm{S}_{(k)}^*\|\asymp \bar{\sigma}b^{-d}$. For $\bm{O}_k\in\mathbb{O}_{r_k}$, $1\leq k\leq d$, $\|\cm{H}\|_\textup{F}\leq C_db^{-2}\bar{\sigma}\textup{Err}$, where $\cm{H}=\cm{A}^*-\cm{A}_0-\sum_{k=1}^d(\cm{A}_{k}-\cm{A})$ and $\textup{Err}=\sum_{k=1}^d\|\bm{U}_k-\bm{U}_k^*\bm{O}_k\|_\textup{F}^2+\|\cm{S}-\cm{S}^*\times_{k=1}^d\bm{O}_k^\top\|_\textup{F}^2$. Then, $\|\cm{H}\|_\textup{F}\leq C_db^{-2}\bar{\sigma}\textup{Err}$.
\end{lemma}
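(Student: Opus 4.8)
The plan is to expand the Tucker reconstruction $\cm{A} = \cm{S}\times_{k=1}^d \bm{U}_k$ around the reference point $(\cm{S}^*\times_{j=1}^d\bm{O}_j^\top, \bm{U}_1^*\bm{O}_1, \dots, \bm{U}_d^*\bm{O}_d)$, collect the terms by order of perturbation, and show that the first-order part is exactly $\cm{A}_0 + \sum_{k=1}^d(\cm{A}_k - \cm{A})$, so that $\cm{H}$ consists only of second- and higher-order terms. Write $\bm{\Delta}_k = \bm{U}_k - \bm{U}_k^*\bm{O}_k$ for $k=1,\dots,d$ and $\bm{\Delta}_0 = \cm{S} - \cm{S}^*\times_{j=1}^d\bm{O}_j^\top$, so that $\sum_{k=0}^d\|\bm{\Delta}_k\|_\textup{F}^2 = \textup{Err}$. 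Without loss of generality absorb the rotations $\bm{O}_k$ into $\bm{U}_k^*$ and $\cm{S}^*$ (replacing $\bm{U}_k^*\bm{O}_k$ by $\bm{U}_k^*$ etc.), since this is just a relabeling that leaves all Frobenius norms, $\|\bm{U}_k^*\|$, and $\|\cm{S}^*_{(k)}\|$ unchanged. Then $\cm{A}_k = \cm{S}\times_{j\neq k}\bm{U}_j\times_k\bm{U}_k^*$ and $\cm{A}_0 = \cm{S}^*\times_{j=1}^d\bm{U}_j$.

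The first step is the multilinear expansion: substituting $\bm{U}_k = \bm{U}_k^* + \bm{\Delta}_k$ and $\cm{S} = \cm{S}^* + \bm{\Delta}_0$ into $\cm{A}$ and expanding by multilinearity gives a sum over all subsets $S\subseteq\{0,1,\dots,d\}$ of a term in which the factors indexed by $S$ are the ``delta'' pieces and the rest are the starred pieces. The $S=\emptyset$ term is $\cm{A}^*$; the singleton terms $S=\{k\}$, $k=1,\dots,d$, sum to $\sum_{k=1}^d \bigl(\cm{S}^*\times_{j\neq k}\bm{U}_j^*\times_k\bm{\Delta}_k\bigr)$ and the term $S=\{0\}$ is $\bm{\Delta}_0\times_{j=1}^d\bm{U}_j^*$; everything else has $|S|\geq 2$. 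The second step is a bookkeeping identity: one checks that $\cm{A}_0 + \sum_{k=1}^d(\cm{A}_k - \cm{A}) - \cm{A}^*$ equals exactly the sum of the singleton terms (this is the standard ``sum of one-factor-replaced reconstructions minus $d$ copies'' telescoping that underlies Lemma~E.3 of \citet{han2022optimal}), so that $\cm{H} = \cm{A}^* - \cm{A}_0 - \sum_{k=1}^d(\cm{A}_k-\cm{A})$ picks up precisely $-1$ times the sum of all $|S|\geq 2$ terms of $\cm{A}$.

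The third step is to bound each higher-order term. A term indexed by $S$ with $|S|=m\geq 2$ is a mode product of the core (starred or delta) with $d$ factor matrices, of which $m$ (or $m-1$ if $0\in S$) are deltas; using submultiplicativity of the operator norm under mode products and $\|\cm{S}_{(k)}\|\asymp\|\cm{S}^*_{(k)}\|\asymp\bar\sigma b^{-d}$, $\|\bm{U}_k\|\asymp\|\bm{U}_k^*\|\asymp b$, such a term has Frobenius norm at most $C_d\,\bar\sigma b^{-d}\cdot b^{d-\#\text{deltas among the }\bm{U}\text{'s}}\cdot\prod_{k\in S}\|\bm{\Delta}_k\|_\textup{F}$; since $\#\text{deltas among the }\bm{U}\text{'s}$ is either $m$ or $m-1$, and $\prod_{k\in S}\|\bm{\Delta}_k\|_\textup{F}\leq \bigl(\max_k\|\bm{\Delta}_k\|_\textup{F}\bigr)^{m-2}\prod_{k\in S,\,|S|=2}\|\bm{\Delta}_k\|_\textup{F}\leq \textup{Err}^{(m-2)/2}\cdot\textup{Err}$, each term is bounded by $C_d b^{-2}\bar\sigma\,\textup{Err}\cdot(\textup{Err}/b^2)^{(m-2)/2}$. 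Summing the finitely many terms and using $\textup{Err}\lesssim b^2$ (which holds by condition (C2) / Lemma~\ref{lemma:1} in the regime of interest, so $(\textup{Err}/b^2)^{(m-2)/2}\leq 1$) yields $\|\cm{H}\|_\textup{F}\leq C_d b^{-2}\bar\sigma\,\textup{Err}$.

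The main obstacle is purely the combinatorial bookkeeping in the second step: verifying that the telescoping identity $\cm{A}_0+\sum_{k=1}^d\cm{A}_k - d\,\cm{A} - \cm{A}^* = -\sum_{|S|\geq 2}(\text{term }S)$ holds exactly, i.e.\ that all the one-delta terms cancel correctly against the definitions of $\cm{A}_k$ and $\cm{A}_0$ and that $\cm{A}_k$ itself contributes its own higher-order terms with the right signs. Once the expansion is organized so that each $\cm{A}_k$ is itself expanded in the deltas, this is mechanical but needs care with signs; the norm estimates in the third step are routine given the operator-norm bounds already assumed in the hypotheses.
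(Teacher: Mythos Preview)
Your overall strategy---expand multilinearly, identify $\cm{H}$ as a combination of terms each carrying at least two $\bm{\Delta}$-factors, then bound term by term---is exactly what the paper does, and your third step (the norm estimates) is fine. But the precise bookkeeping claim in your second step is wrong in two places, and since you flagged this as the main obstacle it is worth fixing.

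First, the sentence ``$\cm{A}_0 + \sum_{k=1}^d(\cm{A}_k - \cm{A}) - \cm{A}^*$ equals exactly the sum of the singleton terms'' is false: that quantity is $-\cm{H}$, which is a combination of \emph{higher}-order terms, not first-order ones. Second, the identity you write at the end, $\cm{A}_0+\sum_{k}\cm{A}_k - d\,\cm{A} - \cm{A}^* = -\sum_{|S|\geq 2}(\text{term }S)$, is also false in general: the coefficients on the right are not all $\pm 1$. A direct computation (absorbing the rotations) gives
\[
\cm{H} \;=\; \sum_{\substack{T\subseteq\{1,\dots,d\}\\ |T|\geq 2}} (|T|-1)\,\cm{S}^*\times_{j\in T}\bm{\Delta}_j\times_{j\notin T}\bm{U}_j^*
\;+\; \sum_{\substack{T'\subseteq\{1,\dots,d\}\\ T'\neq\emptyset}} |T'|\,\bm{\Delta}_0\times_{j\in T'}\bm{\Delta}_j\times_{j\notin T'}\bm{U}_j^*,
\]
so the integer weights range up to $d$, not $1$ (e.g.\ for $d=2$ the term $\bm{\Delta}_0\times_1\bm{\Delta}_1\times_2\bm{\Delta}_2$ appears with coefficient $2$). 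This does not damage the final bound---every term still has at least two $\bm{\Delta}$-factors and the weights are $O_d(1)$---but your ``precisely $-1$ times'' claim should be replaced by ``a bounded-coefficient combination of the $|S|\geq 2$ terms.'' The paper's proof sidesteps this by never asserting an exact identity: it simply applies the triangle inequality directly to the definition of $\cm{H}$ and groups the resulting terms by how many $\bm{\Delta}$-factors they carry, which is a safer (if less explicit) route to the same bound.
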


\begin{proof} We have that
    \begin{equation}
    \begin{split}
        \|\cm{H}\|_\text{F} & \leq \sum_{j\neq k}\left\|\cm{S}^*\times_{i=j,k}(\bm{U}_i-\bm{U}^*_i\bm{O}_i)\times_{i\neq j,k}\bm{U}_j^*\bm{O}_j\right\|_\text{F}\\
        & + \sum_{j\neq k \neq l}\left\|\cm{S}^*\times_{i=j,k,l}(\bm{U}_i-\bm{U}^*_i\bm{O}_i)\times_{i\neq j,k,l}\bm{U}_j^*\bm{O}_j\right\|_\text{F}\\
        & + \cdots + \sum_{j}\|\cm{S}^*\times_{i\neq j}(\bm{U}_i-\bm{U}^*_i\bm{O}_i)\times_{i=j}\bm{U}_j^*\bm{O}_j\|_\text{F} + \|\cm{S}^*\times_{i=1}^d(\bm{U}_i-\bm{U}^*_i\bm{O}_i)\|_\text{F}\\
        & \leq \sum_{j\neq k}\left\|(\cm{S}\times_{k=1}^d\bm{O}_k-\cm{S}^*)\times_{i=j,k}(\bm{U}_i-\bm{U}^*_i\bm{O}_i)\times_{i\neq j,k}\bm{U}_j^*\bm{O}_j\right\|_\text{F}\\
        & + \sum_{j\neq k \neq l}\left\|(\cm{S}\times_{k=1}^d\bm{O}_k-\cm{S}^*)\times_{i=j,k,l}(\bm{U}_i-\bm{U}^*_i\bm{O}_i)\times_{i\neq j,k,l}\bm{U}_j^*\bm{O}_j\right\|_\text{F}\\
        & + \cdots + \sum_{j}\|(\cm{S}\times_{k=1}^d\bm{O}_k-\cm{S}^*)\times_{i\neq j}(\bm{U}_i-\bm{U}^*_i\bm{O}_i)\times_{i=j}\bm{U}_j^*\bm{O}_j\|_\text{F}\\
        & + \|(\cm{S}\times_{k=1}^d\bm{O}_k-\cm{S}^*)\times_{i=1}^d(\bm{U}_i-\bm{U}^*_i\bm{O}_i)\|_\text{F}\\
        & \leq \binom{d}{2}B_2B_1^{d-2}B_3 + \binom{d}{3}B_2B_1^{d-3}B_3^{3/2} + \cdots + dB_2B_1B_3^{(d-1)/2}+B_2B_3^{d/2}\\
        & + \binom{d}{2}B_1^{d-2}B_3^{3/2} + \binom{d}{3}B_1^{d-3}B_3^{2} + \cdots + dB_1B_3^{d/2}+B_3^{(d+1)/2} \leq C_db^{-2}\bar{\sigma}\textup{Err},
        \end{split}
    \end{equation}
    where $B_1=\max_k(\|\bm{U}^*_k\|,\|\bm{U}_k\|)$, $B_2=\max_k(\|\cm{S}^*_{(k)}\|,\|\cm{S}_{(k)}\|)$, and $B_3=\max_k(\|\bm{U}_k-\bm{U}_k^*\bm{O}_k\|_\text{F}^2,\|\cm{S}-\cm{S}^*\times_{k=1}^d\bm{O}_k\|_\text{F}^2)$.

\end{proof}

\newpage
\section{Properties of Robust Gradient Estimators}\label{append:B}

\subsection{General Proof Strategy}

The most essential part of the statistical analysis is to prove that the robust gradient estimators are stable. For $1\leq k\leq d$, the robust gradient estimator with respect to $\bm{U}_k$ is
\begin{equation}
    \bm{G}_k=\frac{1}{n}\sum_{i=1}^n\text{T}(\nabla\overline{\mathcal{L}}(\cm{A};z_i)_{(k)}\bm{V}_k,\tau).
\end{equation}
Note that
\begin{equation}\label{eq:stable_robust_decomp}
    \begin{split}
        \bm{G}_k - \nabla_k\mathcal{R} = & ~\frac{1}{n}\sum_{i=1}^n\text{T}(\nabla\overline{\mathcal{L}}(\cm{A};z_i)_{(k)}\bm{V}_k,\tau) - \mathbb{E}[\nabla\overline{\mathcal{L}}(\cm{A};z_i)_{(k)}\bm{V}_k]\\
        = &~ T_{k,1} + T_{k,2} + T_{k,3} + T_{k,4},
    \end{split}
\end{equation}
where
\begin{equation}
    \begin{split}
        T_{k,1} = & \mathbb{E}[\text{T}(\nabla\overline{\mathcal{L}}(\cm{A}^*;z_i)_{(k)}\bm{V}_k,\tau)] - \mathbb{E}[\nabla\overline{\mathcal{L}}(\cm{A}^*;z_i)_{(k)}\bm{V}_k],\\
        T_{k,2} = & \frac{1}{n}\sum_{i=1}^n\text{T}(\nabla\overline{\mathcal{L}}(\cm{A}^*;z_i)_{(k)}\bm{V}_k,\tau) - \mathbb{E}[\text{T}(\nabla\overline{\mathcal{L}}(\cm{A}^*;z_i)_{(k)}\bm{V}_k,\tau)],\\
        T_{k,3} = & \mathbb{E}[\nabla\overline{\mathcal{L}}(\cm{A}^*;z_i)_{(k)}\bm{V}_k] - \mathbb{E}[\nabla\overline{\mathcal{L}}(\cm{A};z_i)_{(k)}\bm{V}_k]\\
        & + \mathbb{E}[\text{T}(\nabla\overline{\mathcal{L}}(\cm{A};z_i)_{(k)}\bm{V}_k,\tau)] - \mathbb{E}[\text{T}(\nabla\overline{\mathcal{L}}(\cm{A}^*;z_i)_{(k)}\bm{V}_k,\tau)],\\
        T_{k,4} = & \frac{1}{n}\sum_{i=1}^n\text{T}(\nabla\overline{\mathcal{L}}(\cm{A};z_i)_{(k)}\bm{V}_k,\tau) - \frac{1}{n}\sum_{i=1}^n\text{T}(\nabla\overline{\mathcal{L}}(\cm{A}^*;z_i)_{(k)}\bm{V}_k,\tau)\\
        & - \mathbb{E}[\text{T}(\nabla\overline{\mathcal{L}}(\cm{A};z_i)_{(k)}\bm{V}_k,\tau)] + \mathbb{E}[\text{T}(\nabla\overline{\mathcal{L}}(\cm{A}^*;z_i)_{(k)}\bm{V}_k,\tau)].
    \end{split}
\end{equation}

Similarly, for $\cm{S}$, its robust gradient estimator is $$\cm{G}_0=\frac{1}{n}\sum_{i=1}^n\text{T}(\nabla\overline{\mathcal{L}}(\cm{A};z_i)\times_{j=1}^d\bm{U}_j^\top,\tau).$$ We can also decompose $\cm{G}_0 - \mathbb{E}{\nabla_0\mathcal{L}}$ into four components,
\begin{equation}\label{eq:stable_robust_decomp}
    \begin{split}
        \cm{G}_0 - \mathbb{E}[\nabla_0\mathcal{L}] = & ~\frac{1}{n}\sum_{i=1}^n\text{T}(\nabla\overline{\mathcal{L}}(\cm{A};z_i)\times_{j=1}^d\bm{U}_j^\top,\tau) - \mathbb{E}[\nabla\overline{\mathcal{L}}(\cm{A};z_i)\times_{j=1}^d\bm{U}_j^\top,\tau)]\\
        = & ~T_{0,1} + T_{0,2} + T_{0,3} +T_{0,4},
    \end{split}
\end{equation}
where
\begin{equation}
    \begin{split}
        T_{0,1} = & \mathbb{E}[\text{T}(\nabla\overline{\mathcal{L}}(\cm{A}^*;z_i)\times_{j=1}^d\bm{U}_j^\top,\tau)] - \mathbb{E}[\nabla\overline{\mathcal{L}}(\cm{A}^*;z_i)\times_{j=1}^d\bm{U}_j^\top],\\
        T_{0,2} = & \frac{1}{n}\sum_{i=1}^n\text{T}(\nabla\overline{\mathcal{L}}(\cm{A}^*;z_i)\times_{j=1}^d\bm{U}_j^\top,\tau) - \mathbb{E}[\text{T}(\nabla\overline{\mathcal{L}}(\cm{A}^*;z_i)\times_{j=1}^d\bm{U}_j^\top,\tau)],\\
        T_{0,3} = & \mathbb{E}[\nabla\overline{\mathcal{L}}(\cm{A}^*;z_i)\times_{j=1}^d\bm{U}_j^\top] - \mathbb{E}[\nabla\overline{\mathcal{L}}(\cm{A};z_i)\times_{j=1}^d\bm{U}_j^\top]\\
        & + \mathbb{E}[\text{T}(\nabla\overline{\mathcal{L}}(\cm{A};z_i)\times_{j=1}^d\bm{U}_j^\top,\tau)] - \mathbb{E}[\text{T}(\nabla\overline{\mathcal{L}}(\cm{A}^*;z_i)\times_{j=1}^d\bm{U}_j^\top,\tau)],\\
        T_{0,4} = & \frac{1}{n}\sum_{i=1}^n\text{T}(\nabla\overline{\mathcal{L}}(\cm{A};z_i)\times_{j=1}^d\bm{U}_j^\top,\tau) - \frac{1}{n}\sum_{i=1}^n\text{T}(\nabla\overline{\mathcal{L}}(\cm{A}^*;z_i)\times_{j=1}^d\bm{U}_j^\top,\tau)\\
        & - \mathbb{E}[\text{T}(\nabla\overline{\mathcal{L}}(\cm{A};z_i)\times_{j=1}^d\bm{U}_j^\top,\tau)] + \mathbb{E}[\text{T}(\nabla\overline{\mathcal{L}}(\cm{A}^*;z_i)\times_{j=1}^d\bm{U}_j^\top,\tau)].
    \end{split}
\end{equation}
To prove the stability of the robust gradient estimators, it suffices to give proper upper bounds of $\|T_{k,j}\|_\text{F}$ for $0\leq k\leq d$ and $1\leq j\leq 4$.

Here, $T_{k,1}$ is the truncation bias at the ground truth $\cm{A}^*$, and $T_{k,2}$ represents the deviation of the truncated estimation around its expectation. As each truncated gradient, $\text{T}(\nabla\overline{\mathcal{L}}(\cm{A};z_i)_{(k)}\bm{V}_k,\tau)$, is a bounded variable, we can apply the Bernstein inequality \citep{wainwright2019high} to achieve a sub-Gaussian-type concentration without the Gaussian distributional assumption on the data. The truncation parameter $\tau$ controls the magnitude of $\|T_{k,1}\|_\text{F}$ and $\|T_{k,2}\|_\text{F}$, and an optimal $\tau$ gives $\|T_{k,1}\|_\text{F}\asymp\|T_{k,2}\|_\text{F}\asymp\xi_k$. For $T_{k,3}$, given some regularity conditions, we can obtain an upper bound for the truncation bias of the second-order approximation error in $\|T_{k,3}\|_\text{F}$. Similarly, as $\text{T}(\nabla\overline{\mathcal{L}}(\cm{A};z_i)_{(k)}\bm{V}_k,\tau) - \text{T}(\nabla\overline{\mathcal{L}}(\cm{A}^*;z_i)_{(k)}\bm{V}_k,\tau)$ is bounded, we can also achieve a sub-Gaussian-type concentration and show that $\|T_{k,3}\|_\text{F}\asymp\|T_{k,4}\|_\text{F}\lesssim\phi^{1/2}\|\cm{A}-\cm{A}^*\|_\text{F}$. Hence, we can show that $\sum_{i=1}^4\|T_{k,i}\|_\text{F}^2\lesssim\phi\|\cm{A}-\cm{A}^*\|_\text{F}^2+\xi_k^2$. 

By controlling the truncation bias, deviation, and approximation errors, we demonstrate that the truncated gradient estimator is stable and achieves optimal performance under certain conditions.
A similar approach can be applied to the gradient with respect to the core tensor $\cm{S}$, establishing the stability of the corresponding robust estimator.

\newpage

\subsection{Proof of Theorem \ref{thm:linearregression}}

\begin{proof}

The proof consists of seven steps. In Step 1, we present the local moment bounds used in partial gradients. In Steps 2 to 6, we prove the stability of the robust gradient estimators for the general $1\leq t\leq T$ and, hence, we omit the notation $(t)$ for simplicity. Specifically, in Steps 2 to 5, we give the upper bounds for $\|T_{k,1}\|_\text{F},\dots,\|T_{k,4}\|_\text{F}$, respectively, for $1\leq k\leq d_0$. In Step 6, we extend the proof to the terms for the core tensor. In the last step, we apply the results to the local convergence analysis in Theorem \ref{thm:1} and verify the corresponding conditions. Throughout the first six steps, we assume that for each $1\leq k\leq d$, $\|\bm{U}_k\|\asymp \bar{\sigma}^{1/(d+1)}$, $\max_{1\leq k\leq d}\|\cm{S}_{(k)}\|\asymp \bar{\sigma}^{1/(d+1)}$, and $\|\sin\theta(\bm{U}_k,\bm{U}_k^*)\|\leq\delta$ and will verify them in the last step.\\

\noindent\textit{Step 1.} (Calculate local moments)

\noindent For any $1\leq k\leq d_0$, we let $r_k'=r_1r_2\cdots r_{d_0}/r_k$, $\bar{r}_{d_0}=r_{d_0+1}r_{d_0+2}\cdots r_d$, and
\begin{equation}
    \nabla\overline{\mathcal{L}}(\cm{A}^*;z_i)_{(k)}\bm{V}_k=\left[(\cm{X}_i\times_{j=1,j\neq k}^{d_0}\bm{U}_{j}^\top)_{(k)}\otimes\text{vec}(-\cm{E}_i\times_{j=1}^{d-d_0}\bm{U}_{d_0+j}^\top)^\top\right]\cm{S}_{(k)}^\top.
\end{equation}
Denote the columns of $\cm{S}_{(k)}^\top$ as $\cm{S}_{(k)}^\top=[\bm{s}_{k,1},s_{k,2},\dots,\bm{s}_{k,r_k}]$ such that $\text{vec}(\bm{S}_{k,j})=\bm{s}_{k,j}$. The $(l,m)$-th entry of $\nabla\overline{\mathcal{L}}(\cm{A}^*;z_i)_{(k)}\bm{V}_k$ is
\begin{equation}
    \begin{split}
        & \left(\left[(\cm{X}_i\times_{j=1,j\neq k}^{d_0}\bm{U}_j^\top)_{(k)}\otimes\text{vec}(-\cm{E}_i\times_{j=1}^{d-d_0}\bm{U}_{d_0+j}^\top)^\top\right]\bm{s}_{k,m}\right)_{l}\\
        & = \left[(\cm{X}_i\times_{j=1,j\neq k}^{d_0}\bm{U}_j^\top)_{(k)}\bm{S}_{k,m}\text{vec}(-\cm{E}_i\times_{j=1}^{d-d_0}\bm{U}_{d_0+j}^\top)\right]_l\\
        & = \bm{c}_l^\top(\cm{X}_i)_{(k)}(\otimes_{j=1,j\neq k}^{d_0}\bm{U}_j)\bm{S}_{k,m}(\otimes_{j=d_0+1}^d\bm{U}_j^\top)\bm{e}_i,
    \end{split}
\end{equation}
where $\bm{c}_l$ is the coordinate vector whose $l$-th entry is one and the others are zero, and $\bm{e}_i=\text{vec}(-\cm{E}_i)$.

For the fixed $\bm{U}_j$'s, let $\bm{M}_{k,1}=(\otimes_{j=1,j\neq k}^{d_0}\bm{U}_j)/\|(\otimes_{j=1,j\neq k}^{d_0}\bm{U}_j)\|$ and $\bm{c}_l^\top(\cm{X}_i)_{(k)}\bm{M}_{k,1}=(w^{(i)}_{k,l,1},\dots,w^{(i)}_{k,l,r_k'})$.
Similarly, let $\bm{M}_{k,2}=(\otimes_{j=d_0+1}^d\bm{U}_j^\top)/\|(\otimes_{j=d_0+1}^d\bm{U}_j^\top)\|$ and $\bm{M}_{k,2}\bm{e}_i=(z^{(i)}_{k,1},\dots,z^{(i)}_{k,\bar{r}_{d_0}})^\top$. 
By Assumption \ref{asmp:1}, $\mathbb{E}[|w^{(i)}_{k,l,j}|^{1+\epsilon}]\leq M_{x,1+\epsilon,\delta}$ and $\mathbb{E}[|z^{(i)}_{k,{m'}}|^{1+\epsilon}]\leq M_{e,1+\epsilon,\delta}$, for $j=1,2,\dots,r_k'$, $l=1,2,\dots,p_k$, and ${m'}=1,2,\dots,\bar{r}_{d_0}$. Let $\bm{M}_{k,3,m}=\bm{S}_{k,m}/\|\bm{S}_{k,m}\|$ and $\bm{M}_{k,3,m}\bm{M}_{k,2}\bm{e}_i=(z_{k,m,1}^{(i)},\dots,z_{k,m,r_k'}^{(i)})$, for $m=1,2,\dots,r_k$. Then, $\mathbb{E}[|z_{k,m,j}^{(i)}|^{1+\epsilon}|\cm{X}_i]\lesssim M_{e,1+\epsilon,\delta}$. Let $v^{(i)}_{k,j,l,m}=w^{(i)}_{k,l,j}z^{(i)}_{k,m,j}$, which satisfies that
\begin{equation}\label{eq:moment1}
    \mathbb{E}\left[|v_{k,j,l,m}^{(i)}|^{1+\epsilon}\right]=\mathbb{E}\left[|w_{k,j,l}^{(i)}|^{1+\epsilon}\cdot\mathbb{E}\left[|z_{k,m,j}^{(i)}|^{1+\epsilon}|\cm{X}_i\right]\right]\lesssim M_{x,1+\epsilon,\delta}\cdot M_{e,1+\epsilon,\delta} = M_{\text{eff},1+\epsilon,\delta}.
\end{equation}
Let $v^{(i)}_{k,l,m}=\sum_{j=1}^{r_k'}v_{k,j,l,m}^{(i)}$ and $\mathbb{E}[|v^{(i)}_{k,l,m}|^{1+\epsilon}]\lesssim M_{\text{eff},1+\epsilon,\delta}$.

For $d_0+1\leq k\leq d$, we let $r_k'=r_{d_0+1}r_{d_0+2}\cdots r_d/r_k$ and
\begin{equation}
    \nabla{\mathcal{L}}(\cm{A}^*;z_i)_{(k)}\bm{V}_k = [(-\cm{E}_i\times_{j=1,j\neq k-d_0}^{d-d_0}\bm{U}_{d_0+j}^\top)_{(k-d_0)}\otimes\text{vec}(\cm{X}_i\times_{j=1}^{d_0}\bm{U}_{j}^\top)^\top]\cm{S}_{(k)}^\top.
\end{equation}
The $(l,m)$-th entry of $n^{-1}\sum_{i=1}^n\nabla\overline{\mathcal{L}}(\cm{A}^*;z_i)_{(k)}\bm{V}_k$ is
\begin{equation}
    \begin{split}
        & \frac{1}{n}\sum_{i=1}^n([(-\cm{E}_i\times_{j=1,j\neq k-d_0}^{d-d_0}\bm{U}_j^\top)_{(k-d_0)}\otimes\text{vec}(\cm{X}_i\times_{j=1}^{d_0}\bm{U}_j^\top)^\top]\bm{s}_{k,m})_l\\
        & = \frac{1}{n}\sum_{i=1}^n\bm{c}_l^\top(-\cm{E}_i)_{(k-d_0)}(\otimes_{j=d_0+1,j\neq k}^d\bm{U}_j)\cdot\bm{S}_{k,m}(\otimes_{j=1}^{d_0}\bm{U}_j^\top)\bm{x}_i.
    \end{split}
\end{equation}
Let $\bm{M}_{k,1}=(\otimes_{j=d_0+1,j\neq k}^d\bm{U}_j)/\|\otimes_{j=d_0+1,j\neq k}^d\bm{U}_j\|$ and $\bm{c}_l^\top(-\cm{E}_i)_{(k-d_0)}\bm{M}_{k,1}=(u_{k,l,1}^{(i)},\dots,u_{k,l,r_k'}^{(i)})$. Let $\bm{M}_{k,2}=(\otimes_{j=1}^{d_0}\bm{U}_j^\top)/\|\otimes_{j=1}^{d_0}\bm{U}_j^\top\|$ and $\bm{M}_{k,2}\bm{x}_i=(s_{k,1}^{(i)},s_{k,2}^{(i)},\dots,s_{k,r_1r_2\cdots r_{d_0}}^{(i)})^\top$. 

By Assumption \ref{asmp:1}, $\mathbb{E}[|u_{k,l,j}^{(i)}|^{1+\epsilon}|\cm{X}_i]\leq M_{e,1+\epsilon,\delta}$ and $\mathbb{E}[|s_{k,j'}^{(i)}|^{1+\epsilon}]\leq M_{x,1+\epsilon,\delta}$, for $j'=1,2,\dots,r_1r_2\cdots r_{d_0}$ and $l=1,2,\dots,p_k$. Let $\bm{M}_{k,3,m}=\bm{S}_{k,m}/\|\bm{S}_{k,m}\|$ and $\bm{M}_{k,3,m}\bm{M}_2\bm{x}_i=(s_{k,m,1}^{(i)},s_{k,m,2}^{(i)},\dots,s_{k,m,r_k'}^{(i)})$, where $\mathbb{E}[|s_{k,m,j}^{(i)}|^{1+\epsilon}]\lesssim M_{x,1+\epsilon,\delta}$. Let $r_{k,j,l,m}^{(i)}=u_{k,l,j}^{(i)}s_{k,m,j}^{(i)}$ and
\begin{equation}\label{eq:moment2}
    \mathbb{E}\left[|r_{k,j,l,m}^{(i)}|^{1+\epsilon}\right]=\mathbb{E}\left[|u_{k,j,l}^{(i)}|^{1+\epsilon}\cdot\mathbb{E}\left[|s_{k,m,j}^{(i)}|^{1+\epsilon}|\cm{X}_i\right]\right]\lesssim M_{x,1+\epsilon,\delta}\cdot M_{e,1+\epsilon,\delta} = M_{\text{eff},1+\epsilon,\delta}.
\end{equation}

In addition, for any $1\leq k \leq d$, we let $\bm{V}_k=[\bm{v}_{k,1},\dots,\bm{v}_{k,r_k}]$. The $(l,m)$-th entry of $\nabla\overline{\mathcal{L}}(\cm{A}^*;z_i)_{(k)}\bm{V}_k - \nabla\overline{\mathcal{L}}(\cm{A};z_i)_{(k)}\bm{V}_k$ is 
\begin{equation}
    \begin{split}
        & \bm{c}_l^\top[\cm{X}_i\circ \langle\cm{A}^*-\cm{A},\cm{X}_i\rangle]_{(k)}\bm{v}_{k,m} \\
        & = (\bm{v}_{k,m}\otimes\bm{c}_l)^\top \text{vec}((\cm{X}_i)_{(k)})\text{vec}(\cm{X}_i)^\top\text{vec}(\cm{A}^*-\cm{A})\\
        & = (\bm{v}_{k,m}\otimes\bm{c}_l)^\top\bm{P}_k^\top\text{vec}(\cm{X}_i)\text{vec}(\cm{X}_i)^\top\text{vec}(\cm{A}^*-\cm{A}).
    \end{split}
\end{equation}
Let $\bm{w}_{k,m,l}=\bm{P}_k(\bm{v}_{k,m}\otimes\bm{c}_l)/\|\bm{P}_k(\bm{v}_{k,m}\otimes\bm{c}_l)\|_2$. Then, we have
\begin{equation}\label{eq:moment3}
    \begin{split}
        & \mathbb{E}\left[\left|q_{k,m,l}^{(i)}\right|^{1+\lambda}\right] := \mathbb{E}\left[\left|\bm{w}_{k,m,l}^\top\text{vec}(\cm{X}_i)\text{vec}(\cm{X}_i)^\top\text{vec}(\cm{A}^*-\cm{A})\right|^{1+\lambda}\right]\\
        & \leq \mathbb{E}\left[\left|\bm{w}_{k,m,l}^\top\text{vec}(\cm{X}_i)\right|^{2+2\lambda}\right]^{1/2}\cdot\mathbb{E}\left[\left|\text{vec}(\cm{X}_i)^\top\frac{\text{vec}(\cm{A}^*-\cm{A})}{\|\text{vec}(\cm{A}^*-\cm{A})\|_2}\right|^{2+2\lambda}\right]^{1/2}\cdot\|\cm{A}^*-\cm{A}\|_\text{F}^{1+\lambda}\\
        & \leq M_{x,2+2\lambda}\cdot \|\cm{A}-\cm{A}^*\|_\text{F}^{1+\lambda}.
    \end{split}
\end{equation}~

\noindent\textit{Step 2.} (Bound $\|T_{k,1}\|_\text{F}$)

We first bound the bias, namely $T_{k,1}$ in \eqref{eq:stable_robust_decomp}. We have that
\begin{equation}
    \|T_{k,1}\|_\text{F}^2\asymp\bar{\sigma}^{\frac{2d}{d+1}}\sum_{l=1}^{p_k}\sum_{m=1}^{r_k}\left|\mathbb{E}[v^{(i)}_{k,l,m}] - \mathbb{E}[\text{T}(v^{(i)}_{k,l,m},\tau_k)]\right|^2,
\end{equation}
where $\tau_k=\tau\cdot\|\otimes_{j=1,j\neq k}^d\bm{U}_j\|^{-1}\cdot(\max_{1\leq m\leq r_k}\|\bm{S}_{k,m}\|)^{-1}\asymp[nM_{\text{eff},1+\epsilon,\delta}/\log(\bar{p})]^{1/(1+\epsilon)}$.

For any $l=1,2,\dots,p_k$ and $m=1,2,\dots,r_k$, by definition of the truncation operator $\text{T}(\cdot,\cdot)$, local moment condition in \eqref{eq:moment1}, and Markov's inequality,
\begin{equation}
    \begin{split}
        & \left|\mathbb{E}\left[v^{(i)}_{k,l,m}\right] - \mathbb{E}\left[\text{T}(v^{(i)}_{k,l,m},\tau_k)\right]\right| \leq \mathbb{E}\left[|v^{(i)}_{k,l,m}|\cdot1\{|v^{(i)}_{k,l,m}|\geq\tau_k\}\right]\\
        \leq &~ \mathbb{E}\left[|v^{(i)}_{k,l,m}|^{1+\epsilon}\right]^{1/(1+\epsilon)}\cdot\mathbb{P}(|v^{(i)}_{k,l,m}|\geq\tau_k)^{\epsilon/(1+\epsilon)}\\
        \leq &~ \mathbb{E}\left[|v^{(i)}_{k,l,m}|^{1+\epsilon}\right]^{1/(1+\epsilon)}\left(\frac{\mathbb{E}\left[|v^{(i)}_{k,l,m}|^{1+\epsilon}\right]}{\tau_k^{1+\epsilon}}\right)^{\epsilon/(1+\epsilon)}\\
        \asymp &~ M_{\text{eff},1+\epsilon,\delta}\cdot\tau_k^{-\epsilon} \asymp \left[\frac{M_{\text{eff},1+\epsilon,\delta}^{1/\epsilon}\log(\bar{p})}{n}\right]^{\epsilon/(1+\epsilon)}
    \end{split}
\end{equation}
with truncation parameter $\tau_k\asymp\left[nM_{\text{eff},1+\epsilon,\delta}/\log(\bar{p})\right]^{1/(1+\epsilon)}$.

Hence, for $k=1,\dots,d_0$,
\begin{equation}
    \left\|T_{k,1}\right\|_\text{F}\lesssim \bar{\sigma}^{d/(d+1)}\sqrt{p_kr_k}\left[\frac{M_{\text{eff},1+\epsilon,\delta}^{1/\epsilon}\log(\bar{p})}{n}\right]^{\frac{\epsilon}{1+\epsilon}}.
\end{equation}
The results for $k=d_0+1,\dots,d$ can be derived similarly by the condition in \eqref{eq:moment2}.\\

\noindent\textit{Step 3.} (Bound $\|T_{k,2}\|_\text{F}$)

\noindent For $T_{k,2}$ in \eqref{eq:stable_robust_decomp} and $k=1,2,\dots,d_0$, similarly to $T_{k,1}$,
\begin{equation}
    \begin{split}
        &\|T_{k,2}\|_\text{F}^2
        \asymp\bar{\sigma}^{\frac{2d}{d+1}}\sum_{1\leq l\leq p_k,1\leq m\leq r_k}\left|\frac{1}{n}\sum_{i=1}^n\text{T}(v^{(i)}_{k,l,m},\tau_k)-\mathbb{E}[\text{T}(v^{(i)}_{k,l,m},\tau_k)]\right|^2.
    \end{split}
\end{equation}
For each $i=1,2,\dots,n$, it can be checked that
\begin{equation}
    \mathbb{E}\left[\text{T}(v^{(i)}_{k,l,m},\tau_k)^2\right] \leq \tau_k^{1-\epsilon}\cdot\mathbb{E}\left[|v^{(i)}_{k,l,m}|^{1+\epsilon}\right] \asymp \tau_k^{1-\epsilon}\cdot M_{\text{eff},1+\epsilon,\delta}.
\end{equation}
Thus, by the nature of truncation and local moment condition in \eqref{eq:moment1}, we have the upper bound for the variance
\begin{equation}
    \text{var}(\text{T}(v^{(i)}_{k,l,m},\tau_k)) \leq\mathbb{E}\left[\text{T}(v^{(i)}_{k,l,m},\tau_k)^2\right]\lesssim \tau_k^{1-\epsilon}\cdot M_{\text{eff},1+\epsilon,\delta}.
\end{equation}
Also, for any $q=3,4,\dots$, the higher-order moments satisfy that
\begin{equation}
    \begin{split}
        &\mathbb{E}\left[\left|\text{T}(v^{(i)}_{k,l,m},\tau_k)-\mathbb{E}[\text{T}(v^{(i)}_{k,l,m},\tau_k)]\right|^q\right]\leq (2\tau_k)^{q-2}\cdot\mathbb{E}\left[\left(\text{T}(v^{(i)}_{k,l,m},\tau_k)-\mathbb{E}[\text{T}(v^{(i)}_{k,l,m},\tau_k)]\right)^2\right].
    \end{split}
\end{equation}
By Bernstein's inequality, for any $1\leq l\leq p_k$, $1\leq m\leq r_k$, and $0<t\lesssim \tau_k^{-\epsilon}M_{\text{eff},1+\epsilon,\delta}$,
\begin{equation}
    \begin{split}
        & \mathbb{P}\left(\left|\frac{1}{n}\sum_{i=1}^n\text{T}(v^{(i)}_{k,l,m},\tau_k)-\mathbb{E}\text{T}(v^{(i)}_{k,l,m},\tau_k)\right|\geq t\right)
        \leq 2\exp\left(-\frac{nt^2}{4\tau_k^{1-\epsilon}M_{\text{eff},1+\epsilon,\delta}}\right).
    \end{split}
\end{equation}
Let $t=CM_{\text{eff},1+\epsilon,\delta}^{1/(1+\epsilon)}\log(\bar{p})^{\epsilon/(1+\epsilon)}n^{-\epsilon/(1+\epsilon)}$. Therefore, we have
\begin{equation}
    \begin{split}
        & \mathbb{P}\Bigg(\Bigg|\frac{1}{n}\sum_{i=1}^n\text{T}(v^{(i)}_{k,l,m},\tau_k)-\mathbb{E}\text{T}(v^{(i)}_{k,l,m},\tau_k)\Bigg|\gtrsim \left[\frac{M_{\text{eff},1+\epsilon,\delta}^{1/\epsilon}\log(\bar{p})}{n}\right]^{\epsilon/(1+\epsilon)}\Bigg)\\ 
        & \leq C\exp\left(-C\log(\bar{p})\right)
    \end{split}
\end{equation}
and
\begin{equation}
    \begin{split}
        & \mathbb{P}\left(\max_{\substack{1\leq l\leq p_k\\1\leq m\leq r_k}}\Bigg|\frac{1}{n}\sum_{i=1}^n\text{T}(v^{(i)}_{k,l,m},\tau_k)-\mathbb{E}\text{T}(v^{(i)}_{k,l,m},\tau_k)\Bigg|\gtrsim \left[\frac{M_{\text{eff},1+\epsilon,\delta}^{1/\epsilon}\log(\bar{p})}{n}\right]^{\epsilon/(1+\epsilon)}\right)\\
        & \leq Cp_kr_k\exp\left(-C\log(\bar{p})\right)\leq C\exp(-C\log(\bar{p})).
    \end{split}
\end{equation}
Hence, for $1\leq k\leq d_0$, with high probability at least $1-C\exp(-C\log(\bar{p}))$,
\begin{equation}
    \begin{split}
        &\left\|\frac{1}{n}\sum_{i=1}^n\text{T}(\nabla\overline{\mathcal{L}}(\cm{A}^*;z_i)_{(k)}\bm{V}_k,\tau) - \mathbb{E}[\text{T}(\nabla\overline{\mathcal{L}}(\cm{A}^*;z_i)_{(k)}\bm{V}_k,\tau)]\right\|_\text{F}\\
        & \lesssim \bar{\sigma}^{d/(d+1)}\sqrt{p_kr_k}\left[\frac{M_{\text{eff},1+\epsilon,\delta}^{1/\epsilon}\log(\bar{p})}{n}\right]^{\epsilon/(1+\epsilon)}.
    \end{split}
\end{equation}
Similarly, by the nature of truncation operator and local moment condition in \eqref{eq:moment3}, the same result can be obtained for $k=d_0+1,\dots,d$.\\

\noindent\textit{Step 4.} (Bound $\|T_{k,3}\|_\text{F}$ for $1\leq k\leq d_0$)

\noindent By definition, the $(l,m)$-th entry of $T_{k,3}$ can be bounded as
\begin{equation}
    \begin{split}
        |(T_{k,3})_{l,m}| & \asymp \bar{\sigma}^{\frac{d}{d+1}} \cdot \left|\mathbb{E}[q_{k,m,l}^{(i)}] - \mathbb{E}\left[\text{T}(q_{k,m,l}^{(i)} + v_{k,l,m}^{(i)},\tau_k) - \text{T}(v_{k,l,m}^{(i)},\tau_k)\right]\right|.
    \end{split}
\end{equation}
By the nature of truncation operator, local moment condition in \eqref{eq:moment3}, and Markov's inequality, similarly to step 2,
\begin{equation}
    \begin{split}
        & \left|\mathbb{E}[q_{k,m,l}^{(i)}] - \mathbb{E}\left[\text{T}(q_{k,m,l}^{(i)} + v_{k,l,m}^{(i)},\tau_k) - \text{T}(v_{k,l,m}^{(i)},\tau_k)\right]\right| \\
        & \leq \left|\mathbb{E}[q_{k,m,l}^{(i)}\cdot 1\{|(|v_{k,l,m}^{(i)}|\geq \tau_k)\cup(|q_{k,l,m}^{(i)} + v_{k,l,m}^{(i)}|\geq \tau_k)\}]\right| \\
        & \leq \left|\mathbb{E}[q_{k,m,l}^{(i)}\cdot 1\{|(|v_{k,l,m}^{(i)}|\geq \tau_k)\cup(|q_{k,l,m}^{(i)}|\geq \tau_k/2)\cup(|v_{k,l,m}^{(i)}|\geq \tau_k/2)\}]\right| \\
        & \leq \left|\mathbb{E}[q_{k,m,l}^{(i)}\cdot 1\{|q_{k,l,m}^{(i)}|\geq \tau_k/2\}]\right| + \left|\mathbb{E}[q_{k,m,l}^{(i)}\cdot 1\{|v_{k,l,m}^{(i)}|\geq \tau_k/2\}]\right|\\
        & \lesssim \mathbb{E}\left[|q_{k,m,l}^{(i)}|^{1+\lambda}\right]^{\frac{1}{1+\lambda}}\cdot\left(\frac{\mathbb{E}[|q_{k,l,m}^{(i)}|^{1+\lambda}]}{\tau_k^{1+\lambda}}\right)^{\frac{\lambda}{1+\lambda}} + \mathbb{E}\left[|q_{k,m,l}^{(i)}|^{1+\lambda}\right]^{\frac{1}{1+\lambda}}\cdot\left(\frac{\mathbb{E}[|v_{k,l,m}^{(i)}|^{1+\epsilon}]}{\tau_k^{1+\epsilon}}\right)^{\frac{\lambda}{1+\lambda}}\\
        & \lesssim \mathbb{E}\left[|q_{k,m,l}^{(i)}|^{1+\lambda}\right]\cdot\tau_k^{-\lambda} + \mathbb{E}\left[|q_{k,m,l}^{(i)}|^{1+\lambda}\right]^{\frac{1}{1+\lambda}}\cdot\mathbb{E}\left[|v_{k,m,l}^{(i)}|^{1+\epsilon}\right]^{(1+\epsilon)\lambda/(1+\lambda)}\cdot\tau_k^{-(1+\epsilon)\lambda/(1+\lambda)}\\
        & \lesssim \left\{\bar{\sigma}^\lambda M_{x,2+2\lambda}\left[\frac{\log(\bar{p})}{nM_{\text{eff},1+\epsilon,\delta}}\right]^{\frac{\lambda}{1+\epsilon}} + M_{x,2+2\lambda}^{1/(1+\lambda)}M_{\text{eff},1+\epsilon,\delta}^{\lambda/(1+\lambda)}\left[\frac{\log(\bar{p})}{nM_{\text{eff},1+\epsilon,\delta}}\right]^{\frac{\lambda}{1+\lambda}}\right\}\|\cm{A}-\cm{A}^*\|_\text{F}\\
        & \lesssim M_{x,2+2\lambda}^{1/(1+\lambda)}\left[\bar{\sigma}^\lambda M_{x,2+2\lambda}^{\lambda/(1+\lambda)}M_{\text{eff},1+\epsilon,\delta}^{-\lambda/(1+\epsilon)}\log(\bar{p})^{\frac{\lambda}{1+\epsilon}}n^{-\frac{\lambda}{1+\epsilon}} + \log(\bar{p})^{\frac{\lambda}{1+\lambda}}n^{-\frac{\lambda}{1+\lambda}}\right]\|\cm{A}-\cm{A}^*\|_\text{F}\\
        & \lesssim \bar{\sigma}^{\lambda}M_{x,2+2\lambda}\left[\frac{\log(\bar{p})}{n}\right]^{\min\left(\frac{\lambda}{1+\lambda},\frac{\lambda}{1+\epsilon}\right)}\|\cm{A}-\cm{A}^*\|_\text{F}.
    \end{split}
\end{equation}

Therefore, we have
\begin{equation}
    \begin{split}
        \|T_{k,3}\|_\text{F}^2 \lesssim &~ \bar{\sigma}^{2d/(d+1)}\phi_{\epsilon,\lambda,\delta}\|\cm{A}-\cm{A}^*\|_\text{F}^2,
    \end{split}
\end{equation}
where $\phi_{\lambda,\epsilon}=\bar{p}\bar{\sigma}^{2\lambda}M_{x,2+2\lambda}^2[\log(\bar{p})/n]^{2\min(\lambda/(1+\lambda),\lambda/(1+\epsilon))}.$ \\

\noindent\textit{Step 5.} (Bound $\|T_{k,4}\|_\text{F}$)

\noindent For $T_{k,4}$,
\begin{equation}
    \begin{split}   
        \|T_{k,4}\|_\text{F}^2 \asymp \bar{\sigma}^{\frac{2d}{d+1}}\sum_{1\leq l\leq p_k,1\leq m\leq r_k}\Bigg| & \frac{1}{n}\sum_{i=1}^n\left[\text{T}(q_{k,m,l}^{(i)},\tau_k + v_{k,m,l}^{(i)},\tau_k) - \text{T}(v_{k,m,l}^{(i)},\tau_k)\right]\\
        & - \mathbb{E}\left[\text{T}(q_{k,m,l}^{(i)},\tau_k + v_{k,m,l}^{(i)},\tau_k) - \text{T}(v_{k,m,l}^{(i)},\tau_k)\right] \Bigg|^2
    \end{split}
\end{equation}
For each $i=1,2,\dots,n$, we have $|\text{T}(q_{k,m,l}^{(i)},\tau_k + v_{k,m,l}^{(i)},\tau_k) - \text{T}(v_{k,m,l}^{(i)},\tau_k)|\leq 2\tau_k$, and hence,
\begin{equation}
    \begin{split}
        & \mathbb{E}[(\text{T}(q_{k,m,l}^{(i)},\tau_k + v_{k,m,l}^{(i)},\tau_k) - \text{T}(v_{k,m,l}^{(i)},\tau_k))^2] \leq \tau_k^{1-\lambda}\cdot\mathbb{E}[|q_{k,m,l}^{(i)}|^{1+\lambda}]\\
        & \asymp \tau_k^{1-\lambda}M_{x,2+2\lambda}\|\cm{A}-\cm{A}^*\|_\text{F}^{1+\lambda}.
    \end{split}
\end{equation}
In addition, for any $q=3,4,\dots$, the higher-order moments satisfy that
\begin{equation}
    \begin{split}
        & \mathbb{E}[(\text{T}(q_{k,m,l}^{(i)},\tau_k + v_{k,m,l}^{(i)},\tau_k) - \text{T}(v_{k,m,l}^{(i)},\tau_k))^q]\\
        & \leq (2\tau_k)^{q-2}\cdot\mathbb{E}[(\text{T}(q_{k,m,l}^{(i)},\tau_k + v_{k,m,l}^{(i)},\tau_k) - \text{T}(v_{k,m,l}^{(i)},\tau_k))^2].
    \end{split}
\end{equation}
By Bernstein's inequality, for any $1\leq l\leq p_k$ and $1\leq m\leq r_k$,
\begin{equation}
    \begin{split}   
        & \mathbb{P}\Bigg(\Bigg| \frac{1}{n}\sum_{i=1}^n\left[\text{T}(q_{k,m,l}^{(i)},\tau_k + v_{k,m,l}^{(i)},\tau_k) - \text{T}(v_{k,m,l}^{(i)},\tau_k)\right]\\
        & - \mathbb{E}\left[\text{T}(q_{k,m,l}^{(i)},\tau_k + v_{k,m,l}^{(i)},\tau_k) - \text{T}(v_{k,m,l}^{(i)},\tau_k)\right] \Bigg|\geq t\Bigg)\\
        & \leq 2\exp\left(-\frac{Cnt^2}{\tau_k^{1-\lambda}M_{x,2+2\lambda}\|\cm{A}-\cm{A}^*\|_\text{F}^{1+\lambda} + \tau_k t}\right).
    \end{split}
\end{equation}
If $\|\cm{A}-\cm{A}^*\|_\text{F}\lesssim M_{x,2+2\lambda}^{-1/(1+\lambda)}\cdot M_{\text{eff},1+\epsilon,\delta}^{1/(1+\epsilon)}$, letting $t=C[M_{\text{eff},1+\epsilon,\delta}^{1/\epsilon}\log(\bar{p})/n]^{\epsilon/(1+\epsilon)}$,
\begin{equation}
    \begin{split}
        & \mathbb{P}\Bigg(\max_{m,l}\Bigg| \frac{1}{n}\sum_{i=1}^n\left[\text{T}(q_{k,m,l}^{(i)},\tau_k + v_{k,m,l}^{(i)},\tau_k) - \text{T}(v_{k,m,l}^{(i)},\tau_k)\right]\\
        & - \mathbb{E}\left[\text{T}(q_{k,m,l}^{(i)},\tau_k + v_{k,m,l}^{(i)},\tau_k) - \text{T}(v_{k,m,l}^{(i)},\tau_k)\right] \Bigg|\geq C\left[\frac{M_{\text{eff},1+\epsilon,\delta}^{1/\epsilon}\log(\bar{p})}{n}\right]^{\frac{\epsilon}{1+\epsilon}}\Bigg)\\
        & \lesssim p_kr_k\exp(-C\log(\bar{p})) \leq C\exp(-C\log(\bar{p})). 
    \end{split}
\end{equation}
If $\|\cm{A}-\cm{A}^*\|_\text{F}\gtrsim M_{x,2+2\lambda}^{-1/(1+\lambda)}\cdot M_{\text{eff},1+\epsilon,\delta}^{1/(1+\epsilon)}$, then
\begin{equation}
    \|\cm{A}-\cm{A}^*\|_\text{F}^{1+\lambda} \lesssim \|\cm{A}-\cm{A}^*\|_\text{F}^2\cdot M_{x,2+2\lambda}^{(1-\lambda)/(1+\lambda)}\cdot M_{\text{eff},1+\epsilon,\delta}^{(\lambda-1)/(1+\epsilon)},
\end{equation}
and letting $t=CM_{x,2+2\lambda}[\log(\bar{p})/n]^{\min\left(\frac{\lambda}{1+\lambda},\frac{\lambda}{1+\epsilon}\right)}\|\cm{A}-\cm{A}^*\|_\text{F}$,
\begin{equation}
    \begin{split}
        & \mathbb{P}\Bigg[\max_{1\leq m\leq p_k,1\leq l\leq r_k}\Bigg| \frac{1}{n}\sum_{i=1}^n\left[\text{T}(q_{k,m,l}^{(i)},\tau_k + v_{k,m,l}^{(i)},\tau_k) - \text{T}(v_{k,m,l}^{(i)},\tau_k)\right]\\
        & - \mathbb{E}\left[\text{T}(q_{k,m,l}^{(i)},\tau_k + v_{k,m,l}^{(i)},\tau_k) - \text{T}(v_{k,m,l}^{(i)},\tau_k)\right] \Bigg|\geq t\Bigg]\\
        & \lesssim p_kr_k\exp(-C\log(\bar{p})) \leq C\exp(-C\log(\bar{p})). 
    \end{split}
\end{equation}

Combining these two cases, we have
\begin{equation}
    \|T_{k,4}\|_\text{F}^2 \lesssim \bar{\sigma}^{\frac{2d}{d+1}}\phi_{\lambda,\epsilon}\|\cm{A}-\cm{A}^*\|_\text{F}^2 + \bar{\sigma}^{\frac{2d}{d+1}} p_kr_k\left[\frac{M_{\text{eff},1+\epsilon,\delta}^{1/\epsilon}\log(\bar{p})}{n}\right]^{\frac{\epsilon}{1+\epsilon}}.
\end{equation}

Based on the results in steps 2 to 5, we have
\begin{equation}
    \begin{split}   
        \sum_{j=1}^4\|T_{k,j}\|_\text{F}^2 \lesssim &~ \bar{\sigma}^{\frac{2d}{d+1}}p_kr_k\left[\frac{M_{\text{eff},1+\epsilon,\delta}^{1/\epsilon}\log(\bar{p})}{n}\right]^{\frac{2\epsilon}{1+\epsilon}} + \bar{\sigma}^{\frac{2d}{d+1}}\phi_{\lambda,\epsilon}\|\cm{A}-\cm{A}^*\|_\text{F}^2.
    \end{split}
\end{equation}~

\noindent\textit{Step 6.} (Extension to core tensor)

For the partial gradient with respect to the core tensor $\cm{S}$, we have
\begin{equation}
    \nabla\overline{\mathcal{L}}(\cm{A}^*;z_i)\times_{j=1}^d\bm{U}_j^\top = (\cm{X}_i\times_{j=1}^{d_0}\bm{U}_j^\top)\circ(-\cm{E}_i\times_{j=d_0+1}^d\bm{U}_j^\top).
\end{equation}
Let $\bm{M}_{0,1}=\otimes_{j=1}^{d_0}\bm{U}_j/\|\otimes_{j=1}^{d_0}\bm{U}_j\|$ and $\bm{M}_{0,1}^\top\bm{x}_i=(w_{0,1}^{(i)},\dots,w_{0,r_1r_2\cdots r_{d_0}})^\top$, and let $\bm{M}_{0,2}=\otimes_{j=d_0+1}^{d}\bm{U}_j/\|\otimes_{j=d_0+1}^{d}\bm{U}_j\|$ and $\bm{M}_{0,2}^\top\bbm{e}_i=(z_{0,1}^{(i)},\dots,z_{0,r_{d_0+1}r_{d_0+2}\cdots r_d})^\top$. By Assumption \ref{asmp:1}, $\mathbb{E}[|w_{0,j}^{(i)}|^{1+\epsilon}|\cm{X}_i]\leq M_{x,1+\epsilon,\delta}$ and $\mathbb{E}[|z_{0,m}^{(i)}|^{1+\epsilon}|\cm{X}_i]\leq M_{e,1+\epsilon,\delta}$, for all $j=1,2,\dots,r_1r_2\cdots r_{d_0}$ and $m=1,2,\dots,r_{d_0+1}r_{d_0+2}\cdots r_d$. Let $v_{0,j,m}^{(i)}=w_{0,j}^{(i)}z_{0,m}^{(i)}$.

In a similar fashion, we can show that with probability at least $1-C\exp(-C\log(\bar{p}))$,
\begin{equation}
    \begin{split}
        \|T_{0,1}\|_\text{F}&\lesssim\bar{\sigma}^{d/(d+1)}\sqrt{r_1r_2\cdots r_d}\left[\frac{M_{\text{eff},1+\epsilon,\delta}^{1/\epsilon}\log(\bar{p})}{n}\right]^{\epsilon/(1+\epsilon)},\\
        \|T_{0,2}\|_\text{F}&\lesssim\bar{\sigma}^{d/(d+1)}\sqrt{r_1r_2\cdots r_d}\left[\frac{M_{\text{eff},1+\epsilon,\delta}^{1/\epsilon}\log(\bar{p})}{n}\right]^{\epsilon/(1+\epsilon)},\\
        \|T_{0,3}\|_\text{F}&\lesssim C\phi_{\lambda,\epsilon}^{1/2}\bar{\sigma}^{d/(d+1)}\|\cm{A}-\cm{A}^*\|_\text{F},\\
        \|T_{0,4}\|_\text{F}&\lesssim C\phi_{\lambda,\epsilon}^{1/2}\bar{\sigma}^{d/(d+1)}\|\cm{A}-\cm{A}^*\|_\text{F} + \bar{\sigma}^{d/(d+1)}\sqrt{r_1r_2\cdots r_d}\left[\frac{M_{\text{eff},1+\epsilon,\delta}^{1/\epsilon}\log(\bar{p})}{n}\right]^{\epsilon/(1+\epsilon)}.
    \end{split}
\end{equation}
Hence, with probability at least $1-C\exp(-C\log(\bar{p}))$,
\begin{equation}
    \|\cm{G}_0-\mathbb{E}[\nabla_0\mathcal{L}]\|_\text{F}^2\lesssim \bar{\sigma}^{\frac{2d}{d+1}}\phi_{\lambda,\epsilon}\|\cm{A}-\cm{A}^*\|_\text{F}^2+\bar{\sigma}^{\frac{2d}{d+1}}\prod_{k=1}^dr_k\left[\frac{M_{\text{eff},1+\epsilon,\delta}^{1/\epsilon}\log(\bar{p})}{n}\right]^{\frac{2\epsilon}{1+\epsilon}}.
\end{equation}~

\noindent\textit{Step 7.} (Verify the conditions and conclude the proof)

In the last step, we apply the results above to Theorem \ref{thm:1}. First, we examine the conditions in Theorem \ref{thm:1} hold. Under Assumption \ref{asmp:1}, by Lemma 3.11 in \citet{bubeck2015convex}, we can show that the RCG condition in Definition \ref{def:2} is implied by the restricted strong convexity and strong smoothness with $\alpha=\alpha_x$ and $\beta=\beta_x$.

Next, we show the stability of the robust gradient estimators for all $t=1,2,\dots,T$.
By matrix perturbation theory, if $\|\cm{A}^{(0)}-\cm{A}^*\|_\text{F}\leq\sqrt{\alpha_x/\beta_x}\underline{\sigma}\kappa^{-2}\delta$, we have $\|\sin\Theta(\bm{U}_k^{(0)},\bm{U}_k^*)\|\leq\delta$ for all $k=1,\dots,d$. After a finite number of iterations, $C_T$, with probability at least $1-C_T\exp(-C\log(\bar{p}))$, we can have $\|\sin\Theta(\bm{U}_k^{(C_T)},\bm{U}_k^*)\|\leq\delta'<(4\sqrt{2})^{-1}$.

For any $l\neq k$ and any tensor $\cm{B}\in\mathbb{R}^{p_1\times\cdots\times p_d}$, $(\cm{B}\times_{j\neq k}\bm{U}_j^\top)_{(l)}=\bm{U}_l^\top\cm{B}_{(l)}(\otimes_{j\neq l}\bm{U}_j')$, where $\bm{U}_j'=\bm{U}_j$ for $j\neq k$ and $\bm{U}_k'=\bm{I}_{r_k}$.
For any $\bm{U}_l\in\mathcal{C}(\bm{U}_l^*,\delta')$, we have $\|\bm{U}_l-\bm{U}^*_l\bm{O}_l\|\leq\sqrt{2}\|\sin\Theta(\bm{U}_l,\bm{U}_l^*)\|\leq\sqrt{2}\delta'$ for some $\bm{O}_l\in\mathbb{O}^{r_k\times r_k}$. Let $\bm{\Delta}_l=\bm{U}_l-\bm{U}^*_l\bm{O}_l$ and decompose $\bm{\Delta}_l=\bm{\Delta}_{l,1}+\bm{\Delta}_{l,2}$ where $\langle\bm{\Delta}_{l,1},\bm{\Delta}_{l,2}\rangle=0$ and $\bm{\Delta}_{l,1}/\|\bm{\Delta}_{l,1}\|,\bm{\Delta}_{l,2}/\|\bm{\Delta}_{l,2}\|\in\mathcal{C}(\bm{U}_l^*,\delta')$. Thus, we have $\|\bm{\Delta}_{l,1}\|\leq\sqrt{2}\delta'$ and $\|\bm{\Delta}_{l,2}\|\leq\sqrt{2}\delta'$.

Denote $\xi=\sup_{\bm{U}_l\in\mathcal{C}(\bm{U}_l^*,\delta')}\|\bm{U}_l^\top\cm{B}_{(l)}(\otimes_{j\neq l}\bm{U}_j')\|_\text{F}$. Then, since
\begin{equation}
    \begin{split}
        & \|\bm{U}_l^\top\cm{B}_{(l)}(\otimes_{j\neq l}\bm{U}_j')\|_\text{F}\\
        & \leq \|(\bm{U}^*_l\bm{O}_l)^\top\cm{B}_{(l)}(\otimes_{j\neq l}\bm{U}_j')\|_\text{F} + \|\bm{\Delta}_l^\top\cm{B}_{(l)}(\otimes_{j\neq l}\bm{U}_j')\|_\text{F}\\
        & \leq \|(\bm{U}^*_l\bm{O}_l)^\top\cm{B}_{(l)}(\otimes_{j\neq l}\bm{U}_j')\|_\text{F} + \|\bm{\Delta}_{l,1}\|\cdot\|(\bm{\Delta}_{l,1}/\|\bm{\Delta}_{l,1}\|)^\top\nabla\cm{B}_{(l)}(\otimes_{j\neq l}\bm{U}_j')\|_\text{F}\\
        & + \|\bm{\Delta}_{l,2}\|\cdot\|(\bm{\Delta}_{l,2}/\|\bm{\Delta}_{l,1}\|)^\top\nabla\cm{B}_{(l)}(\otimes_{j\neq l}\bm{U}_j')\|_\text{F},
    \end{split}
\end{equation}
we have that
\begin{equation}
    \xi \leq \|(\bm{U}^*_l\bm{O}_l)^\top\nabla\cm{B}_{(l)}(\otimes_{j\neq l}\bm{U}_j')\|_\text{F} + (\|\bm{\Delta}_{l,1}\| + \|\bm{\Delta}_{l,2}\|)\xi,
\end{equation}
that is, taking $\delta'=1/8$,
\begin{equation}
    \xi \leq (1-2\sqrt{2}\delta')^{-1}\|(\bm{U}^*_l\bm{O}_l)^\top\nabla\cm{B}_{(l)}(\otimes_{j\neq l}\bm{U}_j')\|_\text{F}\leq2\|(\bm{U}^*_l\bm{O}_l)^\top\nabla\cm{B}_{(l)}(\otimes_{j\neq l}\bm{U}_j')\|_\text{F}.
\end{equation}
Hence, for the iterate $t=1,2,\dots,T$, combining the results in steps 1 to 6, we have that with probability at least $1-C\exp(-C\log(\bar{p}))$, for any $k=1,2,\dots,d$
\begin{equation}
    \|\bm{G}_k^{(t)}-\mathbb{E}[\nabla_k\mathcal{L}^{(t)}]\|_\text{F}^2 \lesssim \phi_{\lambda,\epsilon}\bar{\sigma}^{2d/(d+1)}\|\cm{A}^{(t)}-\cm{A}^*\|_\text{F}^2 + \bar{\sigma}^{2d/(d+1)}(p_kr_k)\left[\frac{M_{\text{eff},1+\epsilon,\delta}^{1/\epsilon}\log(\bar{p})}{n}\right]^{\frac{2\epsilon}{1+\epsilon}}
\end{equation}
and
\begin{equation}
    \|\cm{G}_0^{(t)}-\mathbb{E}[\nabla_0\mathcal{L}^{(t)}]\|_\text{F}^2\lesssim \phi_{\lambda,\epsilon}\bar{\sigma}^{2d/(d+1)}\|\cm{A}^{(t)}-\cm{A}^*\|_\text{F}^2 + \bar{\sigma}^{2d/(d+1)}\prod_{k=1}^dr_k\left[\frac{M_{\text{eff},1+\epsilon,\delta}^{1/\epsilon}\log(\bar{p})}{n}\right]^{\frac{2\epsilon}{1+\epsilon}}.
\end{equation}
As the sample size satisfies
\begin{equation}
    n\gtrsim \left[\sqrt{\bar{p}}\alpha_x^{-1}\kappa^2 M_{x,2+2\lambda}\bar{\sigma}^\lambda\right]^{(1+\max(\lambda,\epsilon))/\lambda}\log(\bar{p}),
\end{equation}
plugging these into Theorem \ref{thm:1}, we have that for all $t=1,2,\dots,T$ and $k=1,2,\dots,d$,
\begin{equation}
    \begin{split}
        \text{Err}^{(t)}
        & \leq (1-\eta_0\alpha_x\beta_x^{-1}\kappa^{-2}/2)^t\text{Err}^{(0)}+C\alpha_x^{-2}\bar{\sigma}^{-4d/(d+1)}\kappa^2\sum_{k=0}^d\|\bm{\Delta}_k^{(t)}\|_\text{F}^2\\
        & \leq \text{Err}^{(0)} + C\alpha_x^{-2}\bar{\sigma}^{-2d/(d+1)}\kappa^4\left(\prod_{k=1}r_k+\sum_{k=1}^dp_kr_k\right)\left[\frac{M_{\text{eff},1+\epsilon,\delta}^{1/\epsilon}\log(\bar{p})}{n}\right]^{\frac{2\epsilon}{1+\epsilon}}
    \end{split}
\end{equation}
and
\begin{equation}
    \begin{split}
        \|\cm{A}^{(t)}-\cm{A}^*\|_\text{F}^2 & \lesssim \kappa^2(1-C\alpha_x\beta_x^{-1}\kappa^{-2})^t\|\cm{A}^{(0)}-\cm{A}^*\|_\text{F}^2\\
        &+\kappa^4\alpha_x^{-2}\left(\sum_{k=1}^dp_kr_k+\prod_{k=1}^dr_k\right)\left[\frac{M_{\text{eff},1+\epsilon,\delta}^{1/\epsilon}\log(\bar{p})}{n}\right]^{2\epsilon/(1+\epsilon)}.
    \end{split}
\end{equation}
Finally, for all $t=1,2,\dots,T$ and $k=1,2,\dots,d$,
\begin{equation}
    \begin{split}
        \|\sin\Theta(\bm{U}_k^{(t)},\bm{U}_k^*)\|^2& \leq \bar{\sigma}^{-2/(d+1)}\text{Err}^{(t)}\\
        &\leq\bar{\sigma}^{\frac{-2}{d+1}}\text{Err}^{(0)}+C\kappa^4\alpha_x^{-2}\bar{\sigma}^{-2}d_\text{eff}\left[\frac{M_{\text{eff},1+\epsilon,\delta}^{1/\epsilon}\log(\bar{p})}{n}\right]^{\frac{2\epsilon}{1+\epsilon}}\leq\delta^2.
    \end{split}
\end{equation}

\end{proof}

\subsection{Proof of Theorem \ref{thm:logistic}}

\begin{proof}

The proof consists of six steps. In the first five steps, we prove the stability of the robust gradient estimators for the general $1\leq t\leq T$ and, hence, we omit the notation $(t)$ for simplicity. Specifically, in the first four steps, we give the upper bounds for $\|T_{k,1}\|_\text{F},\dots,\|T_{k,4}\|_\text{F}$, respectively, for $1\leq k\leq d$. In the fifth step, we extend the proof to the terms for the core tensor. In the last step, we apply the results to the local convergence analysis in Theorem \ref{thm:1} and verify the corresponding conditions.
Throughout the first five steps, we assume that for each $1\leq k\leq d$, $\|\bm{U}_k\|\asymp \bar{\sigma}^{1/(d+1)}$ and $\|\sin\Theta(\bm{U}_k,\bm{U}_k^*)\|\leq\delta$ and will verify them in the last step.\\

\noindent\textit{Step 1.} (Calculate local moments)

\noindent For any $1\leq k\leq d$, we let $r_k'=r_1r_2\cdots r_d/r_k$ and
\begin{equation}
    \nabla\overline{\mathcal{L}}(\cm{A}^*;z_i)_{(k)}\bm{V}_k = \left(\frac{\exp(\langle\cm{X}_i,\cm{A}^*\rangle)}{1+\exp(\langle\cm{X}_i,\cm{A}^*\rangle)}-y_i\right)(\cm{X}_i)_{(k)}(\otimes_{j=1,j\neq k}^d\bm{U}_j)\cm{S}_{(k)}^\top.
\end{equation}
Let $\bm{M}_k=(\otimes_{j=1,j\neq k}^{d}\bm{U}_j)/\|\otimes_{j=1,j\neq k}^{d}\bm{U}_j\|$ and $\bm{c}_l^\top(\cm{X}_i)_{(k)}\bm{M}_{k}=(w^{(i)}_{k,l,1},w^{(i)}_{k,l,2},\dots,w^{(i)}_{k,l,r_k'})$. By Assumption \ref{asmp:logistic}, $\mathbb{E}[|w_{k,l,j}^{(i)}|^{2}]\leq M_{x,2,\delta}$ for $l=1,2,\dots,p_k$ and $j=1,2,\dots,r_k'$. Let $\bm{N}_{k}=\cm{S}_{(k)}/\|\cm{S}_{(k)}\|$ and $\bm{c}_l^\top(\cm{X}_i)_{(k)}\bm{M}_k\bm{N}_{k}^\top=(z_{k,l,1}^{(i)},z_{k,l,2}^{(i)},\dots,z_{k,l,r_k}^{(i)})$. Then, $\mathbb{E}[|z^{(i)}_{k,l,m}|^{2}]\lesssim M_{x,2,\delta}$.
Also, denote $q_i(\cm{A})=\exp(\langle\cm{X}_i,\cm{A}\rangle)/[1+\exp(\langle\cm{X}_i,\cm{A}\rangle)]$ for any $\cm{A}$. Let $v_{k,l,m}^{(i)}=(q_i(\cm{A}^*)-y_i)z_{k,l,m}^{(i)}$, which satisfies that
\begin{equation}
    \mathbb{E}\left[|v_{k,l,j}^{(i)}|^2\right] = \mathbb{E}\left[\mathbb{E}\left[|q_i(\cm{A}^*)-y_i|^2|\cm{X}_i\right]\cdot|z_{k,l,j}^{(i)}|^2\right] \leq M_{x,2,\delta}.
\end{equation}

For any $1\leq k\leq d$, we let $\bm{V}_k=[\bm{v}_{k,1},\dots,\bm{v}_{k,r_k}]$. The $(l,m)$-th entry of $\nabla\overline{\mathcal{L}}(\cm{A}^*;z_i)_{(k)}\bm{V}_k - \nabla\overline{\mathcal{L}}(\cm{A};z_i)_{(k)}\bm{V}_k$ is $(q_i(\cm{A}^*)-q_i(\cm{A}))\bm{c}_l^\top(\cm{X}_i)_{(k)}\bm{v}_{k,m}$. Since $\exp(t)/(1+\exp(t))$ is a 1-Lipschitz function, we have $|q_i(\cm{A}^*)-q_i(\cm{A})|\leq |\langle\cm{X}_i,\cm{A}-\cm{A}^*\rangle|$. Let $\bm{w}_{k,m,l}=\bm{P}_k(\bm{v}_{k,m}\otimes\bm{c}_l)/\|\bm{P}_k(\bm{v}_{k,m}\otimes\bm{c}_l)\|_2$. Then, we have
\begin{equation}
    \begin{split}
        & \mathbb{E}[|s_{k,m,l}^{(i)}|^{1+\lambda}]:= \mathbb{E}\left[|\bm{w}_{k,m,l}^\top\text{vec}(\cm{X}_i)\text{vec}(\cm{X}_i)^\top\text{vec}(\cm{A}-\cm{A}^*)|^{1+\lambda}\right]\\
        & \leq \mathbb{E}\left[|\bm{w}_{k,m,l}^\top\text{vec}(\cm{X}_i)|^{2+2\lambda}\right]^{1/2}\cdot\mathbb{E}\left[\left|\text{vec}(\cm{X}_i)^\top\frac{\text{vec}(\cm{A}-\cm{A}^*)}{\|\text{vec}(\cm{A}-\cm{A}^*)\|_2}\right|^{2+2\lambda}\right]^{1/2}\cdot\|\cm{A}-\cm{A}^*\|_\text{F}^{1+\lambda}\\
        & \leq M_{x,2+2\lambda}\cdot\|\cm{A}-\cm{A}^*\|_\text{F}^{1+\lambda}.
    \end{split}
\end{equation}~

\noindent\textit{Step 2.} (Bound $\|T_{k,1}\|_\text{F}$)

\noindent We first bound the bias $\|T_{k,1}\|_\text{F}$.
Let $\tau_k=\tau/\|\otimes_{j=1,j\neq k}^d\bm{U}_j\|\asymp [nM_{x,2,\delta}/\log(\bar{p})]^{1/2}$. Then,
\begin{equation}
    \|T_{1,k}\|_\text{F}^2\asymp\bar{\sigma}^{2d/(d+1)}\sum_{l=1}^{p_k}\sum_{j=1}^{r_k}\left|\mathbb{E}\left[(q_i(\cm{A}^*)-y_i)z_{k,l,j}^{(i)}\right]-\mathbb{E}\left[\text{T}(q_i(\cm{A}^*)-y_i)z_{k,l,j}^{(i)},\tau_k)\right]\right|^2.
\end{equation}

By Holder's inequality and Markov's inequality,
\begin{equation}
    \begin{split}
        & \left|\mathbb{E}\left[(q_i(\cm{A}^*)-y_i)z_{k,l,j}^{(i)}\right]-\mathbb{E}\left[\text{T}((q_i(\cm{A}^*)-y_i)z_{k,l,j}^{(i)},\tau_k)\right]\right|\\
        \leq & \mathbb{E}\left[|(q_i(\cm{A}^*)-y_i)z_{k,l,j}^{(i)}|\cdot1\{|(q_i(\cm{A}^*)-y_i)z_{k,l,j}^{(i)}|\geq\tau_k\}\right]\\
        \leq & \mathbb{E}\left[|(q_i(\cm{A}^*)-y_i)z_{k,l,j}^{(i)}|^{2}\right]^{1/2}\cdot\mathbb{P}(|(q_i(\cm{A}^*)-y_i)z_{k,l,j}^{(i)}|\geq\tau_k)^{1/2}\\
        \leq & \mathbb{E}\left[|(q_i(\cm{A}^*)-y_i)z_{k,l,j}^{(i)}|^{2}\right]^{1/2}\cdot\left(\frac{\mathbb{E}\left[|(q_i(\cm{A}^*)-y_i)z_{k,l,j}^{(i)}|^{2}\right]}{\tau_k^{2}}\right)^{1/2}\\
        \asymp &~ M_{x,2,\delta}\cdot\tau_k^{-1}\asymp\left[\frac{M_{x,2,\delta}\log(\bar{p})}{n}\right]^{1/2}.
    \end{split}
\end{equation}
Hence, we have $\left\|T_{k,1}\right\|_\text{F}^2 \lesssim\bar{\sigma}^{2d/(d+1)}p_kr_kM_{x,2,\delta}\log(\bar{p})/n$.\\

\noindent\textit{Step 2}. (Bound $\|T_{k,2}\|_\text{F}$)

\noindent For $T_{k,2}$ in \eqref{eq:stable_robust_decomp}, it can be checked that
\begin{equation}
    \begin{split}
        & \mathbb{E}[\text{T}(v_{k,l,j}^{(i)},\tau_k)^2] \leq \mathbb{E}\left[|v_{k,l,j}^{(i)}|^{2}\right]\lesssim M_{x,2,\delta}.
    \end{split}
\end{equation}
Thus, $\text{var}(\text{T}(v_{k,l,j}^{(i)},\tau_k))\leq\mathbb{E}[\text{T}(v_{k,l,j}^{(i)},\tau_k)^2]\lesssim M_{x,2,\delta}$.
Also, for any $s=3,4,\dots$, the higher-order moments satisfy that
\begin{equation}
    \mathbb{E}\left[(\text{T}(v_{k,l,j}^{(i)},\tau_k)-\mathbb{E}[\text{T}(v_{k,l,j}^{(i)},\tau_k)])^s\right]\leq (2\tau_k)^{s-2}\cdot\mathbb{E}\left[(\text{T}((v_{k,l,j}^{(i)},\tau_k)-\mathbb{E}[\text{T}(v_{k,l,j}^{(i)},\tau_k)])^2\right].
\end{equation}

By Bernstein's inequality, for any $0<t<(2\tau_k)^{-1}M_{x,2,\delta}$,
\begin{equation}
    \begin{split}   
        &\mathbb{P}\left(\left|\frac{1}{n}\sum_{i=1}^n\text{T}(v_{k,l,j}^{(i)},\tau_k)-\mathbb{E}\text{T}(v_{k,l,j}^{(i)},\tau_k)\right|>t\right)\leq 2\exp\left(-\frac{nt^2}{4M_{x,2,\delta}}\right)
    \end{split}
\end{equation}
Let $t=CM_{x,2,\delta}^{1/2}\log(\bar{p})^{1/2}n^{-1/2}$. Therefore, we have
\begin{equation}
    \begin{split}
        &\mathbb{P}\Bigg(\left|\frac{1}{n}\sum_{i=1}^n\text{T}(v_{k,l,j}^{(i)},\tau_k)-\mathbb{E}\text{T}(v_{k,l,j}^{(i)},\tau_k)\right|>C\left[\frac{M_{x,2,\delta}\log(\bar{p})}{n}\right]^{\frac{1}{2}}\Bigg)
        \leq C\exp(-C\log(\bar{p}))
    \end{split}
\end{equation}
and
\begin{equation}
    \begin{split}
        &\mathbb{P}\Bigg(\max_{\substack{1\leq j\leq p_k\\1\leq l\leq r_k}}\left|\frac{1}{n}\sum_{i=1}^n\text{T}(v_{k,l,j}^{(i)},\tau_k)-\mathbb{E}\left[\text{T}(v_{k,l,j}^{(i)},\tau_k)\right]\right|>C\left[\frac{M_{x,2,\delta}\log(\bar{p})}{n}\right]^{\frac{1}{2}}\Bigg) \\
        &\leq Cp_kr_k\exp(-C\log(\bar{p}))\leq C\exp(-C\log(\bar{p})).
    \end{split}
\end{equation}
Therefore, with probability at least $1-C\exp(-C\log(\bar{p}))$,
\begin{equation}
    \left\|T_{k,2}\right\|_\text{F}^2 \lesssim \bar{\sigma}^{2d/(d+1)}p_kr_kM_{x,2,\delta}\log(\bar{p})/n.
\end{equation}~

\noindent\textit{Step 3.} (Bound $\|T_{k,3}\|_\text{F}$)

\noindent By definition, the $(l,m)$-th entry of $T_{k,3}$ can be bounded as
\begin{equation}
    |(T_{k,3})_{l,m}| \asymp \bar{\sigma}^{\frac{d}{d+1}}\cdot\left|\mathbb{E}[s_{k,m,l}^{(i)}] - \mathbb{E}\left[\text{T}(s_{k,m,l}^{(i)}+v_{k,m,l}^{(i)},\tau_k) - \text{T}(v_{k,m,l}^{(i)},\tau_k)\right]\right|.
\end{equation}
By the nature of truncation operator, moment condition, and Markov's inequality,
\begin{equation}
    \begin{split}
        & \left|\mathbb{E}[s_{k,m,l}^{(i)}] - \mathbb{E}\left[\text{T}(s_{k,m,l}^{(i)} + v_{k,l,m}^{(i)},\tau_k) - \text{T}(v_{k,l,m}^{(i)},\tau_k)\right]\right| \\
        & \leq \left|\mathbb{E}[s_{k,m,l}^{(i)}\cdot 1\{|(|v_{k,l,m}^{(i)}|\geq \tau_k)\cup(|s_{k,l,m}^{(i)} + v_{k,l,m}^{(i)}|\geq \tau_k)\}]\right| \\
        & \leq \left|\mathbb{E}[s_{k,m,l}^{(i)}\cdot 1\{|(|v_{k,l,m}^{(i)}|\geq \tau_k)\cup(|s_{k,l,m}^{(i)}|\geq \tau_k/2)\cup(|v_{k,l,m}^{(i)}|\geq \tau_k/2)\}]\right| \\
        & \leq \left|\mathbb{E}[s_{k,m,l}^{(i)}\cdot 1\{|s_{k,l,m}^{(i)}|\geq \tau_k/2\}]\right| + \left|\mathbb{E}[s_{k,m,l}^{(i)}\cdot 1\{|v_{k,l,m}^{(i)}|\geq \tau_k/2\}]\right|\\
        & \lesssim \mathbb{E}\left[|s_{k,m,l}^{(i)}|^{1+\lambda}\right]^{\frac{1}{1+\lambda}}\cdot\left(\frac{\mathbb{E}[|s_{k,l,m}^{(i)}|^{1+\lambda}]}{\tau_k^{1+\lambda}}\right)^{\frac{\lambda}{1+\lambda}} + \mathbb{E}\left[|s_{k,m,l}^{(i)}|^{1+\lambda}\right]^{\frac{1}{1+\lambda}}\cdot\left(\frac{\mathbb{E}[|v_{k,l,m}^{(i)}|^{2}]}{\tau_k^{2}}\right)^{\frac{\lambda}{1+\lambda}}\\
        & \lesssim \mathbb{E}\left[|s_{k,m,l}^{(i)}|^{1+\lambda}\right]\cdot\tau_k^{-\lambda} + \mathbb{E}\left[|s_{k,m,l}^{(i)}|^{1+\lambda}\right]^{\frac{1}{1+\lambda}}\cdot\mathbb{E}\left[|v_{k,m,l}^{(i)}|^{2}\right]^{2\lambda/(1+\lambda)}\cdot\tau_k^{-2\lambda/(1+\lambda)}\\
        & \lesssim \left\{\bar{\sigma}^\lambda M_{x,2+2\lambda}\left[\frac{\log(\bar{p})}{nM_{x,2,\delta}}\right]^{\frac{\lambda}{2}} + M_{x,2+2\lambda}^{1/(1+\lambda)}M_{x,2,\delta}^{\lambda/(1+\lambda)}\left[\frac{\log(\bar{p})}{nM_{x,2,\delta}}\right]^{\frac{\lambda}{1+\lambda}}\right\}\|\cm{A}-\cm{A}^*\|_\text{F}\\
        & \lesssim M_{x,2+2\lambda}^{1/(1+\lambda)}\left[\bar{\sigma}^\lambda M_{x,2+2\lambda}^{\lambda/(1+\lambda)}M_{x,2,\delta}^{-\lambda/2}\log(\bar{p})^{\frac{\lambda}{2}}n^{-\frac{\lambda}{2}} + \log(\bar{p})^{\frac{\lambda}{1+\lambda}}n^{-\frac{\lambda}{1+\lambda}}\right]\|\cm{A}-\cm{A}^*\|_\text{F}\\
        & \lesssim \bar{\sigma}^{\lambda}M_{x,2+2\lambda}\left[\log(\bar{p})/n\right]^{\frac{\lambda}{1+\lambda}}\|\cm{A}-\cm{A}^*\|_\text{F}.
    \end{split}
\end{equation}
Therefore, we have
\begin{equation}
    \|T_{k,3}\|_\text{F}^2 \lesssim \bar{\sigma}^{2d/(d+1)}\phi_{\lambda}\|\cm{A}-\cm{A}^*\|_\text{F}^2,
\end{equation}
where $\phi_\lambda = \bar{\sigma}^{2\lambda}M_{x,2+2\lambda}^2\bar{p}[\log(\bar{p})/n]^{2\lambda/(1+\lambda)}$.\\

\noindent\textit{Step 4.} (Bound $\|T_{k,4}\|_\text{F}$)

\noindent For $T_{k,4}$,
\begin{equation}
    \begin{split}
        \|T_{k,4}\|_\text{F}^2 \asymp \bar{\sigma}^{\frac{2d}{d+1}}\sum_{l,m} & \Bigg|\frac{1}{n}\sum_{i=1}^n\left[\text{T}(s_{k,m,l}^{(i)}+v_{k,m,l}^{(i)},\tau_k) - \text{T}(v_{k,m,l}^{(i)},\tau_k) \right] \\
        & -\mathbb{E}\left[\text{T}(s_{k,m,l}^{(i)}+v_{k,m,l}^{(i)},\tau_k) - \text{T}(v_{k,m,l}^{(i)},\tau_k) \right]\Bigg|^2.
    \end{split}
\end{equation}
For each $i=1,2,\dots,n$, we have $|\text{T}(s_{k,m,l}^{(i)}+v_{k,m,l}^{(i)},\tau_k) - \text{T}(v_{k,m,l}^{(i)},\tau_k)|\leq 2\tau_k$ and hence
\begin{equation}
    \begin{split}
        & \mathbb{E}[(\text{T}(s_{k,m,l}^{(i)}+v_{k,m,l}^{(i)},\tau_k) - \text{T}(v_{k,m,l}^{(i)},\tau_k))^2] \leq (2\tau_k)^{1-\lambda}\cdot\mathbb{E}[|s_{k,m,l}^{(i)}|^{1+\lambda}]\\
        & \asymp \tau_k^{1-\lambda}M_{x,2+2\lambda}\|\cm{A}-\cm{A}^*\|_\text{F}^{1+\lambda}. 
    \end{split}
\end{equation}
In addition, for any $q=3,4,\dots$, the higher-order moments satisfy that
\begin{equation}
    \begin{split}
        & \mathbb{E}[(\text{T}(s_{k,m,l}^{(i)}+v_{k,m,l}^{(i)},\tau_k) - \text{T}(v_{k,m,l}^{(i)},\tau_k))^q] \\
        & \leq (2\tau_k)^{q-2}\cdot\mathbb{E}[(\text{T}(s_{k,m,l}^{(i)}+v_{k,m,l}^{(i)},\tau_k) - \text{T}(v_{k,m,l}^{(i)},\tau_k))^2].
    \end{split}
\end{equation}
By Bernstein's inequality, for any $1\leq l\leq p_k$ and $1\leq m\leq r_k$,
\begin{equation}
    \begin{split}   
        & \mathbb{P}\Bigg(\Bigg| \frac{1}{n}\sum_{i=1}^n\left[\text{T}(s_{k,m,l}^{(i)},\tau_k + v_{k,m,l}^{(i)},\tau_k) - \text{T}(v_{k,m,l}^{(i)},\tau_k)\right]\\
        & - \mathbb{E}\left[\text{T}(s_{k,m,l}^{(i)},\tau_k + v_{k,m,l}^{(i)},\tau_k) - \text{T}(v_{k,m,l}^{(i)},\tau_k)\right] \Bigg|\geq t\Bigg)\\
        & \leq 2\exp\left(-\frac{Cnt^2}{\tau_k^{1-\lambda}M_{x,2+2\lambda}\|\cm{A}-\cm{A}^*\|_\text{F}^{1+\lambda} + \tau_k t}\right).
    \end{split}
\end{equation}
If $\|\cm{A}-\cm{A}^*\|_\text{F}\lesssim M_{x,2+2\lambda}^{-1/(1+\lambda)}\cdot M_{x,2,\delta}^{1/2}$, letting $t=C[M_{x,2,\delta}\log(\bar{p})/n]^{1/2}$,
\begin{equation}
    \begin{split}
        & \mathbb{P}\Bigg(\max_{m,l}\Bigg| \frac{1}{n}\sum_{i=1}^n\left[\text{T}(s_{k,m,l}^{(i)},\tau_k + v_{k,m,l}^{(i)},\tau_k) - \text{T}(v_{k,m,l}^{(i)},\tau_k)\right]\\
        & - \mathbb{E}\left[\text{T}(s_{k,m,l}^{(i)},\tau_k + v_{k,m,l}^{(i)},\tau_k) - \text{T}(v_{k,m,l}^{(i)},\tau_k)\right] \Bigg|\geq C\left[\frac{M_{x,2,\delta}\log(\bar{p})}{n}\right]^{1/2}\Bigg)\\
        & \lesssim p_kr_k\exp(-C\log(\bar{p})) \leq C\exp(-C\log(\bar{p})). 
    \end{split}
\end{equation}
If $\|\cm{A}-\cm{A}^*\|_\text{F}\gtrsim M_{x,2+2\lambda}^{-1/(1+\lambda)}\cdot M_{x,2,\delta}^{1/2}$, then
\begin{equation}
    \|\cm{A}-\cm{A}^*\|_\text{F}^{1+\lambda} \lesssim \|\cm{A}-\cm{A}^*\|_\text{F}^2\cdot M_{x,2+2\lambda}^{(1-\lambda)/(1+\lambda)}\cdot M_{x,2,\delta}^{(\lambda-1)/2},
\end{equation}
and letting $t=CM_{x,2+2\lambda}[\log(\bar{p})/n]^{\frac{\lambda}{1+\lambda}}\|\cm{A}-\cm{A}^*\|_\text{F}$,
\begin{equation}
    \begin{split}
        & \mathbb{P}\Bigg[\max_{1\leq m\leq p_k,1\leq l\leq r_k}\Bigg| \frac{1}{n}\sum_{i=1}^n\left[\text{T}(s_{k,m,l}^{(i)},\tau_k + v_{k,m,l}^{(i)},\tau_k) - \text{T}(v_{k,m,l}^{(i)},\tau_k)\right]\\
        & - \mathbb{E}\left[\text{T}(s_{k,m,l}^{(i)},\tau_k + v_{k,m,l}^{(i)},\tau_k) - \text{T}(v_{k,m,l}^{(i)},\tau_k)\right] \Bigg|\geq t\Bigg]\\
        & \lesssim p_kr_k\exp(-C\log(\bar{p})) \leq C\exp(-C\log(\bar{p})). 
    \end{split}
\end{equation}
Combining these two cases, we have
\begin{equation}
    \|T_{k,4}\|_\text{F}^2 \lesssim \bar{\sigma}^{\frac{2d}{d+1}}\phi_{\lambda}\|\cm{A}-\cm{A}^*\|_\text{F}^2 + \bar{\sigma}^{\frac{2d}{d+1}} p_kr_kM_{x,2,\delta}\log(\bar{p})/n.
\end{equation}

Based on the results in steps 2 to 5, we have
\begin{equation}
    \begin{split}   
        \sum_{j=1}^4\|T_{k,j}\|_\text{F}^2 \lesssim &~ \bar{\sigma}^{\frac{2d}{d+1}}p_kr_kM_{x,2,\delta}\log(\bar{p})/n + \bar{\sigma}^{\frac{2d}{d+1}}\phi_{\lambda}\|\cm{A}-\cm{A}^*\|_\text{F}^2.
    \end{split}
\end{equation}~

\noindent\textit{Step 6.} (Extension to core tensor)

In a similar fashion, we can show that with probability at least $1-C\exp(-C\log(\bar{p}))$,
\begin{equation}
    \begin{split}
        \|T_{0,1}\|_\text{F}&\lesssim\bar{\sigma}^{d/(d+1)}\sqrt{r_1r_2\cdots r_d}\left[\frac{M_{x,2,\delta}\log(\bar{p})}{n}\right]^{1/2},\\
        \|T_{0,2}\|_\text{F}&\lesssim\bar{\sigma}^{d/(d+1)}\sqrt{r_1r_2\cdots r_d}\left[\frac{M_{x,2,\delta}\log(\bar{p})}{n}\right]^{1/2},\\
        \|T_{0,3}\|_\text{F}&\lesssim C\phi_{\lambda}^{1/2}\bar{\sigma}^{d/(d+1)}\|\cm{A}-\cm{A}^*\|_\text{F},\\
        \|T_{0,4}\|_\text{F}&\lesssim C\phi_{\lambda}^{1/2}\bar{\sigma}^{d/(d+1)}\|\cm{A}-\cm{A}^*\|_\text{F} + \bar{\sigma}^{d/(d+1)}\sqrt{r_1r_2\cdots r_d}\left[\frac{M_{x,2,\delta}\log(\bar{p})}{n}\right]^{1/2}.
    \end{split}
\end{equation}
Hence, with probability at least $1-C\exp(-C\log(\bar{p}))$,
\begin{equation}
    \|\cm{G}_0-\mathbb{E}[\nabla_0\mathcal{L}]\|_\text{F}^2\lesssim \bar{\sigma}^{\frac{2d}{d+1}}\phi_{\lambda}\|\cm{A}-\cm{A}^*\|_\text{F}^2+\bar{\sigma}^{\frac{2d}{d+1}}\prod_{k=1}^dr_k\left[\frac{M_{x,2,\delta}\log(\bar{p})}{n}\right].
\end{equation}~

\noindent\textit{Step 7.} (Verify the conditions and conclude the proof)

In the last step, we apply the results above to Theorem \ref{thm:1}. First, we examine the conditions in Theorem \ref{thm:1} hold. Under Assumption \ref{asmp:1}, by Lemma 3.11 in \citet{bubeck2015convex}, we can show that the RCG condition in Definition \ref{def:2} is implied by the restricted strong convexity and strong smoothness with $\alpha=\alpha_x$ and $\beta=\beta_x$.

Next, we show the stability of the robust gradient estimators for all $t=1,2,\dots,T$.
By matrix perturbation theory, if $\|\cm{A}^{(0)}-\cm{A}^*\|_\text{F}\leq\sqrt{\alpha_x/\beta_x}\underline{\sigma}\kappa^{-2}\delta$, we have $\|\sin\Theta(\bm{U}_k^{(0)},\bm{U}_k^*)\|\leq\delta$ for all $k=1,\dots,d$. After a finite number of iterations, $C_T$, with probability at least $1-C_T\exp(-C\log(\bar{p}))$, we can have $\|\sin\Theta(\bm{U}_k^{(C_T)},\bm{U}_k^*)\|\leq\delta'<(4\sqrt{2})^{-1}$.

For any $l\neq k$ and any tensor $\cm{B}\in\mathbb{R}^{p_1\times\cdots\times p_d}$, $(\cm{B}\times_{j\neq k}\bm{U}_j^\top)_{(l)}=\bm{U}_l^\top\cm{B}_{(l)}(\otimes_{j\neq l}\bm{U}_j')$, where $\bm{U}_j'=\bm{U}_j$ for $j\neq k$ and $\bm{U}_k'=\bm{I}_{r_k}$.
For any $\bm{U}_l\in\mathcal{C}(\bm{U}_l^*,\delta')$, we have $\|\bm{U}_l-\bm{U}^*_l\bm{O}_l\|\leq\sqrt{2}\|\sin\Theta(\bm{U}_l,\bm{U}_l^*)\|\leq\sqrt{2}\delta'$ for some $\bm{O}_l\in\mathbb{O}^{r_k\times r_k}$. Let $\bm{\Delta}_l=\bm{U}_l-\bm{U}^*_l\bm{O}_l$ and decompose $\bm{\Delta}_l=\bm{\Delta}_{l,1}+\bm{\Delta}_{l,2}$ where $\langle\bm{\Delta}_{l,1},\bm{\Delta}_{l,2}\rangle=0$ and $\bm{\Delta}_{l,1}/\|\bm{\Delta}_{l,1}\|,\bm{\Delta}_{l,2}/\|\bm{\Delta}_{l,2}\|\in\mathcal{C}(\bm{U}_l^*,\delta')$. Thus, we have $\|\bm{\Delta}_{l,1}\|\leq\sqrt{2}\delta'$ and $\|\bm{\Delta}_{l,2}\|\leq\sqrt{2}\delta'$.

Denote $\xi=\sup_{\bm{U}_l\in\mathcal{C}(\bm{U}_l^*,\delta')}\|\bm{U}_l^\top\cm{B}_{(l)}(\otimes_{j\neq l}\bm{U}_j')\|_\text{F}$. Then, since
\begin{equation}
    \begin{split}
        & \|\bm{U}_l^\top\cm{B}_{(l)}(\otimes_{j\neq l}\bm{U}_j')\|_\text{F}\\
        & \leq \|(\bm{U}^*_l\bm{O}_l)^\top\cm{B}_{(l)}(\otimes_{j\neq l}\bm{U}_j')\|_\text{F} + \|\bm{\Delta}_l^\top\cm{B}_{(l)}(\otimes_{j\neq l}\bm{U}_j')\|_\text{F}\\
        & \leq \|(\bm{U}^*_l\bm{O}_l)^\top\cm{B}_{(l)}(\otimes_{j\neq l}\bm{U}_j')\|_\text{F} + \|\bm{\Delta}_{l,1}\|\cdot\|(\bm{\Delta}_{l,1}/\|\bm{\Delta}_{l,1}\|)^\top\nabla\cm{B}_{(l)}(\otimes_{j\neq l}\bm{U}_j')\|_\text{F}\\
        & + \|\bm{\Delta}_{l,2}\|\cdot\|(\bm{\Delta}_{l,2}/\|\bm{\Delta}_{l,1}\|)^\top\nabla\cm{B}_{(l)}(\otimes_{j\neq l}\bm{U}_j')\|_\text{F},
    \end{split}
\end{equation}
we have that
\begin{equation}
    \xi \leq \|(\bm{U}^*_l\bm{O}_l)^\top\nabla\cm{B}_{(l)}(\otimes_{j\neq l}\bm{U}_j')\|_\text{F} + (\|\bm{\Delta}_{l,1}\| + \|\bm{\Delta}_{l,2}\|)\xi,
\end{equation}
that is, taking $\delta'=1/8$,
\begin{equation}
    \xi \leq (1-2\sqrt{2}\delta')^{-1}\|(\bm{U}^*_l\bm{O}_l)^\top\nabla\cm{B}_{(l)}(\otimes_{j\neq l}\bm{U}_j')\|_\text{F}\leq2\|(\bm{U}^*_l\bm{O}_l)^\top\nabla\cm{B}_{(l)}(\otimes_{j\neq l}\bm{U}_j')\|_\text{F}.
\end{equation}
Hence, for the iterate $t=1,2,\dots,T$, combining the results in steps 1 to 6, we have that with probability at least $1-C\exp(-C\log(\bar{p}))$, for any $k=1,2,\dots,d$
\begin{equation}
    \|\bm{G}_k^{(t)}-\mathbb{E}[\nabla_k\mathcal{L}^{(t)}]\|_\text{F}^2 \lesssim \phi_{\lambda}\bar{\sigma}^{2d/(d+1)}\|\cm{A}^{(t)}-\cm{A}^*\|_\text{F}^2 + \bar{\sigma}^{2d/(d+1)}(p_kr_k)\left[\frac{M_{x,2,\delta}\log(\bar{p})}{n}\right]
\end{equation}
and
\begin{equation}
    \|\cm{G}_0^{(t)}-\mathbb{E}[\nabla_0\mathcal{L}^{(t)}]\|_\text{F}^2\lesssim \phi_{\lambda}\bar{\sigma}^{2d/(d+1)}\|\cm{A}^{(t)}-\cm{A}^*\|_\text{F}^2 + \bar{\sigma}^{2d/(d+1)}\prod_{k=1}^dr_k\left[\frac{M_{x,2,\delta}\log(\bar{p})}{n}\right].
\end{equation}
As the sample size satisfies
\begin{equation}
    n\gtrsim \bar{p}^{1/\lambda}\alpha_x^{-2/\lambda}\kappa^{4/\lambda}M_{x,2+2\lambda}^{2/\lambda}\bar{\sigma}^{2}\log(\bar{p}),
\end{equation}
plugging these into Theorem \ref{thm:1}, we have that for all $t=1,2,\dots,T$ and $k=1,2,\dots,d$,
\begin{equation}
    \begin{split}
        \text{Err}^{(t)}
        & \leq (1-\eta_0\alpha_x\beta_x^{-1}\kappa^{-2}/2)^t\text{Err}^{(0)}+C\alpha_x^{-2}\bar{\sigma}^{-4d/(d+1)}\kappa^2\sum_{k=0}^d\|\bm{\Delta}_k^{(t)}\|_\text{F}^2\\
        & \leq \text{Err}^{(0)} + C\alpha_x^{-2}\bar{\sigma}^{-2d/(d+1)}\kappa^4\left(\prod_{k=1}r_k+\sum_{k=1}^dp_kr_k\right)\left[\frac{M_{x,2,\delta}\log(\bar{p})}{n}\right]
    \end{split}
\end{equation}
and
\begin{equation}
    \begin{split}
        \|\cm{A}^{(t)}-\cm{A}^*\|_\text{F}^2 & \lesssim \kappa^2(1-C\alpha_x\beta_x^{-1}\kappa^{-2})^t\|\cm{A}^{(0)}-\cm{A}^*\|_\text{F}^2\\
        &+\kappa^4\alpha_x^{-2}\left(\sum_{k=1}^dp_kr_k+\prod_{k=1}^dr_k\right)\left[\frac{M_{x,2,\delta}\log(\bar{p})}{n}\right].
    \end{split}
\end{equation}
Finally, for all $t=1,2,\dots,T$ and $k=1,2,\dots,d$,
\begin{equation}
    \begin{split}
        \|\sin\Theta(\bm{U}_k^{(t)},\bm{U}_k^*)\|^2& \leq \bar{\sigma}^{-2/(d+1)}\text{Err}^{(t)}\\
        &\leq\bar{\sigma}^{\frac{-2}{d+1}}\text{Err}^{(0)}+C\kappa^4\alpha_x^{-2}\bar{\sigma}^{-2}d_\text{eff}\left[\frac{M_{x,2,\delta}\log(\bar{p})}{n}\right]\leq\delta^2.
    \end{split}
\end{equation}

\end{proof}

\subsection{Proof of Theorem \ref{thm:PCA}}

The proof consists of six steps. In the first five steps, we prove the stability of the robust gradient estimators for the general $1\leq t\leq T$ and, hence, we omit the notation $(t)$ for simplicity. Specifically, in the first four steps, we give the upper bounds for $\|T_{k,1}\|_\text{F},\dots,\|T_{k,4}\|_\text{F}$, respectively, for $1\leq k\leq d$. In the fifth step, we extend the proof to the terms for the core tensor. In the last step, we apply the results to the local convergence analysis in Theorem \ref{thm:1} and verify the corresponding conditions.
Throughout the first five steps, we assume that for each $1\leq k\leq d$, $\|\bm{U}_k\|\asymp \bar{\sigma}^{1/(d+1)}$ and $\|\sin\Theta(\bm{U}_k,\bm{U}_k^*)\|\leq\delta$ and will verify them in the last step.\\

\noindent\textit{Step 1.} (Calculate local moments)

\noindent For any $1\leq k\leq d$,
\begin{equation}
    \nabla\overline{\mathcal{L}}(\cm{A}^*;z)_{(k)}\bm{V}_k = (-\cm{E})_{(k)}(\otimes_{j\neq k}\bm{U}_j)\cm{S}_{(k)}^\top
\end{equation}
and
\begin{equation}
    \nabla\overline{\mathcal{L}}(\cm{A};z)_{(k)}\bm{V}_k - \nabla\overline{\mathcal{L}}(\cm{A}^*;z)_{(k)}\bm{V}_k = (\cm{A} - \cm{A}^*)_{(k)}\bm{V}_k.
\end{equation}

Let $\bm{M}_{k,1}=(\otimes_{j=1,j\neq k}^d\bm{U}_j)/\|\otimes_{j=1,j\neq k}^d\bm{U}_j\|$ and its columns as $\bm{M}_{k,1}=[\bm{m}_{k,1},\bm{m}_{k,2},\dots,\bm{m}_{k,r_k'}]$. Let $z_{k,j,l}=-\bm{c}_j^\top\cm{E}_{(k)}\bm{m}_{k,l}$, and by Assumption \ref{asmp:logistic}, $\mathbb{E}[|z_{k,j,l}|^{1+\epsilon}]\leq M_{e,1+\epsilon,\delta}$. Let $\bm{M}_{k,2}=\cm{S}_{(k)}^\top/\|\cm{S}_{(k)}\|$ and, hence, we have $-\bm{c}_j^\top\cm{E}_{(k)}\bm{M}_{k,1}\bm{M}_{k,2}\bm{c}_m=w_{k,j,m}$, for $1\leq j\leq p_k$ and $1\leq m\leq r_k$, satisfying $\mathbb{E}[|w_{k,j,m}|^{1+\epsilon}]\lesssim M_{e,1+\epsilon,\delta}$. In addition, let $s_{k,m,l}$ be the $(j,m)$-th entry of $(\nabla\overline{\mathcal{L}}(\cm{A};z)_{(k)}\bm{V}_k - \nabla\overline{\mathcal{L}}(\cm{A}^*;z)_{(k)})\bm{M}_{k,1}\bm{M}_{k,2}$. Then, we have $\mathbb{E}[|s_{k,j,m}|^2]\lesssim\|\cm{A}-\cm{A}^*\|_\text{F}^2$.\\

\noindent\textit{Step 2.} (Bound $\|T_{k,j}\|_\text{F}$)

\noindent We first bound the bias, for any $1\leq j\leq p_k$ and $1\leq m\leq r_k$,
\begin{equation}
    \begin{split}
        & |\mathbb{E}[w_{k,j,m}]-\mathbb{E}[\text{T}(w_{k,j,m},\tau_k)]| \leq \mathbb{E}[|w_{k,j,m}|\cdot1\{|w_{k,j,m}|\geq\tau_k\}]\\
        \leq & \mathbb{E}\left[|w_{k,j,m}|^{1+\epsilon}\right]^{1/(1+\epsilon)}\cdot\mathbb{P}(|w_{k,j,m}|\geq\tau_k)^{\epsilon/(1+\epsilon)}\\
        \leq & \mathbb{E}\left[|w_{k,j,m}|^{1+\epsilon}\right]^{1/(1+\epsilon)}\cdot\left(\frac{\mathbb{E}[|w_{k,j,m}|^{1+\epsilon}]}{\tau_k^{1+\epsilon}}\right)^{\epsilon/(1+\epsilon)}\\
        \lesssim & M_{e,1+\epsilon,\delta}\cdot\tau_k^{-\epsilon} \asymp M_{e,1+\epsilon,\delta}^{1/(1+\epsilon)}\cdot\kappa^{-2}
    \end{split}
\end{equation}
with the truncation parameter $\tau_k\asymp M_{e,1+\epsilon,\delta}^{1/(1+\epsilon)}\cdot\kappa^{2/\epsilon}$. Hence,
\begin{equation}
    \|T_{k,1}\|_\text{F}\lesssim\bar{\sigma}^{d/(d+1)}\sqrt{p_kr_k}M_{e,1+\epsilon,\delta}^{1/(1+\epsilon)}\kappa^{-2}.
\end{equation}

Note that
\begin{equation}
    \mathbb{E}\left[\text{T}(w_{k,j,m},\tau_k)^2\right]\leq \tau_k^{1-\epsilon}\cdot\mathbb{E}[|w_{k,j,m}|^{1+\epsilon}]\asymp\tau_k^{1-\epsilon}\cdot M_{e,1+\epsilon,\delta}.
\end{equation}
Thus, we have $\text{var}(\text{T}(w_{k,j,m},\tau_k))\leq\mathbb{E}[\text{T}(w_{k,j,m},\tau_k)^2]\leq \tau_k^{1-\epsilon}M_{e,1+\epsilon,\delta}$. Also, for any $s=3,4,\dots$, the higher-order moments satisfy that
\begin{equation}
    \mathbb{E}\left[(\text{T}(w_{k,j,m},\tau_k)-\mathbb{E}[\text{T}(w_{k,j,m},\tau_k)])^s\right]
    \leq (2\tau_k)^{s-2}\mathbb{E}\left[(\text{T}(w_{k,j,m},\tau_k)-\mathbb{E}[\text{T}(w_{k,j,m},\tau_k)])^2\right].
\end{equation}

By Bernstein's inequality, for any $0<t\leq\tau_k^{-\epsilon}M_{e,1+\epsilon,\delta}$,
\begin{equation}
    \mathbb{P}(|\text{T}(w_{k,j,m},\tau_k)-\mathbb{E}[\text{T}(w_{k,j,m},\tau_k)]|\geq t)\leq 2\exp\left(-\frac{t^2}{4\tau_k^{1-\epsilon}M_{e,1+\epsilon,\delta}}\right).
\end{equation}
Letting $t=CM_{e,1+\epsilon,\delta}^{1/(1+\epsilon)}\kappa^{-2}$, since $\underline{\sigma}/M_{e,1+\epsilon,\delta}^{1/(1+\epsilon)}\gtrsim\sqrt{\bar{p}}$ we have
\begin{equation}
    \mathbb{P}(|\text{T}(w_{k,j,m},\tau_k)-\mathbb{E}[\text{T}(w_{k,j,m},\tau_k)]|\geq CM_{e,1+\epsilon,\delta}^{1/(1+\epsilon)}\kappa^{-2})\leq C\exp\left(-C\log(\bar{p})\right)
\end{equation}
and
\begin{equation}
    \mathbb{P}\left(\max_{\substack{1\leq j\leq p_k\\1\leq m\leq r_k}}|T(w_{k,j,m},\tau_k)-\mathbb{E}[T(w_{k,j,m},\tau_k)]|\geq CM_{e,1+\epsilon,\delta}^{1/(1+\epsilon)}\kappa^{-2}\right)\leq C\exp\left(-C\log(\bar{p})\right).
\end{equation}

Hence, with probabilty at least $1-C\exp(-C\log(\bar{p}))$,
\begin{equation}
    \|T_{k,2}\|_\text{F}\lesssim \bar{\sigma}^{d/(d+1)}\sqrt{p_kr_k}M_{e,1+\epsilon,\delta}^{1/(1+\epsilon)}\kappa^{-2}.
\end{equation}

By definition, the $(j,m)$-th entry of $T_{k,3}$ can be bounded as
\begin{equation}
    |T_{k,3}| \asymp \bar{\sigma}^{d/(d+1)}\cdot\left|\mathbb{E}[s_{k,j,m}]-\mathbb{E}[\text{T}(s_{k,j,m}+w_{k,j,m},\tau_k) - \text{T}(w_{k,j,m},\tau_k)]\right|.
\end{equation}
Then, by Markov's inequality,
\begin{equation}
    \begin{split}
        & \left|\mathbb{E}[s_{k,j,m}] - \mathbb{E}\left[\text{T}(s_{k,j,m} + w_{k,j,m}^{(i)},\tau_k) - \text{T}(w_{k,j,m},\tau_k)\right]\right| \\
        & \leq \left|\mathbb{E}[s_{k,j,m}^{(i)}\cdot 1\{|(|w_{k,j,m}|\geq \tau_k)\cup(|s_{k,j,m} + w_{k,j,m}|\geq \tau_k)\}]\right| \\
        & \leq \left|\mathbb{E}[s_{k,j,m}\cdot 1\{|(|w_{k,j,m}|\geq \tau_k)\cup(|s_{k,j,m}|\geq \tau_k/2)\cup(|w_{k,j,m}|\geq \tau_k/2)\}]\right| \\
        & \leq \left|\mathbb{E}[s_{k,j,m}\cdot 1\{|s_{k,j,m}|\geq \tau_k/2\}]\right| + \left|\mathbb{E}[s_{k,j,m}\cdot 1\{|v_{k,l,m}|\geq \tau_k/2\}]\right|\\
        & \lesssim \mathbb{E}\left[|s_{k,j,m}|^{2}\right]^{\frac{1}{2}}\cdot\left(\frac{\mathbb{E}[|s_{k,l,m}|^{2}]}{\tau_k^{2}}\right)^{\frac{1}{2}} + \mathbb{E}\left[|s_{k,j,m}|^{2}\right]^{\frac{1}{2}}\cdot\left(\frac{\mathbb{E}[|v_{k,l,m}|^{2}]}{\tau_k^{2}}\right)^{\frac{1}{2}}\\
        & \lesssim \mathbb{E}\left[|s_{k,m,l}|^{2}\right]\cdot\tau_k^{-1} + \mathbb{E}\left[|s_{k,m,l}|^{2}\right]^{\frac{1}{2}}\cdot\mathbb{E}\left[|v_{k,m,l}|^{2}\right]\cdot\tau_k^{-1}.
    \end{split}
\end{equation}
In addition, by Bernstein's inequality, we have
\begin{equation}
    \|T_{k,3}\|_\text{F}^2 + \|T_{k,4}\|_\text{F}^2 \lesssim \bar{\sigma}^{2d/(d+1)}\|\cm{A}-\cm{A}^*\|_\text{F}^2.
\end{equation}
Furthermore, the bounds for the core tensor can be developed similarly.\\

\noindent\textit{Step 3.} (Verify the conditions and conclude the proof)

\noindent First, as $\nabla\overline{\mathcal{L}}(\cm{A};\cm{Y})=\cm{A}-\cm{Y}$, the RCG condition holds with $\alpha=2$ and $\beta=2$:
\begin{equation}
    \langle\mathbb{E}[\nabla\overline{\mathcal{L}}(\cm{A};\cm{Y})],\cm{A}-\cm{A}^*\rangle = \|\cm{A}-\cm{A}^*\|_\text{F}^2.
\end{equation}

Plugging the results above in Theorem \ref{thm:1}, by the same finite covering arguments in the proof of Theorems \ref{thm:linearregression} and \ref{thm:logistic}, we can obtain the results and complete the proof.

\newpage

\section{Initialization and Implementation}
\label{append:init}

\subsection{Heavy-tailed Tensor Linear Regression}

Consider the tensor linear model:
\[
\cm{Y}_i = \langle \cm{A}^*, \cm{X}_i \rangle + \cm{E}_i,
\]
where:
\begin{itemize}
  \item \(\cm{A}^* \in \mathbb{R}^{p_1 \times \cdots \times p_d}\) is the unknown tensor of interest,
  \item \(\cm{X}_i \in \mathbb{R}^{p_1 \times \cdots \times p_{d_0}}\) is the covariate, potentially following a heavy-tailed distribution,
  \item \(\cm{Y}_i \in \mathbb{R}^{p_{d_0+1} \times \cdots \times p_{d}}\) is the response, and
  \item \(\cm{E}_i \in \mathbb{R}^{p_{d_0+1} \times \cdots \times p_{d}}\) is the heavy-tailed noise term.
\end{itemize}
According to Theorem \ref{thm:1}, our goal is to find the initial value $\bm{F}^{(0)}=(\cm{G}^{(0)},\bm{U}_1^{(0)},\dots,\bm{U}_d^{(0)})$ satisfying $\textup{Err}(\bm{F}^{(0)})\leq C\alpha_x\beta_x^{-1}\bar{\sigma}^{2/(d+1)}\kappa^{-2}$. To achieve that, we first transform the tensor linear model to
\begin{equation}
  \bm{y}_i = \bm{A}^*\bm{x}_i + \bm{e}_i,
\end{equation}
where $\bm{y}_i=\text{vec}(\cm{Y}_i)\in\mathbb{R}^{p_y}$, $\bm{A}^*=\text{mat}(\cm{A}^*)\in\mathbb{R}^{p_y\times p_x}$, $\bm{x}_i=\text{vec}(\cm{X}_i)\in\mathbb{R}^{p_x}$, and $\bm{e}_i=\text{vec}(\cm{E}_i)\in\mathbb{R}^{p_y}$.

Specifically, we have the following initialization procedure: 
\begin{itemize}
  \item[1.] We apply the vector truncation to $\bm{x}_i$
  \begin{equation}
    \widetilde{\bm{x}}_i(\omega) = \frac{\min(\|\bm{x}_i\|_2,\omega)}{\|\bm{x}_i\|_2}\bm{x}_i.
  \end{equation}
  \item[2.] We use the nuclear norm regularized Huber regression \citep{tan2023sparse}
\begin{equation}
  \widetilde{\bm{A}}(\omega,\delta,\lambda_{\text{nuc}}) = \arg\min \frac{1}{n}\sum_{i=1}^n\rho_\delta(\bm{y}_i-\bm{A}\widetilde{\bm{x}}_i(\omega)) + \lambda_{\text{nuc}}\|\bm{A}\|_{\text{nuc}}.
\end{equation}
  \item[3.] We apply higher-order orthogonal iteration to $\widetilde{\bm{A}}(\omega,\delta,\lambda)$ to obtain $(\widetilde{\cm{G}},\widetilde{\bm{U}}_1,\dots,\widetilde{\bm{U}}_d)$. Finally, we set $\bm{F}^{(0)}=(b^{-d}\widetilde{\cm{G}},b\bm{U}_1,\dots,b\bm{U}_1)$, where $b$ is the regularization parameter in Algorithm \ref{alg:1}.
\end{itemize}

Next, we state the theoretical properties of the proposed initial estimator. 
\begin{proposition}\label{prop:init1}
  Suppose the tuning parameters satisfy $\omega\asymp(p_x^\lambda M_{x,2+2\lambda} n)^{1/(2+2\lambda)}$, $\delta\asymp(nM_{e,1+\epsilon}/\log(p_xp_y))^{1/(1+\epsilon)}$, $\lambda_{\textup{nuc}}\asymp M_{e,1+\epsilon}^{1/(1+\epsilon)}(\log(p_xp_y)/n)^{\epsilon/(1+\epsilon)}$. If the sample size $n$ satisfies
  \begin{equation}
    n\gtrsim (p_x+p_y)^{(1+\epsilon)/(2\epsilon)}\log(p_xp_y)\alpha_x^{(2+2\epsilon)/\epsilon}\beta_x^{-(1+\epsilon)/\epsilon} + \alpha_x^{(1+\lambda)/\lambda}p_xM_{x,2+2\lambda}^{1/\lambda},
  \end{equation}
  we have $\textup{Err}(\bm{F}^{(0)})\lesssim \alpha_x\beta_x^{-1}\bar{\sigma}^{2/(d+1)}\kappa^{-2}$.
\end{proposition}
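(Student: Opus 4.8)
\textbf{Proof proposal for Proposition \ref{prop:init1}.}

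The plan is to bound $\textup{Err}(\bm{F}^{(0)})$ by first controlling the Frobenius-norm error of the matrix estimator $\widetilde{\bm{A}}(\omega,\delta,\lambda_{\textup{nuc}})$ obtained from the regularized Huber regression, and then translating that matrix-level bound into the factor-wise error $\textup{Err}(\bm{F}^{(0)})$ via the stability of the higher-order orthogonal iteration (HOOI). Since $\bm{A}^* = \textup{mat}(\cm{A}^*)$ is a matrix of rank at most $\min_k r_k$ (in fact its rank is controlled by the Tucker ranks through the matricization), the problem reduces to a robust low-rank matrix regression under heavy-tailed covariates and noise, which is exactly the setting of \citet{tan2023sparse} after the data-truncation step.

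First I would analyze the effect of the vector truncation $\widetilde{\bm{x}}_i(\omega)$. With $\omega\asymp(p_x^\lambda M_{x,2+2\lambda} n)^{1/(2+2\lambda)}$, the truncation introduces a bias of order controlled by the $(2+2\lambda)$-th moment $M_{x,2+2\lambda}$ and the tail probability $\mathbb{P}(\|\bm{x}_i\|_2 > \omega)$; a Markov-type argument (as in Steps 2--5 of the proof of Theorem \ref{thm:linearregression}) shows this bias is negligible compared to the target rate provided $n\gtrsim \alpha_x^{(1+\lambda)/\lambda}p_xM_{x,2+2\lambda}^{1/\lambda}$, which is precisely the second term in the sample-size condition. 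The truncated covariates $\widetilde{\bm{x}}_i(\omega)$ then have bounded fourth-moment-like quantities, so the restricted eigenvalue / restricted strong convexity condition for the Huber loss holds with constants comparable to $\alpha_x$ and $\beta_x$ on the low-rank cone, using $\lambda_{\min}(\bm{\Sigma}_x)\ge\alpha_x$ from Assumption \ref{asmp:1}(a).

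Second, I would invoke the deterministic error bound for nuclear-norm-penalized Huber regression: on the event where the empirical process term is dominated by $\lambda_{\textup{nuc}}$, one has
\begin{equation}
  \|\widetilde{\bm{A}}(\omega,\delta,\lambda_{\textup{nuc}}) - \bm{A}^*\|_\textup{F} \lesssim \alpha_x^{-1}\sqrt{\textup{rank}(\bm{A}^*)}\,\lambda_{\textup{nuc}} \asymp \alpha_x^{-1}\sqrt{r}\,M_{e,1+\epsilon}^{1/(1+\epsilon)}\left(\frac{\log(p_xp_y)}{n}\right)^{\epsilon/(1+\epsilon)}
\end{equation}
for $r=\min_k r_k$, together with the additional $(p_x+p_y)^{1/2}$-type factor from the operator-norm concentration of the truncated-data score, which is what forces the first term $(p_x+p_y)^{(1+\epsilon)/(2\epsilon)}\log(p_xp_y)\alpha_x^{(2+2\epsilon)/\epsilon}\beta_x^{-(1+\epsilon)/\epsilon}$ in the sample-size requirement (so that the rate is smaller than $\alpha_x\beta_x^{-1}\bar\sigma^{2/(d+1)}\kappa^{-2}$). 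Choosing $\delta\asymp(nM_{e,1+\epsilon}/\log(p_xp_y))^{1/(1+\epsilon)}$ as the Huber robustification parameter balances the truncation bias of the residuals against the stochastic deviation, exactly as in the adaptive-Huber calculus of \citet{sun2020adaptive}.

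Third, I would pass from $\|\widetilde{\bm{A}}-\bm{A}^*\|_\textup{F}$ to $\textup{Err}(\bm{F}^{(0)})$. Reshaping $\widetilde{\bm{A}}$ back into a tensor $\widetilde{\cm{A}}$ preserves the Frobenius distance, so $\|\widetilde{\cm{A}}-\cm{A}^*\|_\textup{F}$ obeys the same bound. Applying HOOI to $\widetilde{\cm{A}}$ and using the standard perturbation guarantee (e.g.\ Wedin-type bounds as used in \citet{han2022optimal}), the recovered factors satisfy $\sum_k\|\sin\Theta(\widetilde{\bm U}_k,\bm U_k^*)\|_\textup{F}^2 \lesssim \underline\sigma^{-2}\|\widetilde{\cm{A}}-\cm{A}^*\|_\textup{F}^2$, and after rescaling by $b\asymp\bar\sigma^{1/(d+1)}$ one gets $\textup{Err}(\bm{F}^{(0)}) \lesssim \bar\sigma^{-2d/(d+1)}\|\widetilde{\cm{A}}-\cm{A}^*\|_\textup{F}^2 \cdot \bar\sigma^{2(d-1)/(d+1)}$, and combining with $\|\widetilde{\cm{A}}-\cm{A}^*\|_\textup{F} \lesssim \bar\sigma^{1/(d+1)}\cdot(\text{small})$ under the stated $n$, the desired bound $\textup{Err}(\bm{F}^{(0)})\lesssim\alpha_x\beta_x^{-1}\bar\sigma^{2/(d+1)}\kappa^{-2}$ follows, invoking Lemma \ref{lemma:1} for the equivalence between the tensor error and the factor error once the scaling constraint $\|\bm{U}_k^{(0)}\|\le(1+c_0)b$ is verified.

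The main obstacle I anticipate is the operator-norm control of the heavy-tailed, truncated empirical score $\frac1n\sum_i \psi_\delta(\cdot)\widetilde{\bm x}_i(\omega)^\top$ in order to validate the choice of $\lambda_{\textup{nuc}}$: unlike the entrywise-truncation analysis in the body of the paper (which only needs Bernstein after truncation and a union bound over $O(p)$ coordinates), here one needs a matrix Bernstein or a covering-number argument over the $(p_x+p_y)$-dimensional spectral ball, and the heavy-tailedness of $\bm x_i$ means the truncation radius $\omega$ and the sample size must be carefully coupled so that the bias from truncation does not dominate and the variance proxy stays of the right order — this is exactly where the interplay between $\lambda$ (covariate moment) and $\epsilon$ (noise moment) enters the two separate terms of the sample-size condition, and getting both constants right simultaneously is the delicate part.
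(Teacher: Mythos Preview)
Your proposal is correct and follows essentially the same three-step route as the paper's proof: (i) control the truncated-covariate sample covariance so that the restricted eigenvalue holds (the paper cites Proposition~3.2 of \citet{wang2023rate} to get $\|\frac{1}{n}\sum_i\widetilde{\bm{x}}_i\widetilde{\bm{x}}_i^\top-\bm{\Sigma}_x\|\lesssim(p_xM_{x,2+2\lambda}^{1/\lambda}/n)^{\lambda/(1+\lambda)}$, yielding the second sample-size term), (ii) invoke Theorem~1 of \citet{tan2023sparse} to obtain $\|\widetilde{\bm{A}}-\bm{A}^*\|_\text{F}\lesssim\alpha_x^{-1}M_{e,1+\epsilon}^{1/(1+\epsilon)}\sqrt{r(p_x+p_y)}(\log(p_xp_y)/n)^{\epsilon/(1+\epsilon)}$, and (iii) convert to $\textup{Err}(\bm{F}^{(0)})$ via HOOI perturbation bounds (the paper cites \citet{ZX18}). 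The only minor clarification is that in step~(ii) the $\sqrt{p_x+p_y}$ factor comes directly out of the Tan--Sun--Witten rate rather than as a separate operator-norm concentration on top of $\sqrt{r}\,\lambda_{\textup{nuc}}$, but your final accounting and the role of each sample-size term are exactly right.
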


\begin{proof}
  By Proposition 3.2 in \citet{wang2023rate}, when $\omega\asymp(p_x^\lambda M_{x,2+2\lambda} n)^{1/(2+2\epsilon)}$, with high probability,
  \begin{equation}
    \left\|\frac{1}{n}\sum_{i=1}^n\widetilde{\bm{x}}_i(\omega)\widetilde{\bm{x}}_i(\omega)^\top - \mathbb{E}[\bm{x}_i\bm{x}_i^\top] \right\| \lesssim \left(\frac{p_xM_{x,2+2\lambda}^{1/\lambda}}{n}\right)^{\lambda/(1+\lambda)}.
  \end{equation}
  Therefore, when $n\gtrsim \alpha_x^{(1+\lambda)/\lambda}p_xM_{x,2+2\lambda}^{1/\lambda}$, the above error is bounded by $C\alpha_x$ for some sufficiently small $C$. In other words, by this vector norm truncation with parameter $\omega$, the heavy-tailed covariates are well controlled.

  Next, by Theorem 1 in \citet{tan2023sparse}, with $\delta\asymp(nM_{e,1+\epsilon}/\log(p_xp_y))^{1/(1+\epsilon)}$ and $\lambda_{\text{nuc}}\asymp M_{e,1+\epsilon}^{1/(1+\epsilon)}(\log(p_xp_y)/n)^{\epsilon/(1+\epsilon)}$,
  \begin{equation}
    \|\widetilde{\bm{A}}(\omega,\delta,\lambda_{\text{nuc}})-\bm{A}^*\|_\text{F}\lesssim \alpha_x^{-1}M_{e,1+\epsilon}^{1/(1+\epsilon)}\sqrt{r(p_x+p_y)}(\log(p_xp_y)/n)^{\epsilon/(1+\epsilon)}.
  \end{equation}
  Finally, by perturbation bound in \citet{ZX18}, when the sample size satisfies $n\gtrsim (p_x+p_y)^{(1+\epsilon)/(2\epsilon)}\log(p_xp_y)\alpha_x^{(2+2\epsilon)/\epsilon}\beta_x^{-(1+\epsilon)/\epsilon}$, the intial bound is satisfied.

\end{proof}

\subsection{Heavy-tailed Tensor Logistic Regression}

Consider the tensor logistic regression with negative log-likelihood loss function
\begin{equation}
  \overline{\mathcal{L}}(\cm{A};z_i) = \log(1+\exp(\langle\cm{X}_i,\cm{A}\rangle)) - y_i\langle\cm{X}_i,\cm{A}\rangle.
\end{equation}
According to Theorem \ref{thm:1}, our goal is to find $\bm{F}^{(0)}$ such that $\text{Err}(\bm{F}^{(0)})\leq C\alpha_x\beta_x^{-1}\bar{\sigma}^{2/(d+1)}\kappa^{-2}$. To initialize it, we transform $\cm{X}_i$ and $\cm{A}$ to matrices $\bm{X}_i$ and $\bm{A}$. Note that such transformation is not unique, but we try to make the difference between the numbers of rows and columns to be as small as possible. Denote the dimension of $\bm{A}$ as $q_1\times q_2$.

Specifically, we have the following initialization procedure: 
\begin{itemize}
  \item[1.] Similarly to \citep{zhu2021taming}, we apply the vector truncation to $\bm{x}_i$
  \begin{equation}
    \widetilde{\bm{X}}_i(\omega) = \frac{\min(\|\bm{X}_i\|_\text{F},\omega)}{\|\bm{X}_i\|_\text{F}}\bm{x}_i.
  \end{equation}
  \item[2.] We use the nuclear norm regularized logistic regression
\begin{equation}
  \widetilde{\bm{A}}(\omega,\delta,\lambda_{\text{nuc}}) = \arg\min \frac{1}{n}\sum_{i=1}^n[\log(1+\exp(\langle\widetilde{\bm{X}}_i(\omega),\bm{A}\rangle)) - y_i\langle\widetilde{\bm{X}}_i(\omega),\bm{A}\rangle] + \lambda_{\text{nuc}}\|\bm{A}\|_{\text{nuc}}.
\end{equation}
  \item[3.] We apply higher-order orthogonal iteration to $\widetilde{\bm{A}}(\omega,\delta,\lambda_{\text{nuc}})$ to obtain $(\widetilde{\cm{G}},\widetilde{\bm{U}}_1,\dots,\widetilde{\bm{U}}_d)$. Finally, we set $\bm{F}^{(0)}=(b^{-d}\widetilde{\cm{G}},b\bm{U}_1,\dots,b\bm{U}_1)$, where $b$ is the regularization parameter in Algorithm \ref{alg:1}.
\end{itemize}

\begin{proposition}\label{prop:init2}
  Suppose the tuning parameters satisfy $\omega\asymp(p_x^\lambda M_{x,2+2\lambda} n)^{1/(2+2\lambda)}$, $\lambda_{\textup{nuc}}\asymp (\log(q_1q_2)/n)^{1/2}$. If the sample size $n$ satisfies
  \begin{equation}
    n\gtrsim (q_1+q_2)\log(q_1q_2)\alpha_x^4\beta_x^{-2} + \alpha_x^{(1+\lambda)/\lambda}(q_1+q_2)M_{x,2+2\lambda}^{1/\lambda},
  \end{equation}
  we have $\textup{Err}(\bm{F}^{(0)})\lesssim \alpha_x\beta_x^{-1}\bar{\sigma}^{2/(d+1)}\kappa^{-2}$.
\end{proposition}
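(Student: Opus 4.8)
\textbf{Proof plan for Proposition \ref{prop:init2}.}

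The plan is to mirror the three-stage argument used for Proposition \ref{prop:init1}, adapting each stage to the logistic setting. The target is to control $\textup{Err}(\bm{F}^{(0)})$ by $C\alpha_x\beta_x^{-1}\bar{\sigma}^{2/(d+1)}\kappa^{-2}$, and by the HOOI perturbation bound it suffices to show $\|\widetilde{\bm{A}}(\omega,\delta,\lambda_{\textup{nuc}})-\bm{A}^*\|_\textup{F}$ is small enough, after which the matricization-to-tensor conversion and the rescaling by $b\asymp\bar{\sigma}^{1/(d+1)}$ follow exactly as in the linear case. So the work is entirely in bounding the matrix estimation error of the nuclear-norm regularized logistic estimator on truncated covariates.

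The first step is the covariate truncation analysis. Under Assumption \ref{asmp:logistic}, $\cm{X}_i$ has a $(2+2\lambda)$-th global moment $M_{x,2+2\lambda}$; vectorizing (or matricizing) preserves this, so I would invoke Proposition 3.2 of \citet{wang2023rate} to get, with $\omega\asymp(p_x^\lambda M_{x,2+2\lambda}n)^{1/(2+2\lambda)}$, that the truncated sample second-moment matrix $n^{-1}\sum_i\widetilde{\bm{X}}_i(\omega)$ (in vectorized form) deviates from $\mathbb{E}[\bm{x}_i\bm{x}_i^\top]$ by at most $C(p_xM_{x,2+2\lambda}^{1/\lambda}/n)^{\lambda/(1+\lambda)}$ in operator norm. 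When $n\gtrsim\alpha_x^{(1+\lambda)/\lambda}(q_1+q_2)M_{x,2+2\lambda}^{1/\lambda}$ (noting $p_x\asymp q_1q_2\lesssim(q_1+q_2)^2$, or more carefully that the relevant restricted eigenvalue only involves a low-rank cone so the effective dimension is $q_1+q_2$), this deviation is below $c\alpha_x$, which transfers the population restricted strong convexity of the logistic loss — whose Hessian is $n^{-1}\sum_i q_i(1-q_i)\widetilde{\bm{X}}_i\widetilde{\bm{X}}_i^\top$ with $q_i(1-q_i)$ bounded below on a neighborhood of $\bm{A}^*$ — to the truncated empirical loss over the low-rank cone.

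The second step is to apply a nuclear-norm regularized $M$-estimation bound for the truncated logistic problem. Here the key difference from the linear case is that the \emph{response} $y_i$ is binary and hence bounded, so the score at the truth, $n^{-1}\sum_i(q_i(\bm{A}^*)-y_i)\widetilde{\bm{X}}_i(\omega)$, is a sum of mean-zero (up to the negligible truncation bias) terms each with operator norm controlled by $\omega$; a matrix Bernstein argument gives $\|n^{-1}\sum_i(q_i-y_i)\widetilde{\bm{X}}_i\|_{\textup{op}}\lesssim\sqrt{(q_1+q_2)\log(q_1q_2)/n}$ after truncation-bias control, which dictates the choice $\lambda_{\textup{nuc}}\asymp\sqrt{\log(q_1q_2)/n}$. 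Combining with the restricted strong convexity from step one yields $\|\widetilde{\bm{A}}-\bm{A}^*\|_\textup{F}\lesssim\alpha_x^{-1}\sqrt{r(q_1+q_2)\log(q_1q_2)/n}$ by the standard argument of \citet{loh2017statistical}-type or the reasoning behind Theorem 1 of \citet{tan2023sparse} specialized to the logistic loss. The final step is the HOOI perturbation: requiring this error to be $\lesssim\bar{\sigma}^{1/(d+1)}\sqrt{\alpha_x\beta_x^{-1}\kappa^{-2}}$ for the initialization condition of Theorem \ref{thm:1}, using the bound of \citet{ZX18}, produces the remaining sample-size requirement $n\gtrsim(q_1+q_2)\log(q_1q_2)\alpha_x^4\beta_x^{-2}$; then rescaling $(b^{-d}\widetilde{\cm{G}},b\widetilde{\bm{U}}_1,\dots,b\widetilde{\bm{U}}_d)$ and reading off $\textup{Err}(\bm{F}^{(0)})$ as in Lemma \ref{lemma:1} closes the proof.

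The main obstacle I anticipate is step two, specifically establishing the restricted strong convexity of the \emph{truncated empirical} logistic loss uniformly over the low-rank descent cone: unlike least squares, the logistic Hessian depends on $\bm{A}$ through $q_i(1-q_i)$, so one must simultaneously (i) localize to a neighborhood of $\bm{A}^*$ where $q_i(1-q_i)\gtrsim c$ with high probability — which itself needs the boundedness of $\langle\widetilde{\bm{X}}_i(\omega),\bm{A}\rangle$ from the truncation — and (ii) control the uniform deviation of the weighted truncated Gram matrix, which is where the truncation radius $\omega$ and the $(2+2\lambda)$-th moment enter. Everything else (truncation bias, matrix Bernstein, HOOI perturbation) is routine given the cited results.
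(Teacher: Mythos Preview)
Your proposal is correct and follows precisely the approach the paper intends: the paper's own ``proof'' of Proposition~\ref{prop:init2} consists of the single sentence ``The proof is similar to that of Proposition~\ref{prop:init1}, and hence is omitted for brevity,'' and your three-stage plan (truncation via Proposition~3.2 of \citet{wang2023rate}, a nuclear-norm regularized $M$-estimation bound adapted to the bounded-response logistic score, then the HOOI perturbation bound of \citet{ZX18}) is exactly the natural logistic analogue of that argument. Your identification of the restricted-strong-convexity localization as the only nontrivial adaptation is accurate and goes beyond what the paper spells out.
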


The proof is similar to that of Proposition \ref{prop:init1}, and hence is omitted for brevity.

\subsection{Heavy-tailed Tensor PCA}

For the tensor PCA
\begin{equation}
  \cm{Y} = \cm{A}^* + \cm{E},
\end{equation}
according to Theorem \ref{thm:1}, our goal is to find $\bm{F}^{(0)}$ such that $\text{Err}(\bm{F}^{(0)})\leq C\alpha_x\beta_x^{-1}\bar{\sigma}^{2/(d+1)}\kappa^{-2}$. To initialize it, we consider the pseudo-Huber tensor decomposition. Please refer to \citet{shen2023quantile} for detailed implementation.

\newpage

\section{Additional Simulation Experiments}\label{append:numerical}

\subsection{Tensor Logistic Regression}

We consider the tensor logistic regression model.
\begin{equation}\label{eq:logistic_reg}
  \begin{split}  
    \text{Model III}:\quad & \mathbb{P}(y_i=1|\cm{X}_i)=\frac{\exp(\langle\cm{X}_i,\cm{A}^*\rangle)}{1+\exp(\langle\cm{X}_i,\cm{A}^*\rangle)},\\ 
    & \mathbb{P}(y_i=0|\cm{X}_i)=\frac{1}{1+\exp(\langle\cm{X}_i,\cm{A}^*\rangle)},
  \end{split}
\end{equation}
where $\cm{X}_i\in\mathbb{R}^{10\times10\times10}$ and $\cm{A}^*=\sqrt{10}\cdot\bm{1}_{10}\circ\bm{1}_{10}\circ\bm{1}_{10}=\cm{S}^*\times_{j=1}^3\bm{U}_j^*$.

We consider three simulation experiments for the tensor logistic regression model. The first experiment is designed to verify how the tail behavior of the covariates, quantified by $\lambda$, is related to the computational performance of the proposed method. The second experiment aims for verifying the local moment effect. The third experiment is designed to compare the performance of vanilla gradient descent and robust gradient descent methods.

\subsubsection{Experiment 5: Dependence on Tail Behavior of Covariates}

In this model, we consider that all entries in $\cm{X}_i$ (or $\bm{X}_i$) are independent and follow the Student's $t_{2+2\lambda}$ distibution, and $y_i$ is generated by \eqref{eq:logistic_reg}. We vary $\lambda\in\{0.1,0.4,0.7,1.0,1.3,1.6\}$ and set the sample size as $n=10\times 2^{m}$, where $m\in\{1,2,3,4,5\}$. For the generated data, we apply the proposed RGD method with initial values set to the ground truth, $a=b=1$, step size $\eta=10^{-3}$, truncation threshold $\tau=\sqrt{n/\log(\bar{p})}$, and number of iterations $T=300$. 

In this experiment, we aim to verify whether the RGD iterates converge and to explore the relationship between the emprical convergence rate and $\lambda$. According to Theorem \ref{thm:logistic}, if the iterates converge, then $\|\cm{A}^{(t)}-\cm{A}^*\|_\text{F}^2$ lie in a region with small radius. To empirically assess convergence, we compute the sample standard deviation of $\|\cm{A}^{(t)}-\cm{A}^*\|_\text{F}^2$ over iterations $t=251,\dots,300$, and label the algorithm as having converged only if this quantity is smaller than $\bar{p}\log(\bar{p})/(100n)$. 

For each pair of $\lambda$ and $m$, we replicate the entire procedure  200 times and summarize the proportion of replications that achieve convergence versus $m$ in Figure \ref{fig:logistic_Exp1}. The results confirm that the smaller value of $\lambda$, corresponding to heavy-tailed covariates, leads to a greater sample size requirement for convergence. However, for $\lambda\geq1$, the convergence patterns across different $m$ are similar, which is consistent with the theoretical sample size requirement derived in Theorem \ref{thm:logistic}.

\begin{figure}[!htp]
    \begin{centering}
    \includegraphics[width=0.75\textwidth]{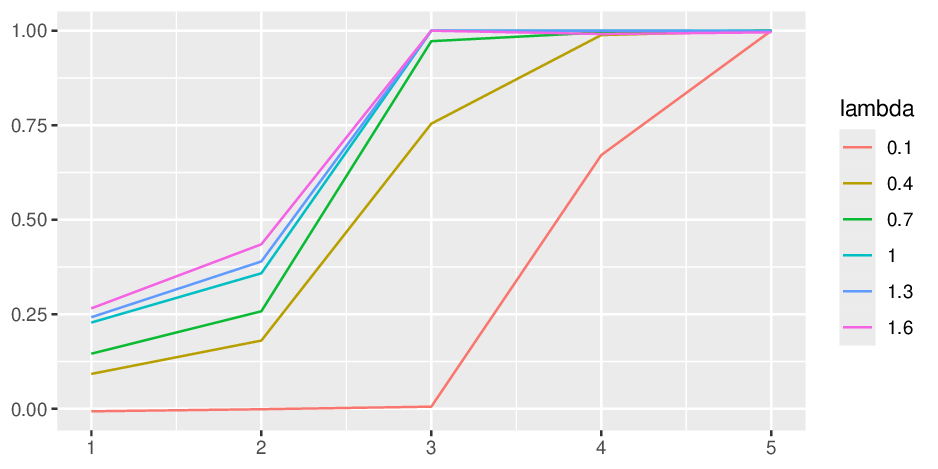}
    \vspace{-0.2cm}
    \caption{Average convergence proportion (y-axis) vs $m$ (x-axis) with varying $\lambda$ for Model III in Experiment 5}
    \label{fig:logistic_Exp1}
    \end{centering}
\end{figure}

\subsubsection{Experiment 6: Dependence on Local Moment Conditions}

Similarly to Experiment 3 in the main article, we consider the vectorized covariate $\text{vec}(\cm{X}_i)$ (or $\text{vec}(\bm{X}_i)$) follows a multivariate Gaussian distribution with mean zero and covariance $(\otimes_{j=1}^{d_0}\bm{\Sigma}_\delta)$, where $\bm{\Sigma}_\theta=0.5\bm{I}_{10}+0.5\bm{v}_\theta\bm{v}_\theta^\top$, where $\bm{v}_\theta=\sin(\theta)\bm{1}_{10} + \cos(\theta)\bm{w}$ and $\bm{w}=(1,-1,1,-1,\dots,1,-1)^\top\in\mathbb{R}^{10}$. In this setup, the entries in covariates are dependent, and the dependency is governed by the angle parameter $\theta\in[0,\pi/2]$. Specifically, when $\theta=\pi/2$, the vector $\bm{v}_\theta$ aligns with $\bm{1}_{10}$, which coincides with the true factor directions $\bm{U}_1^*=\bm{U}_2^*=\bm{U}_3^*$, resulting in a large local moment condition. When $\theta=0$, the correlation direction $\bm{v}_\theta=\bm{w}$ is orthogonal to the true factors, leading to a much smaller local moment. See more information in Appendix \ref{append:B.3}.

We consider $\theta=\theta_0\pi/8$ with $\theta_0\in\{0,1,2,3,4\}$ and set $n\in\{300,400,500,600,700\}$. For each pair of $\theta_0$ and $n$, we replicate the procedure 200 times and summarize the average of $\|\cm{A}^{(T)}-\cm{A}^*\|_\text{F}^2$ versus $n$ in Figure \ref{fig:logistic_Exp2}. As $\theta_0$ increases, the local moments increase, and the average estimation errors increase accordingly, further validating the importance of leveraging local moment conditions as emphasized in our theoretical analysis.

\begin{figure}[!htp]
    \begin{centering}
    \includegraphics[width=0.75\textwidth]{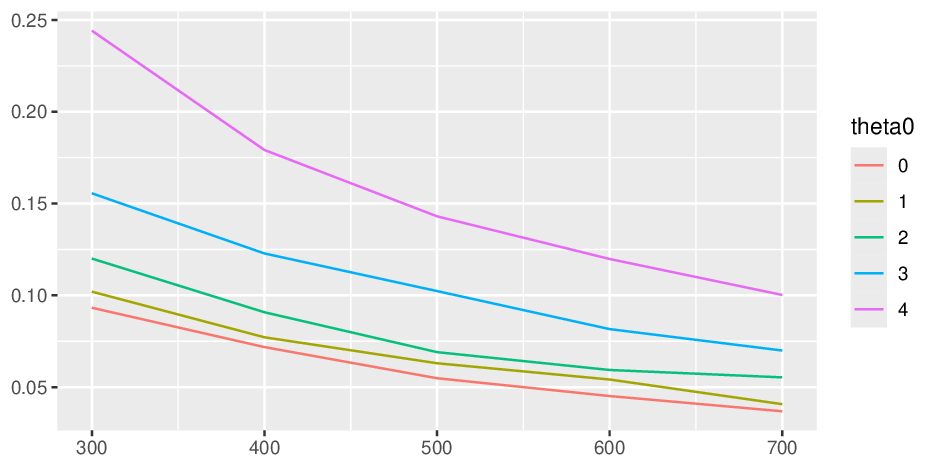}
    \vspace{-0.2cm}
    \caption{Average $\|\cm{\widehat{A}}-\cm{A}^*\|_\text{F}$ (y-axis) vs $n$ (x-axis) with varying $\theta_0$ for Model III in Experiment 6}
    \label{fig:logistic_Exp2}
    \end{centering}
\end{figure}

\subsubsection{Experiment 7: Method Comparison}

For tensor logistic regression, two distributional cases are adopted for covariates: (1) $N(0,1)$ and (2) $t_{2.1}$. All entries in covariates are independent, and we set $n=500$. We apply the proposed RGD algorithm, as well as the vanilla gradient descent (VGD) as the competitor, to the data generated. For both methods, intial values are obtained in a data-driven manner as suggested in Appendix \ref{append:init} of the supplentary materials. We set $a=b=1$, $\eta=10^{-3}$, $T=300$, and the truncation parameter $\tau$ is selected via five-fold cross-validation. 

For each distributional setting, we replicate the procedure 200 times and summarize the average of $\log(\|\cm{A}^{(T)}-\cm{A}^*\|_\text{F}^2)$, as well as their upper and lower quartiles, for the above four cases in Figure \ref{fig:logistic_Exp3}. When the covariates are light-tailed, the performances of these two estimation methods are nearly identical. However, in the heavy-tailed case, the performance of VGD deteriorates significantly, with estimation errors much larger than those of RGD.

\begin{figure}[!htp]
    \begin{centering}
    \includegraphics[width=0.75\textwidth]{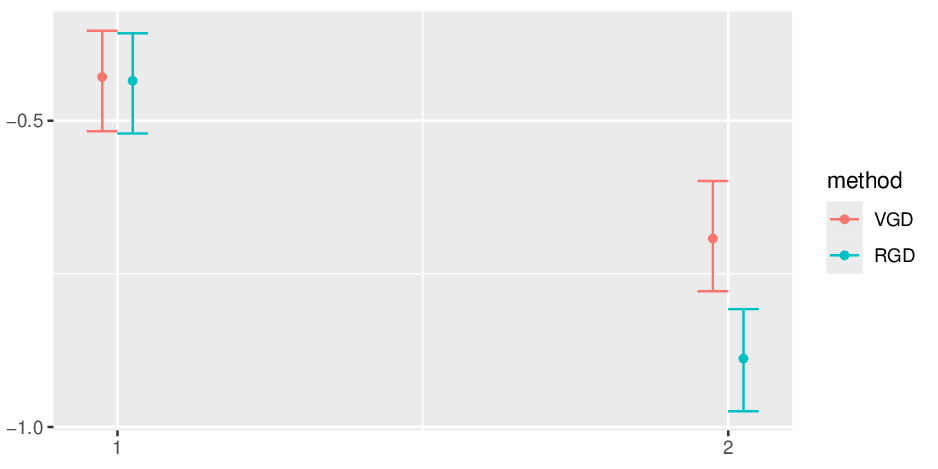}
    \vspace{-0.2cm}
    \caption{Average $\log(\|\cm{\widehat{A}}-\cm{A}^*\|_\text{F}^2)$ (y-axis) in different distributional cases (x-axis) by different methods for Model III in Experiment 6}
    \label{fig:logistic_Exp3}
    \end{centering}
\end{figure}

\subsection{Tensor PCA}

We consider the tensor PCA model.
\begin{equation}
  \text{Model IV}: \cm{Y} = \cm{A}^* + \cm{E},
\end{equation}
$\cm{A}^*=\sqrt{10}\cdot\bm{1}_{10}\circ\bm{1}_{10}\circ\bm{1}_{10}=\cm{S}^*\times_{j=1}^3\bm{U}_j^*$.

We consider three simulation experiments for the tensor PCA. The first experiment is designed to verify how the tail behavior of the noise, quantified by $\epsilon$, is related to the computational performance of the proposed method. The second experiment aims for verifying the local moment effect. The third experiment is designed to compare the performance of vanilla gradient descent, Huber estimator, and robust gradient descent methods.

\subsubsection{Experiment 8: Dependence on Tail Behavior of Noise}

We consider that the noise follow a $t_{1+\epsilon}$ distribution, vary $\epsilon\in\{0.1,0.4,0.7,1.0,1.3,1.6\}$, and set the sample size as $n=200\times 2^{m}$, where $m\in\{1,2,3,4,5\}$. For the generated data, we apply the proposed RGD method with the same tuning as in Experiment 2. According to Theorem \ref{thm:PCA}, after a sufficent number of iterations, $-\log(\|\cm{A}^{(T)}-\cm{A}^*\|_\text{F}^2) = C(\bar{p},\epsilon) + C[\epsilon_{\text{eff}}/(1+\epsilon_{\text{eff}})]m$, where $C(\bar{p},\epsilon)$ is a constant depending on $\bar{p}$ and $\epsilon$. 

Therefore, for each pair of $\epsilon$ and $m$, we replicate the procedure 200 times and summarize the average of negative log errors versus $m$ in Figure \ref{fig:Exp1_PCA}. For each value of $\epsilon$, the average negative log errors exhibit a linear relationship with $m$. Notably, the slope of this linear relationship shows a smooth transition: when $\epsilon\in(0,1)$, the slope increases as $\epsilon$ increases; when $\epsilon\geq1$, the slopes stablize. These empirical findings are similar to those in Experiment 2, and verify the smooth transition in statistical convergence rate as stated in Theorem \ref{thm:PCA}.

\begin{figure}[!htp]
    \begin{centering}
    \includegraphics[width=0.75\textwidth]{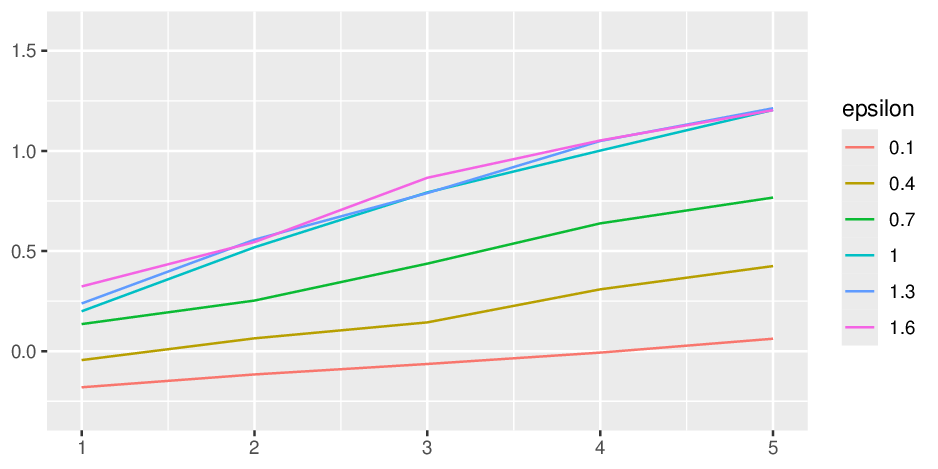}
    \vspace{-0.2cm}
    \caption{Average $-\log(\|\cm{\widehat{A}}-\cm{A}^*\|_\text{F}^2)$ (y-axis) vs $m$ (x-axis) with varying $\epsilon$ for Model IV in Experiment 8}
    \label{fig:Exp1_PCA}
    \end{centering}
\end{figure}

\subsubsection{Experiment 9: Dependence on Local Moment Conditions}

Similarly to Experiments 3 and 6, we consider the vectorized covariate $\text{vec}(\cm{X}_i)$ (or $\text{vec}(\bm{E}_i)$) follows a multivariate Gaussian distribution with mean zero and covariance $(\otimes_{j=1}^{d_0}\bm{\Sigma}_\delta)$, where $\bm{\Sigma}_\theta=0.5\bm{I}_{10}+0.5\bm{v}_\theta\bm{v}_\theta^\top$, where $\bm{v}_\theta=\sin(\theta)\bm{1}_{10} + \cos(\theta)\bm{w}$ and $\bm{w}=(1,-1,1,-1,\dots,1,-1)^\top\in\mathbb{R}^{10}$.

We consider $\theta=\theta_0\pi/8$ with $\theta_0\in\{0,1,2,3,4\}$ and set $n\in\{300,400,500,600,700\}$. For each pair of $\theta_0$ and $n$, we replicate the procedure 200 times and summarize the average of $\|\cm{A}^{(T)}-\cm{A}^*\|_\text{F}^2$ versus $n$ in Figure \ref{fig:PCA_Exp2}. As $\theta_0$ increases, the local moments increase, and the average estimation errors increase accordingly, further validating the importance of leveraging local moment conditions as emphasized in our theoretical analysis.

\begin{figure}[!htp]
    \begin{centering}
    \includegraphics[width=0.75\textwidth]{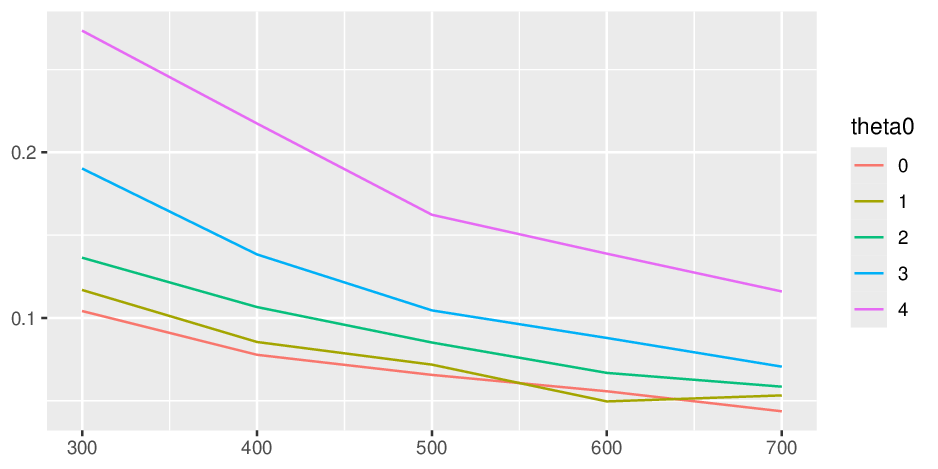}
    \vspace{-0.2cm}
    \caption{Average $\|\cm{\widehat{A}}-\cm{A}^*\|_\text{F}$ (y-axis) vs $n$ (x-axis) with varying $\theta_0$ for Model IV in Experiment 9}
    \label{fig:PCA_Exp2}
    \end{centering}
\end{figure}

\subsubsection{Experiment 10: Method Comparison}

For tensor PCA, two distributional cases are adopted for the noise: (1) $N(0,1)$ and (2) $t_{1.2}$ distribution. All entries in noise are independent, and we set $n=500$. We apply the proposed RGD algorithm, as well as the vanilla gradient descent (VGD) and adaptive Huber regression (HUB) as competitors, to the data generated. For all methods, intial values are obtained in a data-driven manner as suggested in Appendix \ref{append:init} of the supplentary materials. We set $a=b=1$, $\eta=10^{-3}$, $T=300$, and the truncation parameter $\tau$ is selected via five-fold cross-validation. 

For each model and distributional setting, we replicate the procedure 200 times and summarize the average of $\log(\|\cm{A}^{(T)}-\cm{A}^*\|_\text{F}^2)$, as well as their upper and lower quartiles, for the above four cases in Figure \ref{fig:PCA_Exp3}. When noise is light-tailed, the performances of three estimation methods are nearly identical. However, in heavy-tailed cases, the performance of VGD deteriorates significantly, with estimation errors much larger than those of the other two methods. Overall, the RGD method consistently yields the smallest estimation errors across all three methods. These numerical findings confirm the robustness and efficiency of the proposed method in handling heavy-tailed tensor PCA.

\begin{figure}[!htp]
    \begin{centering}
    \includegraphics[width=0.75\textwidth]{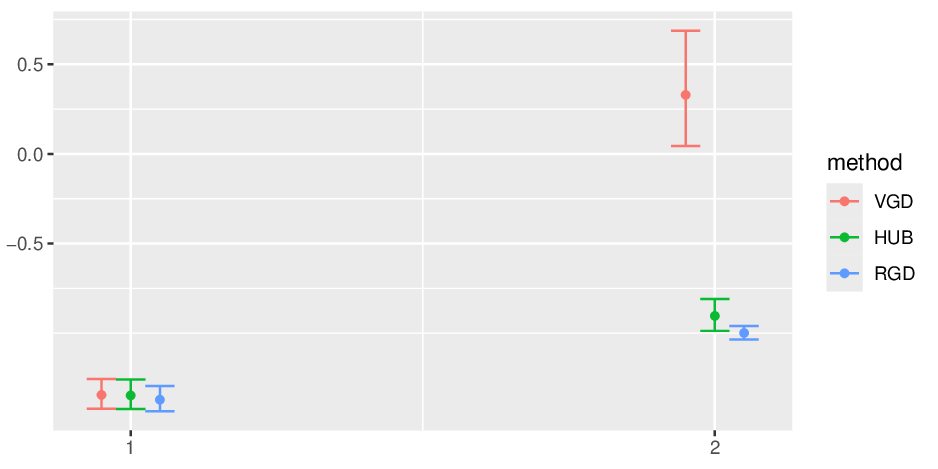}
    \vspace{-0.2cm}
    \caption{Average $\log(\|\cm{\widehat{A}}-\cm{A}^*\|_\text{F}^2)$ (y-axis) in different distributional cases (x-axis) by different methods for Model IV in Experiment 10}
    \label{fig:PCA_Exp3}
    \end{centering}
\end{figure}

\end{appendix}

\end{document}